\documentclass{amsart}

\usepackage{amssymb}
\usepackage{amsfonts}
\usepackage{amscd}
\usepackage{amsmath}
\usepackage{mathtools}
\usepackage[shortlabels]{enumitem}
\usepackage{epsfig}
\usepackage{graphicx}
\usepackage{overpic}
\usepackage{verbatim}
\usepackage{latexsym}
\usepackage{eucal}
\usepackage{bbm}
\usepackage{mathrsfs}
\usepackage{mathabx}
\usepackage{dsfont}

\usepackage{xcolor}

\usepackage{mathtools, array, hyperref}

\newtheorem{theorem}{Theorem}[section]
\theoremstyle{plain}
\newtheorem{corollary}[theorem]{Corollary}
\newtheorem{lemma}[theorem]{Lemma}
\newtheorem{proposition}[theorem]{Proposition}
\theoremstyle{remark}

\newtheorem{remark}[theorem]{Remark}

\numberwithin{equation}{section}

\newcommand{\re}{\operatorname{Re}}

\newcommand{\supp}{\operatorname{supp}}

\newcommand{\Schr}{Schr\"odinger }

\def\beq{\begin{equation}}
\def\eeq{\end{equation}}
\newcommand{\bea}{\begin{eqnarray}}
\newcommand{\eea}{\end{eqnarray}}
\newcommand{\beas}{\begin{eqnarray*}}
\newcommand{\eeas}{\end{eqnarray*}}

\newcommand{\bbR}{\mathbb{R}}
\newcommand{\R}{\mathbb{R}}

\newcommand{\bbC}{\mathbb{C}}
\newcommand{\bbP}{\mathbb{P}}
\newcommand{\bbE}{\mathbb{E}}

\newcommand{\bbZ}{\mathbb{Z}}
\newcommand{\Z}{\mathbb{Z}}

\newcommand{\bbN}{\mathbb{N}}
\newcommand{\N}{\mathbb{N}}

\newcommand{\calS}{\mathcal{S}}

\newcommand{\cinf}{C^\infty}
\newcommand{\del}{\partial}
\newcommand{\vep}{\varepsilon}

\DeclarePairedDelimiter\paren{(}{)}
\DeclarePairedDelimiter\set{\{}{\}}
\DeclarePairedDelimiter\sqbrak{[}{]}
\DeclarePairedDelimiter\abs{|}{|}
\DeclarePairedDelimiter\norm{\Vert}{\Vert}
\DeclarePairedDelimiter\brak{\langle}{\rangle}

\newcommand{\Op}{\operatorname{Op}}

\begin{document}

\title[Spectral estimates for random Landau Schr\"odinger operators]{A semiclassical approach to spectral estimates for random Landau Schr\"odinger operators}


\author[D.\ Borthwick]{D.\ Borthwick}
\address{Mathematics Department, Emory University, Atlanta GA 30342, USA}
\email{dborthw@emory.edu}

\author[S.\ Eswarathasan]{S.\ Eswarathasan}
\address{Mathematics Department, Dalhousie University, Halifax, Nova Scotia, B3H 4R2, Canada}
\email{sr766936@dal.ca}

\author[P.\ D.\ Hislop]{P. D.\ Hislop}
\address{Department of Mathematics,
    University of Kentucky,
    Lexington, Kentucky  40506-0027, USA}
\email{peter.hislop@uky.edu}


\begin{abstract}
We prove spectral properties for {random} Landau Schr\"odinger operators on $L^2(\R^2)$ with bounded, random potentials supported in a square $\Lambda_L \subset \R^2$ of side length $L>0$, using semiclassical pseudodifferential calculus. The semiclassical parameter $h$ is 
the inverse of the magnetic field strength $B > 0$. By means of the Grushin method, we are led to the analysis of an effective 
Hamiltonian on $L^2 (\R)$, the principal term of which is a sum of certain compact, self-adjoint pseudodifferential operators. 
By analyzing these operators, we prove semiclassical Wegner and Minami estimates for the random Landau \Schr operator in energy intervals in the spectral bands around each Landau level.
\end{abstract}

\maketitle

\setcounter{tocdepth}{1}
\tableofcontents

\section{Introduction}\label{intro.sec}

The Landau Hamiltonian is a Schr\"odinger operator on $L^2 (\R^2)$ describing an electron restricted to the plane and moving in a constant, transverse, magnetic field with strength $B > 0$. It is well-known that the spectrum consists of a discrete set of eigenvalues $B_n$, $n \in \N_0 := \N \cup \{ 0 \}$, each of which is infinitely degenerate \cite[section 112]{LandauLiftschitz}. 
Landau Hamiltonians with random potentials play a central role in explanations of the integer quantum Hall effect \cite{bvesb94}. 
Localization properties for these random \Schr operators were proved in \cite{ch96, wm_wang97, gkGAFA}. 

In this article, we pursue the study of spectral properties of random Landau Hamiltonians using 
semiclassical pseudodifferential calculus.  We follow the broad approach of
Wang \cite{wm_wang95, wm_wang97}, based on the Grushin method described in Bellissard \cite{bell88} and 
Helffer-Sj\"ostrand \cite{HS89}; see also \cite{CR, SjZw07} for textbook accounts of the Grushin method.
Our analysis of the effective Hamiltonian obtained by the Grushin method 
provides a new and a more transparent proof of the Wegner estimate, and the first proof of a Minami estimate for random Landau Hamiltonians in the large $B$ regime. 


\subsection{Random Landau \Schr operators}\label{subsec:randomLandau1}

In $\bbR^2$, given a constant magnetic field parameter $B>0$, we define a vector potential $A(x,y) \in \R^2$ by
\[
A(x,y) = \frac{B}2 (y,-x), 
\]
The Landau Hamiltonian is the differential operator 
\[
H_0 := (-i\nabla - A)^2,
\]
which is essentially self-adjoint on $C_0^\infty (\R^2)$. It has pure point spectrum with the eigenvalues
\[
B_n := (2n +1)B,  \quad\text{for }n \in \bbN_0,
\] 
called the Landau levels, each having infinite multiplicity. {These spectral, and related, results are independent of the gauge chosen.}

We will consider the perturbation of $H_0$ by a real, bounded, scalar potential $V \in \cinf(\bbR^2)$. The resulting operator,
\begin{equation}\label{HV.def}
H_V := (-i\nabla - A)^2 + V, 
\end{equation}
is self-adjoint on the domain of $H_0$.
We assume throughout the paper that $\abs{V} \le 1$ and that $B > 1$ so that the perturbed spectrum is contained in disjoint bands associated to each Landau eigenvalue,
\[
\sigma(H_V) \subset \bigcup_{n \in \bbN_0} [B_n - 1, B_n + 1].
\]
{We refer to the interval $[B_n - 1, B_n + 1]$ as the $n^{th}$ Landau band}. 
Without loss of generality, we will focus on the eigenvalues in a single Landau band associated to a particular Landau level $n$. The spectral estimates described below apply to subintervals of $[B_n - 1, B_n + 1]$ 
that are bounded away from $B_n$ itself. 
Since the same arguments apply to intervals in $[ B_n - 1, B_n) \cup ( B_n, B_n+1]$ above or below $B_n$, we focus on eigenvalues in 
$(B_n, B_n+1]$ for simplicity.

Our results concern the case where $V$ is random potential of Anderson-type with single-site potential $v_0 \in \cinf_0(\bbR^2)$ where $\abs{v_0} \le 1$ and $\supp v_0 \in (-\tfrac12, \tfrac12)^2$.
 For $j = (j_1, j_2) \in \bbZ^2$, we define the shifted functions
\beq\label{eq:shift_ss1}
v_j(x,y) := v_0(x-j_1,y-j_2),
\eeq
with mutually disjoint supports. The full random potential associated with a finite region $\Lambda \subset \R^2$, for example say a square, is 
\beq\label{eq:rand_pot1}
V_\omega := \sum_{j \in \Lambda \cap \Z^2} \omega_j v_j
\eeq
where the vector $\omega$ given by
\[
\omega = (\omega_j:\> j \in \Lambda \cap \Z^2)
\]
denotes a sequence of independent, identically distributed (iid), random variables 
with the common distribution of each $\omega_j \in [-1,1]$ being
\[
d\bbP(\omega_j) = g(\omega_j) \>d \omega_j,
\]
for $g \in L^{\infty}[-1,1]$ and $g\ge 0$ with $\int g = 1$.
{The random potential $V_\omega$} satisfies $\abs{V_\omega} \le 1$ for all 
$\omega$. {The random potential $V_\omega$ preserves the essential  spectrum of $H_0$ and creates a discrete spectrum in the bands away from the Landau levels $[ B_n - 1, B_n) \cup ( B_n, B_n+1]$.}

The corresponding random Schr\"odinger operator $H_{V_{\omega}}$ depends explicitly on 
$\Lambda$ through the random potential. Although $\Lambda$ will be fixed for this discussion, 
the dependence of estimates on $\abs{\Lambda}$ is of primary interest for possible applications 
to the density of states and other aspects of localization.

\subsection{Main results}\label{subsec:main1} 

The Grushin method \cite{bell88, HS89} allows the eigenvalue problem 
for $H_{V_{\omega}}$ in a single fixed band to be reduced to a non-linear spectral problem involving compact operators. 
The precise statement is given in Theorem~\ref{bellissard.thm}. Note that this is not the single-band approximation used, for example, in \cite{pule}. The compact operators in 
question are defined using a (non-standard) semiclassical quantization of the potential function
$V_{\omega}$, with the semiclassical parameter defined as
\[
h := B^{-1}.
\]
The estimates are thus naturally adapted to the case where the magnetic field $B$ is large relative to the perturbing potential.

There are two eigenvalue estimates key to proofs of localization, properties of the density of states, 
and local eigenvalue statistics for random \Schr operators: the Wegner and Minami estimates. 
In the localization region of the deterministic spectrum, the Wegner estimate is an upper bound on the probability that $H_{V_\omega}$ has an eigenvalue in a given energy interval, while
the Minami estimate is an upper bound on the probability that at least two eigenvalues are   in a given energy interval. \textit{The form of these estimates 
reflects the heuristic that the contributions to the spectrum of $H_{V_\omega}$ from each lattice site behave as 
independent random variables.} 

A Wegner estimate for random Landau \Schr operators with a \emph{sign-definite} single-site potential
$v_0$ was proved in \cite[Theorem 3.1]{ch96}. In our notation, it takes the form:
\beq\label{eq:wegner_ch}
\bbP \paren[\Big]{\#\paren[\big]{\sigma(H_{V_\omega}) \cap (B_n + I)} \ge 1}
\le Ch^{-1} \abs{\Lambda} \abs{I}
\eeq
for $I \subset [b_0,1]$ with $b_0>0$, and where $C$ depends on $\norm{g}_\infty$ and $b_0$. 

One of the main contributions of \cite{wm_wang97} 
is a Wegner estimate for a \emph{non-sign-definite} single-site potential $v_0$. 
Wang's result \cite[Proposition 3.1]{wm_wang97} is
\beq\label{eq:wegner_wmw}
\bbP \paren[\Big]{\#\paren[\big]{\sigma(H_{V_\omega}) \cap (B_n + I)} \ge 1}
\le Ch^{-1} \abs{\Lambda}^2 \abs{I} .
\eeq
The key difference between \eqref{eq:wegner_ch} and \eqref{eq:wegner_wmw} is the dependence on the 
surface area $\abs{\Lambda}$. Whereas the $\abs{\Lambda}^2$ dependence of \eqref{eq:wegner_wmw} is sufficient for a proof of localization, estimate \eqref{eq:wegner_wmw} cannot be used to prove the absolute continuity of the density of states measure obtained by taking infinite surface area limit. Estimate \eqref{eq:wegner_ch} is sufficient for this and establishes the Lipschitz continuity of the integrated density of states. In Proposition \ref{prop:wegner_edge}, we prove a Wegner estimate of the form \eqref{eq:wegner_ch}  but only for energies in a small interval near the band edges. It remains an open problem to prove a bound linear in $\abs{\Lambda}$ for the case of non-sign-definite single-site potentials and general energy intervals in the localization regions. 

In Section \ref{wegner.sec}, we establish the following semiclassical version of the Wegner estimate for non-sign-definite, single-site potentials $v_0$ : 

\begin{theorem}[Wegner estimate] \label{thm:wegner_h}
There exists $h_0$ depending on $v_0$ such that for $I \subset [b_0,1]$, with $b_0>0$ and $h \le h_0$,  
\beq\label{wegner_h}
\bbP \paren[\Big]{\#\paren[\big]{\sigma(H_{V_\omega}) \cap (B_n + I)} \ge 1}
\le Ch^{-1} \abs{\Lambda} \paren*{\abs{I} + O(h^3)},
\eeq
where the constant $C$ and the $O(h^3)$ error estimate depend on $n$, $v_0$, $b_0$, $h_0$, and the probability density $g$.
\end{theorem}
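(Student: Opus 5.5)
We reduce, via the Grushin method (Theorem~\ref{bellissard.thm}), the eigenvalue problem for $H_{V_\omega}$ in the $n$-th band to an effective problem on $L^2(\R)$. Writing $E = B_n + \lambda$ with $\lambda \in I \subset [b_0,1]$, the statement is that $E$ is an eigenvalue of $H_{V_\omega}$ if and only if $\lambda$ is an eigenvalue of the effective Hamiltonian $E_{-+}(\lambda)$. Its principal term is the compact self-adjoint operator
\[
W_\omega := \sum_j \omega_j \Op_h(v_j),
\]
where $\Op_h(v_j)$ is the (non-standard) semiclassical quantization of $v_j$ composed with the $n$-th Landau projection structure. The remainder is $O(h)$, with $\lambda$-dependence that is smooth and Lipschitz with constant $O(h)$.

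\textbf{Step 1: reduce to counting eigenvalues of the effective operator.} The count of eigenvalues of $H_{V_\omega}$ in $B_n + I$ should be bounded by the number of eigenvalues of $W_\omega$ in a slightly enlarged interval $I_h := I + [-Ch^3, Ch^3]$. For this we need the effective Hamiltonian to agree with $W_\omega$ up to $O(h^3)$ in the relevant spectral sense. My plan is to expand $E_{-+}(\lambda)$ to higher order in $h$ and absorb the $O(h)$ and $O(h^2)$ corrections, which are themselves sums of local pseudodifferential terms in $v_j$, into a modified symbol $\tilde v_j = v_j + h a_j + h^2 b_j$. This keeps the single-site structure $\sum_j \omega_j \Op_h(\tilde v_j)$ up to an $O(h^3)$ error. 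Min-max and the Lipschitz dependence on $\lambda$ then convert the nonlinear problem into a linear one at the cost of the $O(h^3)$ enlargement.

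\textbf{Step 2: Chebyshev and a spectral averaging bound.} By Chebyshev,
\[
\bbP(\#\ge 1) \le \bbE\bigl[\operatorname{Tr} \mathbf 1_{I_h}(W_\omega)\bigr].
\]
Since $v_0$ is not sign-definite, we cannot use monotonicity in each $\omega_j$ directly. Instead we use the global scaling direction. Because $I_h$ is bounded away from $0$ by $b_0/2$, one has
\[
\sum_j \omega_j \partial_{\omega_j} W_\omega = W_\omega \ge b_0/2
\]
on $\operatorname{Ran}\mathbf 1_{I_h}(W_\omega)$. This is the standard trick for non-sign-definite Wegner estimates in this setting. Writing $\mathbf 1_{I_h}(W) \le \rho(W)$ with a smooth $\rho$ and differentiating along $\omega \mapsto t\omega$, we get
\[
\operatorname{Tr}\rho(W_\omega) \le \frac{2}{b_0}\, \sum_j \omega_j \partial_{\omega_j} \operatorname{Tr} F(W_\omega),
\]
where $F' = \rho$, so that $F$ has total variation $\lesssim \abs{I_h}$. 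Integrating by parts in each $\omega_j$ against $g$ produces boundary terms and terms involving $g'$. To avoid requiring $g$ to be smooth, a cleaner route is the following. We bound $\operatorname{Tr}\,\mathbf 1_{I_h}(W)$ by $\sum_j \operatorname{Tr}\bigl(\mathbf 1_{I_h}(W)\,\Op_h(\tilde v_j)\,\mathbf 1_{I_h}(W)\bigr)$ with coefficients weighted by $\omega_j$. We then apply one-parameter spectral averaging in $\omega_j$, after factoring $\Op_h(\tilde v_j) = A_j^* \sigma_j A_j$ with $A_j$ of rank $O(h^{-1})$ microlocally. This rank comes from the phase-space area of $\supp v_j$, which is $\sim 1 = h\cdot h^{-1}$, so it gives $O(h^{-1})$ eigenfunctions. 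Spectral averaging yields $\bbE[\cdot] \le C\norm{g}_\infty h^{-1}\abs{I_h}$ per site, and summing over the $\abs{\Lambda}$ sites gives the claimed bound $Ch^{-1}\abs{\Lambda}(\abs{I} + O(h^3))$.

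\textbf{Main obstacle.} I expect the hardest step to be the per-site spectral averaging in the non-sign-definite case. Because $\Op_h(v_j)$ has both signs, the standard Birman--Schwinger averaging bound does not apply directly. I would try first to split $v_j = v_j^+ - v_j^-$, using the symbol calculus to make each part semiclassically a nonnegative operator up to $O(h^\infty)$, and then to control the cross terms using the disjoint supports together with the $h$-localization of $\Op_h(v_j)$, which gives exponentially small overlap between different sites. A secondary difficulty is ensuring that the Step 1 error is genuinely $O(h^3)$ uniformly in $\abs{\Lambda}$ and not merely $O(h)$. This requires the locality of the Grushin remainder terms, which I would verify via off-diagonal decay estimates for the resolvent of $H_0$ restricted away from the $n$-th level.
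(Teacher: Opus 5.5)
\textbf{There is a genuine gap in Step 2.} The per-site bound $\bbE[\cdot]\le C\norm{g}_\infty h^{-1}\abs{I_h}$ carries the whole weight of the theorem, and it does not follow from what you propose.

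In the full operator $W_\omega$, an eigenvalue branch satisfies $\partial_{\omega_j}E_k=\langle\phi_k,\Op_h(\tilde v_j)\phi_k\rangle$. This has no sign when $v_0$ is not sign-definite. The scaling identity makes only the sum $\sum_j\omega_j\partial_{\omega_j}E_k=E_k$ positive, not the individual terms. So weighting by $\omega_j$ does not give monotonicity in any single $\omega_j$, and that monotonicity is exactly what one-parameter spectral averaging needs.

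Splitting $v_j=v_j^+-v_j^-$ does not help, because both pieces carry the same coefficient $\omega_j$. Also, sharp G\aa rding gives nonnegativity only up to $O(h)$, not $O(h^\infty)$.

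If you instead integrate the weighted site term by parts in $\omega_j$, you pick up all eigenvalues of $W_\omega$ above $b_0/2$, which number of order $h^{-1}\abs{\Lambda}$. Summing over sites then returns only Wang's $\abs{\Lambda}^2$ bound (Theorem~\ref{thm:wegner}), and it needs $g$ smooth. Small overlap between different sites controls cross terms, but on its own it does not supply the missing monotonicity.

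\textbf{The missing idea is a deterministic decoupling, done before any probability.} Write the $O(h^3)$-truncated operator as $A_\omega=\sum_j\widehat a_j(\omega_j)$. The site symbols have disjoint phase-space supports, so $\norm{\widehat a_i\widehat a_j}\le C_Nh^N\abs{i-j}^{-N}$. A quasimode argument with localizers $P_j=\widehat{\chi_j}$ (Proposition~\ref{pre.wegner.prop}) then shows: if $A_\omega$ has spectrum in $[\mu_0\pm\delta]$ with $\mu_0\ge b_0$, some single-site operator $\widehat a_j(\omega_j)$ has spectrum in $[\mu_0\pm(\delta+r(h))]$. Here $r(h)=O(h^\infty)$ uniformly in $\Lambda$.

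Each $\widehat a_j(\omega_j)$ depends on one variable only, so your Euler-type scaling identity becomes a genuine monotonicity statement in $\omega_j$. Writing $a_0(\omega)=\omega p_0+\omega^2q_0$, the eigenvalues are $\mu_k(\omega)=\omega\nu_k(\omega)$. Here $\nu_k\ge b_0/2$, and $\nu_k$ is Lipschitz with constant $\norm{\widehat q_0}=O(h^2)$. Hence $\mu_k$ increases with slope at least $b_0/4$. This holds whatever the sign of $v_0$, and it needs only $g\in L^\infty$. The Hilbert--Schmidt norm bounds the number of relevant branches by $Ch^{-1}$, which gives the single-site estimate $Ch^{-1}\epsilon$ (Lemma~\ref{site.wegner.lemma}). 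A union bound over the $\abs{\Lambda}$ sites then gives the linear volume dependence.

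Two smaller points. First, the $h^2$ term in Proposition~\ref{Qexpand.prop} contains $\tfrac14(\nabla V_\omega)^2=\tfrac14\sum_j\omega_j^2(\nabla v_j)^2$. So the site operators are quadratic in $\omega_j$, not of the form $\omega_j\Op_h(\tilde v_j)$ with $\tilde v_j$ independent of $\omega$. Second, uniformity of the $O(h^3)$ error in $\abs{\Lambda}$ needs no resolvent-decay analysis. The constants in Proposition~\ref{Qexpand.prop} depend only on $\norm{V_\omega}_{C^r_{\rm b}}\le\norm{v_0}_{C^r_{\rm b}}$.
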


We present a new and shorter proof of Wang's estimate \eqref{eq:wegner_wmw}, in Section \ref{scaling.sec}, 
which takes advantage of the fact that the effective Hamiltonian produced by the Grushin method is linear in the potential to leading order. In contrast, the proof of Theorem~\ref{thm:wegner_h}, presented in Section \ref{landau.sec}, follows more closely heuristic mentioned above. That is, the spectrum of the effective Hamiltonian can ``almost'' be treated as a union of independent random variables corresponding
to the compact operators constructed by the quantization of the individual single-site potentials. This \emph{lattice-site} approach, which is set up in section \ref{lattice.sec}, 
has the advantage of providing a very clean proof of the Wegner estimate with the correct volume scaling. 
The drawback, however, is the appearance of $h$-dependent error terms, 
resulting in an extra $O(h^3)$ on the right-hand side of \eqref{wegner_h}.

A proof of the Minami estimate, an estimate for the probability of the event that there are at least two eigenvalues in an energy interval, is more difficult. 
The Minami estimate was first proved in \cite{minami} for discrete random \Schr operators with Anderson-type 
random potentials acting on $\ell^2(\mathbb{Z}^d)$; see also \cite{CGK1}. 
The Minami estimate plays a key role in the proof that the local eigenvalue statistics in the localization region of the deterministic spectrum is a Poisson point; see also \cite{gk14}. A proof of the Minami estimate for random \Schr operators on $L^2 ( \R^d)$ has been difficult to obtain; see \cite{CGK14}. Recently, Dietlein and Elgart  \cite{de21} proved a weaker version of the Minami estimate for random Schr\"odinger operators with sign-definite potentials on $L^2 (\R^d)$ for energies near the bottom of the deterministic spectrum. This estimate is sufficient to prove Poisson eigenvalue statistics for low energy intervals. It is not clear how to extend the technique of Dietlein-Elgart to random Landau Hamiltonians.  

By the same philosophy used for the Wegner estimate in Theorem~\ref{thm:wegner_h}, we obtain a  semiclassical 
Minami estimate, i.e., valid for large $B$ relative to the random perturbation. 
However, the Minami argument is complicated by the need to control multiplicities.
This requires an extra assumption on the single-site potential $v_0$. 
This \emph{spectral gap assumption}, stated precisely in section \ref{minami.sec}, says that a certain compact operator on $L^2(\bbR)$ 
with principal symbol $v_0$ has simple spectrum with spacing that remains approximately 
uniform away from zero as $h \to 0$. Examples of potentials satisfying the spectral gap 
assumption are given in Appendix \ref{radial.sym.sec}. 
In Section \ref{minami.sec}, we establish the following:

\begin{theorem}[Minami estimate] \label{thm:minami_h}
Assume that the potential $v_0$ satisfies the spectral gap assumption stated in \S\ref{minami.sec}, with constants $\kappa, b_0$.
For $h \le h_0$ and $I \subset [b_0,1]$ with $\kappa h \le \abs{I}$,  
\[
\bbP \paren[\Big]{\#\paren[\big]{\sigma(H_{V_\omega}) \cap (B_n + I)} \ge 2}
\le Ch^{-2} \abs{\Lambda}^2 \paren*{\abs{I} + O(h^3)}^2,
\]
where the constant $C$ and the $O(h^3)$ error estimate depend on $n$, $v_0$, $b_0$, $h_0$, and the probability density $g$.
\end{theorem}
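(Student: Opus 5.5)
We reduce the Minami estimate to a statement about the effective Hamiltonian, using the lattice-site decomposition of Section~\ref{lattice.sec} as in the proof of Theorem~\ref{thm:wegner_h}.

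\emph{Step 1 (reduction).} By Theorem~\ref{bellissard.thm}, $B_n+E$ with $E\in I\subset[b_0,1]$ is an eigenvalue of $H_{V_\omega}$, with the same multiplicity, exactly when $0$ is an eigenvalue of the effective Hamiltonian $E_{-+}(E)$. To leading order, $E_{-+}(E)$ equals $E - K_\omega$, where
\[
K_\omega=\sum_{j\in\Lambda\cap\Z^2}\omega_j K_j ,
\]
and $K_j$ is the compact self-adjoint $h$-quantization on $L^2(\R)$ of the shifted single-site potential $v_j$. The remainder is $O(h)$ and smooth in $E$. Eigenvalue counting in $B_n+I$ is therefore comparable to counting eigenvalues of $K_\omega$, up to $O(h)$ corrections, in an interval $\tilde I$ of length $|I|+O(h^3)$ after the perturbative reduction. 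In the lattice-site approach we use the near-orthogonality of the $K_j$ for distinct sites: their cross terms are $O(h^\infty)$ off-diagonal once the separation is $\gtrsim 1$, with residual errors $O(h^3)$ from the construction in Section~\ref{lattice.sec}. This lets us replace $K_\omega$ by a block-diagonal operator $\bigoplus_j \omega_j \tilde K_j$, up to a shift of the interval by $O(h^3)$.

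\emph{Step 2 (counting by sites).} The spectrum of the block operator is the union over $j$ of $\omega_j\,\sigma(\tilde K_j)$. The event of at least two eigenvalues in $\tilde I$ splits into two cases.
\begin{enumerate}
\item[(a)] Two distinct sites $j\ne k$ each contribute an eigenvalue. By independence of $\omega_j$ and $\omega_k$, and the single-site Wegner bound from the proof of Theorem~\ref{thm:wegner_h}, each site contributes probability at most $C h^{-1}(|I|+O(h^3))$. Here $h^{-1}$ bounds the number of eigenvalues $\mu$ of $\tilde K_0$ with $|\mu|\ge b_0$, and each event $\omega_j\mu\in\tilde I$ has probability at most $\|g\|_\infty|\tilde I|/b_0$. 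Summing over the at most $|\Lambda|^2$ pairs gives $C h^{-2}|\Lambda|^2(|I|+O(h^3))^2$.
\item[(b)] A single site $j$ contributes two eigenvalues $\omega_j\mu_1,\omega_j\mu_2\in\tilde I$.
\end{enumerate}

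\emph{Step 3 (the main obstacle: excluding case (b)).} This is where the spectral gap assumption enters. It says that $\tilde K_0$ has simple spectrum on $\{|\mu|\ge b_0\}$, with consecutive gaps bounded below by a fixed multiple of $h$. Since $|\omega_j|\le1$, the images $\omega_j\mu_1$ and $\omega_j\mu_2$ of two eigenvalues with $|\mu_i|\ge b_0$ are separated by at least $|\omega_j|\,c h$. Moreover $\omega_j\mu\in\tilde I\subset[b_0,1]$ forces $|\omega_j|\ge b_0$. Hence the two images are separated by at least $b_0 c h$.

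Comparing this separation with $|I|$ is the delicate point. The hypothesis $\kappa h\le|I|$ is what makes the single-site double-occupancy probability controllable: I would bound it by the probability that $\omega_j$ lies in the set where two scaled eigenvalues fit in $\tilde I$. That set is contained in a union over pairs of eigenvalues; because of the gaps, only $O(|I|/h)$ neighbouring pairs can occur, each contributing a set of measure $O(|I|)$. Summing, the single-site double-occupancy probability is at most $C\,|I|\cdot |I|/h\cdot h^{-1}$, which for $|I|\ge\kappa h$ is at most $C h^{-2}(|I|+O(h^3))^2$. Summing over the $|\Lambda|$ sites is then dominated by the bound of case (a).

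I expect the main difficulty to be making this single-site argument rigorous while the $O(h^3)$ inter-site coupling errors and the $E$-dependence of the Grushin remainder perturb the eigenvalue spacing. The spacing must be preserved up to errors $o(h)$, which should follow from the min-max principle since the perturbations are $O(h^3)\ll h$.
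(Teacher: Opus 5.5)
Your outline follows the paper's architecture. You reduce via Theorem~\ref{bellissard.thm} and Proposition~\ref{Qexpand.prop} to the site-decomposed operator $A_\omega$ with an $O(h^3)$ enlargement of $I$, localize to lattice sites, and use independence. Your case (a) is exactly the paper's bound $\binom{m}{2}\rho^2$ with $\rho\le Ch^{-1}(\abs{I}+2c(h))$. Two steps, however, do not hold as written.

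\textbf{Step 1.} Section~\ref{lattice.sec} does not let you replace $A_\omega$ by a block-diagonal operator $\bigoplus_j\omega_j\tilde K_j$. It proves only one-sided implications, and its errors are $O(h^\infty)$; the $O(h^3)$ is the Grushin remainder $h^3Q_V^{(3)}$. A two-sided spectral comparison with multiplicities would need its own proof, but you do not need one. The contrapositives of Propositions~\ref{pre.wegner.prop} and~\ref{pre.minami.prop} give exactly your dichotomy; the latter is proved via the rank-one perturbation $A-\mu\,\phi\otimes\overline{\phi}$ and interlacing. If $A_\omega$ has two eigenvalues in $[\mu_0\pm\delta]$, then either two sites have spectrum in $[\mu_0\pm(\delta+r(h))]$, or one site has at least two eigenvalues there. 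Also, keeping only $K_\omega=\widehat{V_\omega}$ leaves an $O(h)$ error, which ruins the estimate. To reach $O(h^3)$ you must use the site symbols \eqref{aw.def}, namely $\widehat{a}_j(\omega_j)=\omega_j\widehat{p}_j+\omega_j^2\widehat{q}_j$. These are not linear in $\omega_j$, so the single-site bound comes from the monotonicity argument of Lemma~\ref{site.wegner.lemma}, not from exact scaling.

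\textbf{Step 3.} The paper does not estimate case (b); it excludes it. By the gap assumption, together with $\abs{\omega_j}\ge b_0$ and $\norm{\widehat{q}_0}=O(h^2)$, the eigenvalues of $\widehat{a}_j(\omega_j)$ in $[b_0,1]$ are at least about $b_0\kappa h$ apart. So when $[I\pm c(h)]$ is shorter than that, no site carries two eigenvalues and the site counts are independent Bernoulli variables. The relation between $\abs{I}$ and $\kappa h$ must therefore be read as a smallness condition on $\abs{I}$. Read literally ($\abs{I}\ge\kappa h$), the paper's claim of at most one eigenvalue per site fails.

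Your bound for (b) under the literal hypothesis does not work as argued. It is not true that only $O(\abs{I}/h)$ neighbouring pairs can occur. As $\omega_j$ ranges over $[b_0,1]$, the scaled spectrum slides through $\tilde I$, and every one of the $O(h^{-1})$ consecutive pairs can land in it. For the cut-off Gaussian of Appendix~\ref{radial.sym.sec}, whose eigenvalues are $\approx e^{-(2k+1)h}$, double occupancy of a single site can have probability of order one once $\abs{I}$ is a few multiples of $h$. The union bound yields only the single-site Wegner bound $Ch^{-1}(\abs{I}+O(h^3))$. You can reach $Ch^{-2}\abs{I}^2$ only by using $1\le\abs{I}/(\kappa h)$.

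That same inequality shows that for $\abs{I}\ge\kappa h$ the right-hand side of the theorem is at least $1$ once $C$ is large depending on $\kappa$. So in that regime the estimate is vacuous and the gap assumption does no work. The missing idea is to apply the gap assumption in the regime $\abs{I}+O(h^3)\lesssim b_0\kappa h$. There case (b) is empty, and your argument reduces to the paper's.
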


We are currently studying the applications of these methods and results to the local eigenvalue statistics associated with energies in the localization region of the band edges. 

\subsection{Acknowledgements}\label{subsec:ack1} 

The authors wish to thank BIRS for a Research in Teams 25rit034 visit in May 2025, \emph{Random Perturbations of Landau Hamiltonians and Semiclassical Analysis}.
PDH is partially supported by Simons Foundation Collaboration Grant for Mathematicians No.\ 843327. DB and PDH thank the 
Department of Mathematics \& Statistics at Dalhousie University for its warm welcome during collaborative research visits. SE also thanks
Emory University for support during a collaborative visit. SE is partially supported by NSERC Discovery Grant RGPIN/04775-2020.

\section{The effective Hamiltonian}\label{eff.H.sec}

In this section, we set up the main tool used in the analysis, namely the reduction of the spectral problem in a
Landau band by means of the Grushin method. We study the operator $H_V$ as
defined in \eqref{HV.def} since the special features of $V_\omega$ play no role in the discussion in this section. 
The results of this section require only that $V$ is in the symbol class
\beq\label{symS.def}
S(\bbR^2) := \set*{F \in \cinf(\bbR^2):\> \sup\, \abs{\partial^\alpha F} < \infty\text{ for all }\alpha = (\alpha_1,\alpha_2) \in \bbN_0^2},
\eeq
and that $\abs{V} \le 1$. 
As in \S\ref{intro.sec} we set $h = B^{-1}$ and fix a particular Landau level $n$.

For applications to the random potential case, we must keep careful track of how estimates depend
on $V$. In particular, the estimates depend on the norms 
\[
\norm{V}_{C^r_{\rm b}} := \sum_{\abs{\alpha} \le r} \sup\,\abs{\del^\alpha V},
\]
but not on the support of $V$.

From \cite{bell88, HS89} we paraphrase the following result, whose proof is sketched in Appendix~\ref{grushin.sec}:
\begin{theorem}[Bellissard, Helffer-Sj\"ostrand]\label{bellissard.thm}
There is a constant $h_0$, depending only on the Landau level $n$ and 
$\norm{V}_{C^r_{\rm b}}$ for some $r$, such that $h \le h_0$ there exists a family of zeroth order 
pseudodifferential operators $Q_V(\mu)$ acting on $L^2(\bbR)$ such that 
\[
B_n + \mu \in \sigma(H_V) \quad\Longleftrightarrow\quad 0 \in \sigma(Q_V(\mu))
\]
for $\mu \in [-1,1]$. The family $Q_V(\mu)$ depends analytically on $\mu$.
\end{theorem}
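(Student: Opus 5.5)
The plan is to conjugate $H_V$ into a form where the Landau Hamiltonian is a harmonic oscillator tensored with the identity, and then build a Grushin problem around the $n$-th oscillator eigenfunction.

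\textbf{Reduction to an oscillator tensor product.} Using the metaplectic (Bargmann-type) unitary $U\colon L^2(\bbR^2)\to L^2(\bbR)\otimes L^2(\bbR)$, $H_0$ becomes $B(D_{s}^2+s^2)\otimes I$. Under the same map, multiplication by $V(x,y)$ becomes a Weyl-type operator in the variables $(s,t,hD_s,hD_t)$ after rescaling by $\sqrt h$. Concretely, one rescales so that the guiding-center coordinates $(t,hD_t)$ carry the semiclassical parameter $h=B^{-1}$, and $V$ becomes $\Op_h^w$ of $V$ composed with a linear symplectic change of coordinates mixing cyclotron and guiding-center variables. Writing $\phi_n$ for the $n$-th Hermite function, the orthogonal projection $\Pi_n = |\phi_n\rangle\langle\phi_n|\otimes I$ is the spectral projector of $H_0$ onto $B_n$.

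\textbf{Grushin problem.} Define $R_-\colon L^2(\bbR)\to L^2(\bbR^2)$ by $R_- u = U^*(\phi_n\otimes u)$ and set $R_+ = R_-^*$. For $z = B_n+\mu$, form
\[
\mathcal P(\mu)=\begin{pmatrix} H_V - z & R_- \\ R_+ & 0\end{pmatrix}.
\]
On $\operatorname{Ran}(1-\Pi_n)$, $H_0-z$ is invertible with inverse of norm $\le (2B-1)^{-1}=O(h)$, since the other Landau levels are at distance $\ge 2B-1$. Since $|V|\le 1$, a Neumann series shows that $(1-\Pi_n)(H_V-z)(1-\Pi_n)$ is invertible on that range with inverse $E_0(\mu)$ of norm $O(h)$. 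Standard Schur-complement algebra then gives invertibility of $\mathcal P(\mu)$, with $E_{-+}(\mu)$ given by
\[
E_{-+}(\mu) = \mu - R_+VR_- + R_+V(1-\Pi_n)E_0(\mu)(1-\Pi_n)VR_-
\]
(up to sign conventions), together with the Schur lemma: $H_V-z$ is invertible iff $E_{-+}(\mu)$ is. Taking $Q_V(\mu) := E_{-+}(\mu)$, equivalently $\mu - W(\mu)$, gives the equivalence $B_n+\mu\in\sigma(H_V)\iff 0\in\sigma(Q_V(\mu))$. Analyticity in $\mu$ follows because $E_0(\mu)$ is a resolvent that is analytic for $|\mu|\le 1$, as the gap is uniform.

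\textbf{Pseudodifferential structure and the main obstacle.} It remains to show that $Q_V(\mu)$ is a zeroth-order $h$-pseudodifferential operator on $L^2(\bbR)$ whose bounds depend only on $\norm{V}_{C^r_{\rm b}}$. The leading term $R_+VR_-$ is computed by integrating out the $s$-variable against $\phi_n\otimes\phi_n$. This yields $\Op^w_h$ of the convolution of $V$ with the Wigner-type function of $\phi_n$ (a Laguerre-Gaussian at scale $\sqrt h$), which lies in $S(\bbR^2)$ with seminorms controlled by those of $V$. The higher terms require showing that $E_0(\mu)$ preserves the operator-valued symbol class. The approach is to view $H_V$ as an operator-valued $h$-pseudodifferential operator in $(t,hD_t)$ with values in operators on $L^2(\bbR_s)$, and to build a parametrix for $E_0$ by symbolic calculus with operator-valued symbols, following Helffer--Sj\"ostrand. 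This construction is the main technical step: one must check that the symbol estimates are uniform, with the error remainders $O(h^\infty)$ in norm so that they can be absorbed. Beals' characterization, applied to commutators with $t$ and $hD_t$, offers an alternative route that avoids an explicit parametrix. The choice of $h_0$ is then dictated by the requirement that the Neumann series and the parametrix converge, which depends only on $n$ and on finitely many derivatives of $V$.
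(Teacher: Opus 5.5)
Your proposal is correct and follows essentially the paper's route. You conjugate to $h^{-1}L\otimes I+W$, set up the Grushin problem built from $R_n$, obtain the equivalence from the Schur complement, and defer the pseudodifferential structure of $Q_V(\mu)$ to Helffer--Sj\"ostrand, which is exactly what the paper does by citing \cite[Prop.~2.3.1]{HS89}. The only variation is how the Grushin problem is inverted. You use the perturbed reduced resolvent of $(1-\Pi_n)(H_V-z)(1-\Pi_n)$ (the Feshbach map), which already gives the equivalence and analyticity for every $B>1$. The paper instead perturbs off the explicit inverse $\mathscr{E}_0(\mu)$ by a Neumann series for $h\le h_0$, which directly produces the expansion \eqref{QV.exp}; expanding your $E_0(\mu)$ in a Neumann series recovers the same operator, up to an overall sign.
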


The family of operators $Q_V(\mu)$ arises as the effective Hamiltonian in an application of the Grushin method. 
For our arguments, we will take the power series expansion of $Q_V(\mu)$ from \cite[\S2.3]{HS89} as a starting point:
\begin{equation}\label{QV.exp}
Q_V(\mu) := -\mu + \sum_{j=0}^\infty (-1)^j h^j R_n^* W \paren[\big]{E_0(\mu) W}^j R_n.
\end{equation}
where the operators $W, R_n,$ and $E_0$ are defined as follows.  Here $W$ is an operator on $L^2(\bbR^2)$ defined by the following quantization of $V$,
\begin{equation}\label{W.def}
\begin{split}
Wu(x,y) &:= \frac{1}{4\pi^2} \int_{\bbR^4} e^{i(x-x')\xi + i(y-y')\eta} V\paren*{\frac{y+y'}2 + h^{\frac12} \frac{x+x'}2, h\eta - h^{\frac12}\xi} \\
&\hskip1in\times u(x',y')\> dx'\>dy'\>d\xi\>d\eta.
\end{split}
\end{equation}
In terms of the $L^2$-normalized harmonic oscillator eigenfunctions 
\begin{equation}\label{psil.def}
\psi_l(x) := \frac{\pi^{-\frac14}}{\sqrt{2^ll!}} (\del_x-x)^l e^{-x^2/2},
\end{equation}
$R_l$ denotes the operator $L^2(\bbR) \to L^2(\bbR^2)$ given by 
\begin{equation}\label{Rl.def}
R_l : v(y) \mapsto \psi_l(x)v(y),
\end{equation}
with the adjoint, 
\[
R_l^* : u(x,y)\mapsto \int_{\bbR} \psi_l(x) u(x,y)\>dx.
\]
Note that $R_l^* R_k  = \delta_{lk} \ I$, on $L^2(\R)$. 
Finally, with $n$ the fixed Landau level, the operator $E_0$ is defined on $L^2(\bbR^2)$ as
\begin{equation}\label{E0.def}
E_0(\mu) := \sum_{l\ne n} \frac{1}{2(l-n) - h\mu} R_l R_l^*.
\end{equation}

Note that $\norm{R_n} = 1$ and $\norm{E_0(\mu)} = (2-h\mu)^{-1}$ for $\abs{h\mu} < 2$. Since 
$V\in S(\bbR^2)$, the operator $W$ satisfies
\beq\label{W.op.bnd}
\norm{W} \le C \norm{V}_{C^r_{\rm b}}
\eeq
for some $r$ by \cite[Theorem~4.23]{Zw12}, where $C$ is independent of $h < 1$.
Therefore for $h_0$ sufficiently small the series \eqref{QV.exp} is convergent in the operator topology.
We review the construction that leads to \eqref{QV.exp} in Appendix~\ref{grushin.sec}.

For future reference, define the harmonic oscillator operator $L := -\del_x^2 + x^2$, so that
the states $\psi_n$ satisfy 
\begin{equation}\label{L.eigv}
L\psi_n = (2n+1)\psi_n. 
\end{equation}
We will also make use of the raising and lowering operators,
\[
J_\pm := \pm \del_x - x,
\]
which satisfy
\[
L = J_+ J_-+1 = J_- J_+ -1.
\]
The harmonic oscillator states are then related by
\[
J_+ \psi_k = \sqrt{2(k+1)} \psi_{k+1}, \quad J_- \psi_k = \sqrt{2k}\psi_{k-1}.
\]

Our applications of Theorem~\ref{bellissard.thm} are based on approximations of 
$Q_V(\mu)$ as $h \to 0$. To extract these approximations from
the power series in \eqref{QV.exp}, we need to account for the $h$-dependence of the operators
$W$ and $E_0$. The terms in the expansion are expressed in terms of the semiclassical 
Weyl quantization of a symbol $a \in S(\bbR^2)$,
\begin{equation}\label{widehat.def}
\widehat{a}u(y) := \frac{1}{2\pi h} \int_{\bbR^2} e^{i(y-y')\eta/h} a\paren*{\frac{y+y'}2, \eta} u(y') dy'd\eta.
\end{equation}
We refer to \cite[\S 4.1]{Zw12} for a textbook presentation of semiclassical quantization. 

\begin{proposition}\label{Qexpand.prop}
For $h \le h_0$, the effective Hamiltonian $Q_V(\mu)$ in \eqref{QV.exp} admits the expansion
\[
Q_V(\mu) = \widehat{V} - \mu + h \widehat{V_1} + h^2 \widehat{V_2} + h^3Q_V^{(3)}(\mu),
\]
where 
\[
\begin{split}
V_1 &= \frac{2n+1}{4} \Delta V, \\
V_2 &= \frac14 (\nabla V)^2 + \sum_{\substack{\abs{\alpha} = 4 \\\alpha_j \text{even}}} c_\alpha \del^\alpha V,
\end{split}
\]
where $\alpha = (\alpha_1,\alpha_2) \in \N_0^2$, and the remainder satisfies an estimate 
\[
\norm{Q_V^{(3)}(\mu)} \le C
\]
with $C$ depending only on $n$ and $\norm{V}_{C^r_{\rm b}}$ for some $r$. The coefficients $c_\alpha$ are defined in \eqref{c.alpha} and thereafter. 
\end{proposition}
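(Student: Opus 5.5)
Starting from the series \eqref{QV.exp}, the plan is to truncate after the $j=2$ term. The tail $\sum_{j\ge3}(-1)^jh^jR_n^*W(E_0W)^jR_n$ has norm $O(h^3)$ by \eqref{W.op.bnd} and $\norm{E_0(\mu)}\le(2-h\mu)^{-1}$, so it goes directly into $h^3Q_V^{(3)}(\mu)$. The remaining work is to expand $R_n^*WR_n$ up to $O(h^3)$, $R_n^*WE_0WR_n$ up to $O(h^2)$, and $R_n^*W(E_0W)^2R_n$ to leading order.

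\textbf{Expansion of $W$.} We Taylor expand $V$ in \eqref{W.def} about the point $(\frac{y+y'}2,h\eta)$, with displacement $h^{1/2}(\frac{x+x'}2,-\xi)$. In the semiclassical variable $\eta\mapsto h\eta$, the term of order $k$ is $h^{k/2}$ times $\sum_{|\alpha|=k}\frac{1}{\alpha!}\widehat{\partial^\alpha V}$ tensored with the Weyl quantization in $x$ of $(\frac{x+x'}2)^{\alpha_1}(-\xi)^{\alpha_2}$. These are the monomials in $x$ and $D_x$, i.e.\ polynomials in $J_\pm$. The Taylor remainder at order $6$ (resp.\ $4$, $2$) is bounded by $Ch^3$ (resp.\ $h^2$, $h$) times a bounded operator. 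Control in operator norm needs a Calderón–Vaillancourt bound \cite[Thm 4.23]{Zw12} after the Hermite-state cutoffs; since $R_n$ localizes $x$ to a fixed Hermite function, we get uniform bounds.

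\textbf{Diagonal term $R_n^*WR_n$.} Compressing to $\psi_n$ gives $\langle \mathrm{Op}^w(x^{\alpha_1}\xi^{\alpha_2})\psi_n,\psi_n\rangle$. This vanishes for $|\alpha|$ odd by parity, so half-integer powers of $h$ drop out. At order $h^1$ we have $\langle x^2\psi_n,\psi_n\rangle=\langle D^2\psi_n,\psi_n\rangle=\frac{2n+1}2$, and the mixed Weyl term vanishes. This gives $\frac{2n+1}{4}(\partial_1^2+\partial_2^2)V$, which is $V_1$. At order $h^2$ the fourth moments give a combination $\sum c_\alpha\partial^\alpha V$, and Weyl symmetry kills the terms with odd $\alpha_j$.

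\textbf{Off-diagonal terms.} For $R_n^*WE_0WR_n$, the leading term is $R_n^*\widehat VE_0\widehat VR_n$ with $\widehat V$ acting trivially in $x$. Since $R_l^*R_n=0$ for $l\neq n$, it vanishes. At order $h^{1/2}$, $W$ couples $\psi_n$ to $\psi_{n\pm1}$ through $x$ and $D_x$, i.e.\ $J_\pm$, multiplied by $\widehat{\partial V}$. The first nonzero contribution is therefore $h\sum_{l=n\pm1}\frac{1}{2(l-n)}R_n^*W_1R_lR_l^*W_1R_n$. Here $E_0(\mu)$ may be replaced by its $\mu$-independent limit up to $O(h)$, which explains why $\mu$ does not appear in $V_2$. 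Writing out the matrix elements $\sqrt{2(n+1)}$ and $\sqrt{2n}$ gives weights $\frac{n+1}{2}$ and $-\frac{n}{2}$, whose combination is independent of $n$. Composing $\widehat{\partial_iV}\widehat{\partial_jV}$ to leading order gives the symbol $\frac14|\nabla V|^2$, up to a sign that must be tracked carefully with the $(-1)^j$. Adding the prefactor $h$ from the series, this is an $h^2$ contribution. The $j=2$ term is $O(h^3)$ for the same reason: its leading part vanishes and each off-diagonal coupling costs $h^{1/2}$. All composition errors are $O(h)$ relative to their leading terms, by the semiclassical composition theorem.

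\textbf{Main obstacle.} The hard part is the bookkeeping at order $h^2$. There are several contributions: fourth-order Taylor terms in the diagonal part; cross terms from $h^{1/2}\cdot h^{3/2}$ couplings through $\psi_{n\pm1}$ and $h\cdot h$ couplings through $\psi_{n\pm2}$ in the $j=1$ term; and the $O(h)$ correction in the Moyal product $\widehat{\partial_iV}\#\widehat{\partial_jV}$, which is antisymmetric and must cancel between the $l=n\pm1$ channels. The $\psi_{n\pm2}$ and $h^{1/2}h^{3/2}$ couplings produce products like $\widehat{\partial^2V}\widehat V$ with only $\widehat V$ trivially in $x$. I expect these either cancel against each other or reduce to derivative terms absorbed into the $c_\alpha$ sum. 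I would first verify this by an explicit computation in the $J_\pm$ algebra before committing to the stated form of $V_2$. Uniformity of all constants in $\norm{V}_{C^r_{\rm b}}$ follows because every bound used depends only on finitely many derivatives of $V$.
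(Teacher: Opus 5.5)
\textbf{Gap: power counting in the $j=1$ term.} Your skeleton is the paper's:
\begin{itemize}
\item Taylor expansion of the symbol in the displacement $h^{1/2}(x,-\xi)$;
\item parity of Hermite moments to kill half-integer powers and odd $\alpha_j$;
\item $E_0(\mu)R_n=0$ to strip the $I\otimes\widehat V$ part of the two outer factors $W$ in every $j\ge1$ term;
\item $J_\pm$ matrix elements for the $j=1$ coupling, and one factor $h^{1/2}$ per outer coupling for $j\ge2$;
\item Hermite localization to bound the Taylor remainders after composing with $R_n$. The paper composes with $(L+1)^{-k}\otimes I$ so that the remainder symbol lies in $S(\mathbb{R}^4)$.
\end{itemize}
The gap is in your ``main obstacle'' paragraph. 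There the power counting for the $j=1$ term goes wrong, so you leave $V_2$ unverified and propose a mechanism that cannot work.

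The $j=1$ term carries the explicit prefactor $-h$. So inside $R_n^*WE_0WR_n$ only the $O(h)$ part matters, as your own plan said. The $h^{1/2}\cdot h^{3/2}$ couplings through $\psi_{n\pm1}$ and the $h\cdot h$ couplings through $\psi_{n\pm2}$ therefore enter $Q_V(\mu)$ at order $h^3$, not $h^2$. This is the paper's remark that terms $\partial^\alpha V\,\partial^\beta V$ with $\abs{\alpha}+\abs{\beta}\in\{4,6\}$ contribute $O(h^3)$.

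These terms could not be ``absorbed into the $c_\alpha$ sum'' in any case. They are products like $\widehat{\partial V}\,\widehat{\partial^3 V}$ and $\widehat{\partial^2 V}\,\widehat{\partial^2 V}$, quadratic in $V$, while $\sum c_\alpha\partial^\alpha V$ is linear. A product $\widehat{\partial^2V}\,\widehat V$ never occurs, because $I\otimes\widehat V$ commutes with $E_0(\mu)$ and $E_0(\mu)R_n=0$.

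Likewise, the $O(h)$ Moyal correction in $\widehat{\partial_iV}\,\widehat{\partial_kV}$ enters $Q_V$ only at order $h\cdot h\cdot h$. It also does not cancel between the channels $l=n\pm1$. The scalar $\langle \Op(x)\psi_n, E_0(\mu)\Op(\xi)\psi_n\rangle$ is purely imaginary and nonzero; the two channels add. So the mixed part is a constant times the commutator $[\widehat{\partial_1V},\widehat{\partial_2V}]=O(h)$, which is harmless at this order. With the counting fixed, the only order-$h^2$ inputs are the diagonal fourth moments and the linear--linear coupling, exactly as in the paper. Nothing else remains to check.

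\textbf{The sign of the gradient term.} You leave this sign open. Both diagonal coefficients are $+\frac14+O(h)$, so tracking the sign gives
$(-1)^1 h\cdot h\cdot\frac14\bigl(\frac{2(n+1)}{2-h\mu}-\frac{2n}{2+h\mu}\bigr)\widehat{(\nabla V)^2}=-\frac{h^2}{4}\widehat{(\nabla V)^2}+O(h^3)$.
This agrees with second-order perturbation theory: for $n=0$ one has $E_0(\mu)\ge0$, hence $-hR_0^*WE_0WR_0\le0$.

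The paper's own proof has the same $-h^2$ prefactor and the same $+\frac14$ coefficients, yet records the contribution as $+\frac{h^2}{4}\widehat{(\nabla V)^2}$. So the $+\frac14(\nabla V)^2$ in the stated $V_2$ appears to be a dropped sign, and carrying your computation through should give $-\frac14(\nabla V)^2$. The slip is immaterial for the Wegner and Minami arguments, which only use that this term has norm $O(h^2)$.
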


\begin{proof}
To analyze the $j=0$ term of the series in \eqref{QV.exp}, we start with the Taylor expansion of $V$, 
using a multi-index $\alpha = (\alpha_1,\alpha_2)$,
\begin{equation}\label{V.taylor}
\begin{split}
V(y+h^{\frac12}x, h\eta - h^{\frac12} \xi) &= \sum_{\abs{\alpha} \le 5} \frac{1}{\alpha!} h^{\abs{\alpha}/2} 
x^{\alpha_1}(-\xi)^{\alpha_2} \del^\alpha V(y,h\eta) \\
&\qquad+ h^3 r_6(x,y,\xi,\eta),
\end{split}
\end{equation}
where $r_6 \in \brak{(x,\xi)}^6 S(\bbR^4)$, with $ \brak{\eta} := ( |\eta|^2 + 1)^{\frac{1}{2}}$. The contribution to the $j=0$ term $R_n^* W R_n$ in the expansion of  $Q_V(\mu)$ for the index $\alpha$ has 
symbol $\del^\alpha V(y,h\eta)$ with the coefficient
\begin{equation}\label{c.alpha}
c_\alpha = \frac{1}{\alpha!} \int_{\bbR^3} e^{i(x-x')\xi} \paren*{\frac{x+x'}{2}}^{\!\alpha_1}(-\xi)^{\alpha_2} \psi_n(x)
\psi_n(x')\>dx\>dx' \>d\xi
\end{equation}
If $\alpha_2$ is odd, then the integrand in \eqref{c.alpha} is odd with respect to the change of variables $(x,x',\xi) \mapsto (x',x,-\xi)$,
and therefore $c_\alpha = 0$ in this case. Furthermore, if $\alpha_2$ is even and $\alpha_1$ is odd, then the integrand is odd 
with respect to $(x,x',\xi) \mapsto -(x,x',\xi)$.
It follows that $c_\alpha = 0$ unless both $\alpha_1$ and $\alpha_2$ are even. 

Obviously $c_0 = 1$, and the contributions from $\abs{\alpha} =2$ are easily computed.
The harmonic oscillator eigenvalue equation \eqref{L.eigv} gives
\[
c_{(2,0)} = c_{(0,2)} = \frac{2n+1}{4}.
\]
The coefficients $c_\alpha$ for $\alpha = (4,0)$, $(2,2)$, and $(0,4)$ could similarly be computed explicitly from
\eqref{c.alpha}, but the exact form is not important here, as they are constants that depend only on $n$.

To estimate the remainder term in \eqref{V.taylor}, 
note that $\brak{(x,\xi)}^{-6} r_6 \in S(\mathbb{R}^4)$ 
and that $(L+1)\otimes I$ is the quantization of $\brak{(x,\xi)}^2$.
A basic symbol composition result, e.g.~\cite[Theorem~4.18]{Zw12}, shows that $((L+1)^{-3}\otimes I) \Op(r_6)$ has a symbol
in $S(\bbR^4)$, where $\Op(\cdot)$ denotes the classical Weyl quantization of a symbol on $\bbR^4$. 
It follows that 
\[
\norm*{\paren[\big]{(L+1)^{-3}\otimes I} \Op(r_6)}_{L^2(\bbR^2) \to L^2(\bbR^2)} \le C\norm{V}_{C^r_{\rm b}},
\]
by \cite[Theorem~4.23]{Zw12}. By \eqref{L.eigv}, and reminding ourselves that $\|R_n\|=1$, this yields a bound 
\begin{equation}\label{r6.bound}
\norm*{R_n^* \Op(r_6) R_n}_{L^2(\bbR) \to L^2(\bbR)} \le Cn^3 \norm{V}_{C^r_{\rm b}}.
\end{equation}

We can now apply these observations to compute the contribution to \eqref{QV.exp} from the $j=0$ term. 
Applying \eqref{r6.bound} to \eqref{V.taylor} and using the calculations of coefficients above reduces 
the $j=0$ component to
\[
h\frac{2n+1}4 \widehat{\Delta V} + h^2 \sum_{\substack{\abs{\alpha} = 4 \\\alpha_j \text{even}}} c_\alpha \widehat{\del^\alpha V} + O(h^3).
\]

Now consider the $j=1$ term of the series in \eqref{QV.exp}, 
\begin{equation}\label{j1.term}
-hR_n^* W E_0(\mu)W R_n,
\end{equation}
which is nominally $O(h)$. As in \eqref{V.taylor}, we expand
\[
\begin{split}
V(y+h^{\frac12}x, h\eta - h^{\frac12} \xi) &= \sum_{\abs{\alpha} \le 3} \frac{1}{\alpha!} h^{\abs{\alpha}/2} 
x^{\alpha_1}(-\xi)^{\alpha_2} \del^\alpha V(y,h\eta) \\
&\qquad+ h^{2} r_4(x,y,\xi,\eta).
\end{split}
\]
The $\alpha=0$ terms do not contribute to \eqref{j1.term}, because 
\begin{equation}\label{ER.cancel}
E_0(\mu) R_n = 0,\qquad R_n^* E_0(\mu) = 0,
\end{equation}
since $E_0$ projects onto the orthogonal complement of $\psi_n$.
The leading term in \eqref{j1.term} thus comes from the linear terms in the expansions, and is given by
\[
-h^2 R^+_n \Op(x \del_1V - \xi \del_2 V) E_0(\mu) \Op(x \del_1V - \xi \del_2 V)R^-_n.
\]
As above, the computation of coefficients reduces to harmonic oscillator calculations, which
we can facilitate using
\[
\Op(x) = -\frac{1}{2} (J_+ + J_-)\otimes I, \qquad \Op(\xi) = \frac{i}{2} (J_+ - J_-)\otimes I.
\] 
For example, the coefficient of $(\del_1V)^2$ is 
\[
\begin{split}
&\frac14 R_n^* \paren[\big]{(J_+ + J_-)\otimes I} E_0(\mu) \paren[\big]{(J_+ + J_-)\otimes I} R_n \\
&\qquad= 
\frac14 \paren*{\frac{1}{2-h\mu} \brak{\psi_{k+1}, J_+\psi_k}^2 + \frac{1}{-2-h\mu} \brak{\psi_{k-1}, J_-\psi_k}^2} \\
&\qquad= 
\frac14 \paren*{\frac{2(n+1)}{2-h\mu} - \frac{2n}{2+h\mu}} \\
&\qquad=
\frac14 + O(h).
\end{split}
\]
The coefficient of the mixed term $\del_1V\cdot\del_2V$ vanishes, by the calculation,
\[
\begin{split}
& \frac{1}{4} R_n^* \paren[\big]{(J_+ + J_-)\otimes I} \paren*{\frac{\sqrt{2(n+1)}}{2 - h \mu} R_{n+1} -  \frac{\sqrt{2n}}{-2 - h \mu} R_{n-1}} \\
&\qquad + \frac{1}{4} R_n^* \paren[\big]{(J_+ - J_-)\otimes I} \paren*{\frac{\sqrt{2(n+1)}}{2 - h \mu} R_{n+1} +  \frac{\sqrt{2n}}{-2 - h \mu} R_{n-1}} = 0.
\end{split}
\]
The coefficient of $(\del_2V)^2$ is given by
\[
\begin{split}
-\frac14 R_n^* \paren[\big]{(J_+ - J_-)\otimes I} E_0(\mu) \paren[\big]{(J_+ - J_-)\otimes I} R_n &= \frac14 \paren*{\frac{2(n+1)}{2-h\mu} - \frac{2n}{2+h\mu}} \\
&= \frac{1}{4} + O(h).
\end{split}
\]

The coefficients of terms in \eqref{j1.term} of the form $\del^\alpha V D^\beta V$ vanish unless $\abs{\alpha}$ and $\abs{\beta}$ have the same parity, for the same reasons as before. Hence the remaining terms of this type have $\abs{\alpha} + \abs{\beta}$ equal to either $4$ or $6$, and thus 
contribute to $Q_V(\mu)$ terms of order $h^3$ or lower. 

Again, the argument used above for \eqref{r6.bound} yields the estimate
\[
\norm*{R_n^* \Op(r_4) R_n}_{L^2(\bbR) \to L^2(\bbR)} \le Cn^2 \norm{V}_{C^r_{\rm b}}.
\]
Applying these results to \eqref{j1.term} yields the contribution from the $j=1$ term as
\[
\frac{h^2}4 \widehat{(\nabla V)^2} + O(h^3).  
\]

Finally, consider the remaining terms in \eqref{QV.exp} with $j \ge 2$. 
For these terms we only need to apply the leading approximation,
\[
V(y+h^{\frac12}x, h\eta - h^{\frac12} \xi) = V(y,h\eta) + h^{1/2} r_1(x,y,\xi,\eta),
\]
to the the outer pair of $W$ operators. Then by \eqref{ER.cancel} we have
\[
\begin{split}
&R_n^* W \paren[\big]{E_0(\mu) W}^j R_n \\
&\qquad = 
R_n^* \paren*{\widehat{V} + h^{1/2} \Op(r_1)} \paren[\big]{E_0(\mu) W}^{j-1} E_0(\mu)  \paren*{\widehat{V} + h^{1/2} \Op(r_1)} R_n \\
&\qquad = h^{1} R_n^* \Op(r_1) \paren[\big]{E_0(\mu) W}^{j-1} E_0(\mu) \Op(r_1) R_n.
\end{split}
\]
As above, we note that $\brak{(x,\xi)}^{-1}r_1 \in S(\bbR^4)$ and so $((L+1)^{-1/2}\otimes I) \Op(r_1)$, which gives
an estimate
\[
\norm{\Op(r_1) R_n}_{L^2(\bbR^2) \to L^2(\bbR^2)} \le Cn^{1/2} \norm{V}_{C^r_{\rm b}}.
\]
Using the bound \eqref{W.op.bnd} and the fact that $\norm{E_0(\mu)} \le 1$ for $h <1$ yields
\[
\norm*{R_n^* W \paren[\big]{E_0(\mu) W}^j R_n} \le Ch \norm{V}_{C^r_{\rm b}}^{j+1}
\]
for $j\ge 2$, where $C$ depends only on $n$. For $h \le h_0$ with $h_0$ sufficiently small, depending on $\norm{V}_{C^r_{\rm b}}$,
we can sum these estimates to obtain the bound,
\[
\sum_{j=2}^\infty h^j \norm*{R_n^* W \paren[\big]{E_0(\mu) W}^j R_n} = O(h^3).
\] 
\end{proof}

Proposition \ref{Qexpand.prop} is our main technical result on the effective Hamiltonian that will be used extensively in the following sections. 


\section{Spectral analysis of single-site operators}\label{lattice.sec}

The leading terms in the effective Hamiltonian $\widehat{V}$ and $\widehat{V}_1$, given in Proposition \ref{Qexpand.prop}, are linear in $V$, and thus can be written as a sum over operators attached to individual lattice sites in $\mathbb{Z}^2$. Our goal in this section is to develop the results that will allow us to 
compare the spectral properties of the effective Hamiltonian to the spectrum of these single lattice site operators. We present the general theory in this section and apply the results to the random Landau model in section \ref{landau.sec}.

We start with a real-valued symbol $a_0 \in \cinf_0(\bbR^2)$ satisfying 
\[
\supp a_0 \subset (-\tfrac12,\tfrac12)^2.
\]
For $j \in \bbZ^2$ we define the shifted symbols
\[
a_j(x,y) := a_0(x-j_1,y-j_2)
\]
and consider the semiclassical Weyl quantization of these symbols acting on $L^2(\bbR)$,
\[
\widehat{a}_j u(y) := \frac{1}{2\pi h} \int_\bbR e^{i(y-y')\eta/h} a_j(\tfrac{y+y'}2,\eta) u(y')\>dy'\>d\eta.
\] 
\begin{lemma}\label{aj.spec.lemma}
The operators $\widehat{a}_j$ are self-adjoint and Hilbert-Schmidt, and 
the spectrum $\sigma(\widehat{a}_j)$ is independent of $j$.
\end{lemma}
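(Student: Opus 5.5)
Three things need to be checked: self-adjointness, the Hilbert--Schmidt property, and unitary equivalence of the $\widehat{a}_j$ for different $j$. Each one follows from the explicit kernel of the Weyl quantization. Write $\widehat{a}_j$ as an integral operator with kernel
\[
K_j(y,y') = \frac{1}{2\pi h}\int_{\bbR} e^{i(y-y')\eta/h}\, a_j\paren*{\tfrac{y+y'}{2},\eta}\,d\eta
= \frac{1}{2\pi h}\,\mathcal{F}_2 a_j\paren*{\tfrac{y+y'}{2}, -\tfrac{y-y'}{h}},
\]
where $\mathcal{F}_2$ is the Fourier transform in the second variable. Since $a_0\in\cinf_0(\bbR^2)$, this kernel is a Schwartz function of $(y,y')$. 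Because $a_j$ is real, $\overline{K_j(y',y)} = K_j(y,y')$, so $\widehat{a}_j$ is bounded and symmetric, and therefore self-adjoint.

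For the Hilbert--Schmidt property, I will change variables to $s=(y+y')/2$ and $t=y-y'$ (Jacobian $1$) and apply Plancherel in $t$. This gives
\[
\norm{K_j}_{L^2(\bbR^2)}^2 = \frac{1}{2\pi h}\norm{a_j}_{L^2(\bbR^2)}^2 < \infty,
\]
so $\widehat{a}_j$ is Hilbert--Schmidt, with Hilbert--Schmidt norm independent of $j$. Being compact and self-adjoint, $\widehat{a}_j$ has spectrum consisting of real eigenvalues accumulating only at $0$, together with $0$ itself.

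For the independence of the spectrum, I will conjugate by phase-space translations. Let $T_{j_1}u(y)=u(y-j_1)$ and $M_{j_2}u(y)=e^{i j_2 y/h}u(y)$; both are unitary on $L^2(\bbR)$, and so is their product $U_j := M_{j_2}T_{j_1}$. A direct substitution in the Weyl integral checks the exact covariance
\[
U_j\,\widehat{a}_0\,U_j^{*} = \widehat{a}_j .
\]
On the translation side, substituting $y\mapsto y-j_1$ and $y'\mapsto y'-j_1$ shifts the midpoint $\tfrac{y+y'}{2}$ by $j_1$. On the modulation side, the factor $e^{ij_2(y-y')/h}$ combines with the phase $e^{i(y-y')\eta/h}$, and the substitution $\eta\mapsto \eta - j_2$ turns $a_0(\cdot,\eta)$ into $a_0(\cdot,\eta-j_2)$. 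This is the standard exact Egorov property of the Weyl quantization under the Heisenberg group, cf.\ \cite[\S4]{Zw12}. Unitarily equivalent operators have the same spectrum, with multiplicities, so $\sigma(\widehat{a}_j)=\sigma(\widehat{a}_0)$.

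The main obstacle is getting the signs and the factor of $h$ in the modulation right: the momentum shift by $j_2$ must come from $e^{ij_2y/h}$, not $e^{ij_2y}$, to match the semiclassical normalization. Everything else is routine.
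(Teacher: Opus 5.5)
Your proof is correct and follows the paper's argument. Realness of the symbol gives self-adjointness, the Schwartz (hence $L^2$) kernel gives the Hilbert--Schmidt property, and conjugation by the phase-space translation $U_j = M_{j_2}T_{j_1}$ gives unitary equivalence with $\widehat{a}_0$; the paper's map $u(y)\mapsto e^{-ij_2y/h}u(y+j_1)$ is exactly $U_j^{*}$ up to the constant phase $e^{ij_1j_2/h}$, and your explicit checks of $\norm{K_j}_{L^2}^2=(2\pi h)^{-1}\norm{a_j}_{L^2}^2$ and of $U_j\widehat{a}_0U_j^{*}=\widehat{a}_j$ are both right.
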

\begin{proof}
The Weyl quantization of a real-valued symbol yields a self-adjoint operator. 
The Hilbert-Schmidt property follows from the observation that the integral kernel of $\widehat{a}_j$ lies in $\calS(\bbR^2)$.
The fact that the operators have spectrum independent of $j$ follows from the unitary map 
\[
u(y) \mapsto e^{-ij_2y/h} u(y+j_1).
\]
which shows that $\widehat{a}_j$ is unitarily equivalent to $\widehat{a}_0$.
\end{proof}

Our primary goal here is to relate the spectrum of the operator
\beq\label{eq:sss1}
A:= \sum_{j\in \widetilde{\Lambda}} \widehat{a}_j
\eeq
to the spectrum of the individual operators $\widehat{a}_j$, where  $\widetilde{\Lambda} := \Lambda \cap \Z^2$.  In the next section we will let each 
$a_j$ depend on a random parameter, but the randomness is not relevant for the estimates in this section.
The region $\Lambda \subset \R^2$ is fixed for this entire discussion, but it is important for later applications that certain estimates are independent of $\Lambda$.

The assumption of disjoint supports yields the following:
\begin{lemma}\label{site.sep.lemma}
For all $N>0$, there exists $C_N$ depending only on $a_0$ such that
\[
\norm*{\widehat{a}_i \widehat{a}_j} \le C_N h^N \abs{i-j}^{-N}
\]
for $h < 1$.
\end{lemma}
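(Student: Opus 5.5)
The plan is to use the semiclassical composition formula for Weyl quantizations together with the disjointness of the supports. The product $\widehat{a}_i\widehat{a}_j$ is the Weyl quantization of the Moyal product $a_i \# a_j$. Its integral formula is
\[
a_i\# a_j(z) = \frac{1}{(\pi h)^2}\int_{\bbR^4} e^{\frac{2i}{h}\sigma(w_1-z,\,w_2-z)}\, a_i(w_1) a_j(w_2)\, dw_1\, dw_2 ,
\]
with $z=(y,\eta)$ and $\sigma$ the standard symplectic form. The supports of $a_i$ and $a_j$ are compact and separated by a distance $d \ge c\abs{i-j}$ when $i\ne j$. Indeed, for $\abs{i-j}\ge 2$ one has $d \ge \abs{i-j}-1 \ge \tfrac12\abs{i-j}$. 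For $\abs{i-j}=1$, we use that $\supp a_0$ is a compact subset of the open square, so $d\ge \delta>0$ with $\delta$ depending only on $a_0$. The case $i=j$ is excluded, or handled trivially, since the bound is only meaningful for $i\neq j$.

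\textbf{Step 1: reduce to the case $j=0$.} The unitary in the proof of Lemma~\ref{aj.spec.lemma} is the quantization of a phase-space translation. Conjugating by it gives $\norm{\widehat a_i\widehat a_j} = \norm{\widehat a_{i-j}\widehat a_0}$, so it suffices to bound $\norm{\widehat{a}_k\widehat{a}_0}$ for $k\neq 0$.

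\textbf{Step 2: bound the Moyal product.} I would use the standard disjoint-support estimate (e.g.\ \cite[Theorem~4.12/4.18]{Zw12}), or prove it directly by integration by parts in the formula above. In the variables $w_1,w_2$, the phase $\phi=\sigma(w_1-z,w_2-z)$ has gradient $(J(w_2-z), -J(w_1-z))$, up to sign conventions. Since $w_1$ and $w_2$ are at distance at least $d$, we get $\abs{\nabla_{w}\phi} \ge \tfrac12\abs{w_1-w_2} \ge d/2$. Moreover, $\abs{\nabla_w\phi}\gtrsim \abs{z}+\abs{k}$ once $z$ is far from both supports. Repeated integration by parts with the operator $\frac{h}{2i}\frac{\nabla\phi\cdot\nabla}{\abs{\nabla\phi}^2}$ then gives
\[
\abs{\partial_z^\beta (a_k\# a_0)(z)} \le C_{N,\beta}\, h^{N}\, \paren[\big]{\abs{k}+\abs{z}}^{-N} \qquad (k\ne0).
\]
Each integration by parts gains a factor $h/\abs{\nabla\phi}$. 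The derivatives that fall on $1/\abs{\nabla\phi}^2$ or on $a$ are bounded because the supports are compact and $\abs{\nabla \phi}$ is bounded below. The constants depend only on finitely many derivatives of $a_0$.

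\textbf{Step 3: pass from the symbol to the operator norm.} Applying Calderón–Vaillancourt \cite[Theorem~4.23]{Zw12}, the operator norm is controlled by finitely many $C_b$-seminorms of the symbol. These seminorms are $O(h^N\abs{k}^{-N})$ by Step 2. Alternatively, since the symbol is rapidly decaying, one can bound the Hilbert–Schmidt norm, which is $(2\pi h)^{-1/2}\norm{a_k\#a_0}_{L^2}$. This loses only $h^{-1/2}$, which is absorbed by taking $N$ larger.

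The main obstacle is Step 2. The nonstationary phase estimate must be uniform in $z$ and give decay in $\abs{k}$, and it must also treat the neighbouring case $\abs{k}=1$, where the separation is only $\delta$. What I would try first is to split the region of integration according to whether $\abs{z}$ is large relative to $\abs{k}$. In the region where $\abs z$ is large, $\abs{\nabla_w\phi}$ is at least $c(\abs{z}+\abs{k})$, and the argument is straightforward. In the remaining compact region, $\abs{\nabla_w\phi}\ge d/2 \gtrsim \abs{k}$, which yields the $\abs{k}^{-N}$ factor.
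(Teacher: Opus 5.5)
Your argument is correct, but it is organized differently from the paper's. The paper never opens up the Moyal product. It quotes two separate operator-norm bounds from \cite{Zw12}: the mixed-term (almost-orthogonality) bound $\norm{\widehat a_i\widehat a_j}\le C'_N\abs{i-j}^{-N}$ with $C'_N$ independent of $h$ (Theorem~4.22), and the disjoint-support bound $\norm{\widehat a_i\widehat a_j}\le C''_N h^N$ (Theorem~4.12). It then takes the geometric mean of the two, which only costs replacing $N$ by $2N$. You instead get both gains at once from a single non-stationary phase argument on the exact integral for $a_k\# a_0$. There you use $\abs{\nabla_w\phi}^2=\abs{w_1-z}^2+\abs{w_2-z}^2\gtrsim(1+\abs{k}+\abs{z})^2$ on the support of the amplitude, and then apply Calder\'on--Vaillancourt once. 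The paper's route is shorter and purely citational. Yours is self-contained and yields the stronger symbol estimate $\abs{\partial_z^\beta(a_k\# a_0)(z)}\le C_{N,\beta}h^N(1+\abs{k}+\abs{z})^{-N}$. It also makes explicit that the constants are uniform over all pairs $(i,j)$ and depend only on $a_0$. The paper needs this uniformity tacitly when it treats $C''_N$ as independent of $i,j$.

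There are two small bookkeeping points. First, the prefactor $(\pi h)^{-2}$, and any $z$-derivatives that fall on the phase, cost powers of $h^{-1}$. You absorb these by integrating by parts a few extra times, or by substituting $w_\ell=z+u_\ell$ so that $z$-derivatives land on the amplitude. Second, your separation inequality $d\ge\abs{i-j}-1\ge\tfrac12\abs{i-j}$ is not right as stated.
\begin{itemize}
\item For $i-j=(2,1)$ the enclosing unit squares are at distance exactly $1$. So if $\supp a_0$ nearly fills the square, $d$ is only slightly larger than $1<\tfrac12\sqrt{5}$.
\item Diagonal neighbours with $\abs{i-j}=\sqrt{2}$ fall into neither of your two cases.
\end{itemize}
What is true, and is all you need, is the following:
\begin{itemize}
\item $d\ge\abs{i-j}_\infty-1\ge\tfrac{1}{2\sqrt{2}}\abs{i-j}$ when $\abs{i-j}_\infty\ge2$;
\item $d\ge\delta>0$, with $\delta$ depending only on $a_0$, when $\abs{i-j}_\infty=1$.
\end{itemize}
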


\begin{proof}
For all $N>0$, the mixed-term upper bound of \cite[Theorem 4.22]{Zw12} gives an estimate
\[
\norm*{\widehat{a}_i \widehat{a}_j } \le C'_N \abs{i-j}^{-N},
\]
with $C'_N$ independent of $h$.  Furthermore, the standard result on Moyal products $a \# b$ for symbols 
with disjoint supports \cite[Theorem 4.12]{Zw12} yields an estimate
\[
\norm*{\widehat{a}_i \widehat{a}_j} \le C''_N h^N.
\]  
Taking the square root of the product of these upper bounds gives our desired result.
\end{proof}

To isolate the individual single-site operators, we introduce a cutoff $\chi_0 \in \cinf_0(\bbR^2)$ with 
$\supp \chi_0 \in (-\tfrac12, \tfrac12)^2$ and $\chi_0 =1$ on some open neighborhood of $\supp a_0$. The operator 
\begin{equation}\label{Pj.def}
P_j := \widehat{\chi_j}
\end{equation}
will serve as a localization onto site $j$. The product $P_iP_j$ satisfies the estimate of Lemma~\ref{site.sep.lemma}.
We also define the remainder operator
\[
Q := 1 - \sum_{j \in \widetilde{\Lambda}} P_j.
\]
The operator $Q$ is the quantization of the symbol $1 - \sum \chi_j$, which has support disjoint from that of any of the symbols $a_j$.

To simplify the estimates below, we use $O(h^\infty)$ to indicate a bound by $C_Nh^N$ for all $h<1$ and 
$N>0$, where $C_N$ is allowed to depend only on the symbol $a_0$ and not on $\Lambda$. 

\begin{lemma}\label{site.sep.lemma2}
Let $A$ be the operator sum in \eqref{eq:sss1}. For each $j \in \Lambda$,
\[
P_j A  = \widehat{a}_j + O(h^\infty),
\]
and the remainder $Q$ satisfies
\[
\norm{QA}  = O(h^{\infty}).
\]
\end{lemma}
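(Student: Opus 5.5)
We write $P_jA = P_j\widehat{a}_j + \sum_{i\in\widetilde\Lambda,\, i\ne j} P_j\widehat{a}_i$ and treat the diagonal term and the off-diagonal sum separately. The estimates must hold uniformly in $\Lambda$: every constant should depend only on $a_0$ and $\chi_0$.

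\emph{Diagonal term.} Since $\chi_0 = 1$ on a neighborhood of $\supp a_0$, the symbol $1-\chi_j$ vanishes near $\supp a_j$. The disjoint-support Moyal product result \cite[Theorem 4.12]{Zw12} then gives
\[
\widehat{(1-\chi_j)}\,\widehat{a}_j = O(h^\infty).
\]
Hence $P_j\widehat{a}_j = \widehat{a}_j + O(h^\infty)$. The constants are independent of $j$, by the unitary equivalence used in Lemma~\ref{aj.spec.lemma}, which conjugates $P_j$ and $\widehat{a}_j$ simultaneously to $P_0$ and $\widehat{a}_0$.

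\emph{Off-diagonal sum.} For $i\ne j$ the symbols $\chi_j$ and $a_i$ have disjoint supports separated by a fixed positive distance, and this distance grows like $\abs{i-j}$. The proof of Lemma~\ref{site.sep.lemma} applies verbatim to the pair $(\chi_j, a_i)$. Interpolating \cite[Theorem 4.22]{Zw12} with \cite[Theorem 4.12]{Zw12} gives
\[
\norm{P_j\widehat{a}_i}\le C_N h^N\abs{i-j}^{-N}.
\]
Taking $N\ge 3$, the sum $\sum_{i\ne j}\abs{i-j}^{-N}$ over $\bbZ^2$ converges, so the whole off-diagonal sum is $O(h^\infty)$ with constants independent of $\Lambda$. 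This proves the first claim.

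\emph{The remainder $Q$.} Write
\[
QA=\sum_{i\in\widetilde\Lambda}\widehat{q}\,\widehat{a}_i, \qquad q := 1-\sum_{j\in\widetilde\Lambda}\chi_j .
\]
The symbol $q$ lies in $S(\bbR^2)$ uniformly in $\Lambda$, because the $\chi_j$ have disjoint supports, so $q$ and all its derivatives are bounded independently of $\Lambda$. Moreover $q$ vanishes on a fixed neighborhood of each $\supp a_i$.

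Summing individual $O(h^\infty)$ bounds for $\widehat{q}\,\widehat{a}_i$ over $i$ would produce a factor $\abs{\widetilde\Lambda}$, so we avoid it. Instead we view $\sum_i a_i$ as a single symbol $a_\Lambda\in S(\bbR^2)$, again with seminorms bounded uniformly in $\Lambda$ by disjointness of supports. Since $\supp q$ and $\supp a_\Lambda$ are separated by a fixed distance $\delta>0$, the disjoint-support composition theorem \cite[Theorem 4.12]{Zw12} gives $q\# a_\Lambda = O(h^\infty)$ in $S(\bbR^2)$. Here we need that the seminorm constants in that theorem depend only on $\delta$ and the seminorms of $q$ and $a_\Lambda$. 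Calder\'on--Vaillancourt \cite[Theorem 4.23]{Zw12} then yields $\norm{QA}=O(h^\infty)$.

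\emph{Main obstacle.} The hard part is this $\Lambda$-uniformity: checking that the composition theorem's constants depend only on finitely many seminorms and on the separation $\delta$, not on the support size. If that is awkward, the fallback is to write $\widehat{q}=1-\sum_j P_j$. Then
\[
QA = A-\sum_j P_jA,
\]
and the estimate reduces to almost-orthogonality (Cotlar--Stein) over the sites. For this we use the bounds $\norm{P_j\widehat{a}_i}\le C_Nh^N\abs{i-j}^{-N}$ together with the first claim.
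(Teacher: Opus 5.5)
Your proposal is correct and follows the paper's proof. The paper likewise sums the off-diagonal bounds from Lemma~\ref{site.sep.lemma} over $\bbZ^2$ with $N>2$, handles the diagonal term via the disjoint supports of $1-\chi_j$ and $a_j$, and gets $QA$ from the disjoint supports of the symbols; your reading of $QA$ as the single composition $\widehat{q}\,\widehat{a_\Lambda}$, with $\Lambda$-uniform seminorms and a fixed separation $\delta$, only makes explicit the $\Lambda$-uniformity the paper leaves implicit in its one-line appeal to \cite[Theorem 4.25]{Zw12}, so the Cotlar--Stein fallback is not needed.
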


\begin{proof}
For $j \in \Lambda$, applying $P_j$ to $A$ gives
\[
P_j A  = \sum_{k \in \widetilde{\Lambda}} P_j \widehat{a}_k.
\]
Lemma~\ref{site.sep.lemma} gives an estimate for any $N>0$,
\[
\norm*{P_j \widehat{a}_k} \le C_N h^N \abs{j-k}^{-N}.
\]
Assuming $N>2$, we can sum the error terms over all $j \ne k$ in $\bbZ^2$ to obtain a bound independent of $\Lambda$:
\[
\sum_{j\ne k}\norm*{P_j \widehat{a}_k} = O(h^\infty).
\]
This leaves the diagonal term $P_j \widehat{a}_j$. Since $1-P_j$ is the quantization of a symbol with support disjoint from that of $a_j$,
we can replace $P_j \widehat{a}_j$ by $\widehat{a}_j$ with an error estimate of the form
\[
\norm*{\widehat{a}_j-P_j\widehat{a}_j}  = O(h^\infty).
\]
The estimate of $\| QA \|$ follows from the disjoint supports of the symbols, as in 
\cite[Theorem 4.25]{Zw12}.
\end{proof}

After these preliminaries concerning the  localization of the operators $\widehat{a}_j$, we study the spectrum of $A$. For simplicity of notation, we write $[\mu \pm \delta]$ for the closed interval $[ \mu - \delta, \mu + \delta]$, and for an interval $I = [a,b] \subset \R$, we write $[I \pm \delta]$ for the closed interval $[a - \delta, b+\delta]$ obtained by enlarging $I$ by $\delta$.   

\begin{proposition}\label{pre.wegner.prop}
For $\mu_0 \ge b_0 >0$ and $0<\delta<b_0/2$
depending on $b_0$ and $a_0$ but not on $\Lambda$, such that if 
\begin{equation}\label{aj.site.hyp}
\sigma(\widehat{a}_j)\cap [\mu_0\pm (\delta + r(h))] = \emptyset\quad\text{for all }j \in \Lambda,
\end{equation}
then 
\[
\sigma(A) \cap [\mu_0\pm\delta] = \emptyset.
\]
\end{proposition}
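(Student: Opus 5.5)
We argue by contradiction via approximate eigenvectors, with $r(h) = O(h^\infty)$ the (statement's) error function, chosen as the $O(h^\infty)$ bounds from Lemma~\ref{site.sep.lemma2} times a constant depending only on $b_0$ and $a_0$. Suppose $\mu \in \sigma(A)\cap[\mu_0\pm\delta]$. Since $A$ is self-adjoint, there is a unit vector $u$ with $\norm{(A-\mu)u} \le \epsilon$ for any $\epsilon>0$; we take $\epsilon = h^N$, say. The aim is to localize $u$ to a single site $j$ and produce an approximate eigenvector of $\widehat{a}_j$ with eigenvalue within $\delta + r(h)$ of $\mu_0$, contradicting \eqref{aj.site.hyp}.

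\textbf{Localization.} First, $Q$ carries little mass. We have $\mu Qu = QAu - Q(A-\mu)u$, so $\norm{Qu} \le \mu^{-1}(\norm{QA} + \norm{Q}\epsilon)$. Since $\mu \ge \mu_0-\delta \ge b_0/2$ and $\norm{QA} = O(h^\infty)$ by Lemma~\ref{site.sep.lemma2}, this gives $\norm{Qu} = O(h^\infty)$. Hence $\sum_j P_j u = u + O(h^\infty)$. Similarly, Lemma~\ref{site.sep.lemma2} gives
\[
\mu P_j u = P_jAu - P_j(A-\mu)u = \widehat{a}_j u + P_j(A-\mu)u + O(h^\infty),
\]
so $\widehat{a}_j u \approx \mu P_j u$.

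\textbf{Reduction to one site.} Apply $\widehat{a}_j$ once more. Because $\widehat{a}_jP_j = \widehat{a}_j + O(h^\infty)$ (disjoint support of $1-\chi_j$ and $a_j$), we get
\[
(\widehat{a}_j - \mu)\widehat{a}_j u = \widehat{a}_j(A-\mu)u\,\text{-type terms} + O(h^\infty),
\]
so $v_j := \widehat{a}_j u$ satisfies $\norm{(\widehat{a}_j-\mu)v_j} \le \norm{\widehat{a}_j}\,\norm{P_j(A-\mu)u} + O(h^\infty)$, with a uniform constant. To find a site where $v_j$ is not small, we sum over $j$. Using almost orthogonality (Lemma~\ref{site.sep.lemma}, with summable off-diagonal errors $C_Nh^N\abs{i-j}^{-N}$ independent of $\Lambda$), we have $\sum_j \norm{P_j u}^2 \ge c\norm{\sum_j P_j u}^2 - O(h^\infty)$. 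Here I expect to need $\chi_0$ with $0\le\chi_0\le1$, so that $\sum_j P_j^2 \approx \sum_j P_j$ is nearly the identity on $u$ up to $O(h)$ errors that can be absorbed. Likewise, $\sum_j\norm{P_j(A-\mu)u}^2 \le C\epsilon^2$. By pigeonhole there is a $j$ with
\[
\norm{P_j(A-\mu)u} \le C'\epsilon\,\norm{P_ju}/c \quad\text{and}\quad \norm{P_ju}\ \text{not negligible}.
\]
Then $\norm{v_j} \approx \mu\norm{P_ju}$, and $v_j$ is an approximate eigenvector of the self-adjoint $\widehat{a}_j$ with relative error $O(h^\infty)/\norm{P_ju} + C\epsilon$. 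Hence $\operatorname{dist}(\mu,\sigma(\widehat{a}_j)) \le r(h)$, which places a point of $\sigma(\widehat{a}_j)$ in $[\mu_0\pm(\delta+r(h))]$, a contradiction.

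\textbf{Main obstacle.} The delicate step is the uniform-in-$\Lambda$ pigeonhole: the absolute $O(h^\infty)$ errors must be compared with $\norm{P_ju}$ at a single site. That mass could be as small as about $\abs{\Lambda}^{-1/2}$, which would make $r(h)$ depend on $\Lambda$. I would avoid this by working with the weighted inequality
\[
\sum_j \norm{(\widehat{a}_j-\mu)v_j}^2 \le \paren[\big]{C\epsilon^2 + O(h^\infty)}\sum_j\norm{v_j}^2 .
\]
The point is that each error term at site $j$ is controlled by local quantities, namely $\norm{P_ju}$ and neighboring sites weighted by $\abs{i-j}^{-N}$, rather than by the global norm $\norm{u}$. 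Choosing a site where the ratio $\norm{(\widehat{a}_j-\mu)v_j}/\norm{v_j}$ is at most the average then gives an error independent of $\Lambda$.
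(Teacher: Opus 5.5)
Your final argument is essentially the paper's proof. The paper takes an exact eigenvector $\psi$ (available because $A$ is a finite sum of Hilbert--Schmidt operators and $\lambda\ge b_0/2>0$), sets $\psi_j=P_j\psi$, and disposes of $Q\psi$ exactly as you do. It then concludes by the same weighted pigeonhole, $\min_j d(\lambda,\sigma(\widehat a_j))^2\le \sum_j\norm{(\widehat a_j-\lambda)\psi_j}^2\big/\sum_j\norm{\psi_j}^2=O(h^\infty)$.

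Your detour through $v_j=\widehat a_j u$, the parameter $\epsilon$, and the first ``non-negligible mass'' pigeonhole are unnecessary. The $\Lambda$-uniform summability of the off-diagonal errors, which you single out as the crux, is also the point the paper rests on; it likewise asserts this from the decay in Lemma~\ref{site.sep.lemma} rather than proving it in detail.
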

\begin{proof}
We derive the result by contrapositive. 
Suppose there exists $\lambda \in \sigma(A)$ with $\abs{\lambda - \mu_0} \le \delta$. 
Let $\psi$ be a corresponding eigenvector, 
\[
(A-\lambda)\psi = 0\quad\text{with }\norm{\psi}=1.
\]
Setting $\psi_j := P_j\psi$ gives a decomposition
\[
\psi = \sum_{j \in \Lambda} \psi_j + Q\psi.
\]
The remainder term can be estimated by applying $Q$ to the eigenvalue equation, yielding an estimate
\[
\norm{Q\psi} = \frac{1}{\lambda} \norm[\big]{QA\psi} \le \frac{2}{b_0} O(h^\infty),
\]
by Lemma~\ref{site.sep.lemma2}. We will write this as 
\begin{equation}\label{psi.exp}
\psi = \sum_{j \in \widetilde{\Lambda}} \psi_j + O(h^\infty),
\end{equation}
where here and for the remainder of the proof, $O(h^\infty)$ denotes an estimate by $C_Nh^N$ for any $N$,
with $C_N$ depending only on $a_0$ and $b_0$. 

Applying $A$ to $\psi$ thus gives
\begin{equation}\label{Aw.jk}
A\psi = \sum_{j,k \in \widetilde{\Lambda}} \widehat{a}_j \psi_k + O(h^\infty).
\end{equation} 
By Lemma~\ref{site.sep.lemma}, we have estimates
\[
\norm*{\widehat{a}_j P_k} \le C_Nh^n \abs{j-k}^{-N}.
\]
Hence for $N>2$ the estimates of the off-diagonal terms in \eqref{Aw.jk} can be summed over all $j \ne k \in \bbZ^2$ 
to obtain
\[
A\psi = \sum_{j \in \widetilde{\Lambda}} \widehat{a}_j \psi_j + O(h^\infty),
\]
with estimates independent of $\Lambda$.

Using \eqref{psi.exp} and \eqref{Aw.jk} along the eigenvalue equation gives an estimate
\[
0 = \norm[\big]{(A - \lambda)\psi}^2 
= \sum_{j,k\in \widetilde{\Lambda}} \brak[\big]{(\widehat{a}_j - \lambda)\psi_j, (\widehat{a}_k - \lambda)\psi_k} + O(h^\infty)
\]
As above, estimates of the off-diagonal terms in the inner products can be summed to produce error estimates independent of $\Lambda$,
yielding
\begin{equation}\label{norm.ajpsij}
\sum_{j\in \widetilde{\Lambda}} \norm[\big]{(\widehat{a}_j - \lambda)\psi_j}^2 = O(h^\infty).
\end{equation}
With similar off-diagonal estimates we can deduce from \eqref{psi.exp} that
\begin{equation}\label{norm.psij}
\sum_{j \in \widetilde{\Lambda}} \norm{\psi_j}^2 = 1 + O(h^\infty).
\end{equation}

By the spectral theorem, 
\[
\norm[\big]{(\widehat{a}_j - \lambda)\psi_j} \ge d(\lambda,\sigma(\widehat{a}_j)) \norm{\psi_j}.
\]
Hence, from \eqref{norm.ajpsij} and \eqref{norm.psij} we can deduce that 
\[
\min_{j\in\Lambda} d(\lambda,\sigma(\widehat{a}_j)) = O(h^\infty).
\]
The function $r(h)$ is now chosen to match the estimate on the right side.

To summarize, we have shown that the existence of 
$\lambda \in \sigma(A_\omega)$ with $\abs{\lambda - \mu_0} \le \delta$ implies that
\[
\min_{j\in\Lambda} d(\lambda,\sigma(\widehat{a}_j)) \le r(h).
\]
The result follows, since the hypothesis \eqref{aj.site.hyp} implies that
$d(\lambda, \sigma(\widehat{a}_j)) > r(h)$ for all $j \in \Lambda$.
\end{proof}

\begin{proposition}\label{pre.minami.prop}
For $\mu_0 \ge b_0 >0$ and $0<\delta<b_0/2$, there exists a function $r(h) = O(h^\infty)$, 
depending on $b_0$ and $a_0$ but not on $\Lambda$, such that if 
\begin{equation}\label{aj.site.hyp1}
\sigma(\widehat{a}_j)\cap [\mu_0\pm (\delta + r(h))] = \emptyset\quad\text{for all but one }j \in \Lambda,
\end{equation}
and  for the exceptional case $\sigma(\widehat{a}_j)\cap [\mu_0\pm (\delta + r(h))]$ contains a single simple eigenvalue,
then $A$ has at most one simple eigenvalue in the range $[\mu_0\pm\delta]$.
\end{proposition}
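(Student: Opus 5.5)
We argue by contradiction, reusing the localization machinery from the proof of Proposition~\ref{pre.wegner.prop}. Let $j_0$ denote the exceptional site and $\mu_1$ the simple eigenvalue of $\widehat{a}_{j_0}$ in $[\mu_0\pm(\delta+r(h))]$, with normalized eigenvector $\phi$. Suppose $A$ has at least two eigenvalues, counted with multiplicity, in $[\mu_0\pm\delta]$. Then there are orthonormal $\psi,\psi'$ with $A\psi=\lambda\psi$ and $A\psi'=\lambda'\psi'$, where $\lambda,\lambda'\in[\mu_0\pm\delta]$. This covers both a multiple eigenvalue and two distinct ones. The goal is to show that both $\psi$ and $\psi'$ are $O(h^\infty)$-close to multiples of $P_{j_0}^{*}$-type localizations of $\phi$, which is impossible for orthonormal vectors.

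First, we run the argument of Proposition~\ref{pre.wegner.prop} for $\psi$. Setting $\psi_j=P_j\psi$, we obtain \eqref{psi.exp}, \eqref{norm.ajpsij} and \eqref{norm.psij}. For $j\ne j_0$ we have $d(\lambda,\sigma(\widehat{a}_j))\ge r(h)$, and in fact $\ge \delta+r(h)-\delta$. A careful choice of $r(h)$ should give more. The cleanest route is to take $r(h)=h^{M}$ for a suitable fixed $M$ relative to the $O(h^\infty)$ errors, i.e.\ to choose $r(h)$ so that $r(h)^{-2}$ times the error in \eqref{norm.ajpsij} is still $O(h^\infty)$. Then $\|\psi_j\|\le r(h)^{-1}\|(\widehat a_j-\lambda)\psi_j\|$, and summing gives $\sum_{j\ne j_0}\|\psi_j\|^2=O(h^\infty)$. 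Hence $\psi=\psi_{j_0}+O(h^\infty)$, with estimates independent of $\Lambda$ because the sums are controlled as in \eqref{norm.ajpsij}.

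Next, at the site $j_0$ we decompose $\psi_{j_0}=c\phi+\phi^\perp$ in the spectral decomposition of $\widehat{a}_{j_0}$. All other spectrum of $\widehat{a}_{j_0}$ lies outside $[\mu_0\pm(\delta+r(h))]$, so it is at distance $\ge r(h)$ from $\lambda$. Then $\|(\widehat a_{j_0}-\lambda)\psi_{j_0}\|^2\ge r(h)^2\|\phi^\perp\|^2$ gives $\|\phi^\perp\|=O(h^\infty)$. Thus $\psi=c\phi+O(h^\infty)$ with $|c|=1+O(h^\infty)$, and likewise $\psi'=c'\phi+O(h^\infty)$. Then $0=\langle\psi,\psi'\rangle=c\bar c'+O(h^\infty)$, which contradicts $|c\bar c'|\approx1$ for $h$ small. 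For $h$ not small, $r(h)$ can be taken large, so the hypothesis is vacuous or the statement is trivial. It follows that $A$ has at most one eigenvalue, counted with multiplicity, in $[\mu_0\pm\delta]$, which is therefore simple.

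The main obstacle is the quantitative tradeoff in choosing $r(h)$. It must be $O(h^\infty)$ yet large enough that dividing the $O(h^\infty)$ errors by $r(h)^2$ still leaves them negligible. We also need the constants in those errors to be independent of $\Lambda$ and of $\lambda$. I would first try a diagonal choice: let $\epsilon(h)$ bound the total error, and set $r(h)=\epsilon(h)^{1/4}$, which is still $O(h^\infty)$. Then $\epsilon/r^2=\epsilon^{1/2}=O(h^\infty)$, making every step above uniform.
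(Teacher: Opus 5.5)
Your argument is correct, but it takes a different route from the paper.

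\textbf{The paper's route.} The paper removes the exceptional eigenvalue with the rank-one perturbation $\widehat{a}_0' = \widehat{a}_0 - \mu\,\phi\otimes\overline{\phi}$. It uses the localization \eqref{p0.phi.loc} of $\phi$ to rerun Proposition~\ref{pre.wegner.prop} for $A' = A - \mu\,\phi\otimes\overline{\phi}$, so $A'$ has no spectrum in $[\mu_0\pm\delta]$. It then transfers this back to $A$ through the min-max interlacing $\lambda_k'\le\lambda_k\le\lambda_{k-1}'$. This reuses Proposition~\ref{pre.wegner.prop} essentially as a black box and keeps its $r(h)$.

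\textbf{Your route.} You work directly with two orthonormal eigenvectors and show that every eigenvector of $A$ with eigenvalue in the window is $O(h^\infty)$-close to a multiple of $\phi$. This buys three things:
\begin{enumerate}
\item a stronger conclusion, namely eigenvector localization;
\item no min-max argument;
\item no need to check that the non-pseudodifferential operator $A'$ still satisfies the lattice-site estimates. You only use the orthogonal splitting $\psi_{j_0} = c\phi+\phi^\perp$ in $L^2(\bbR)$ and the simplicity of the exceptional eigenvalue.
\end{enumerate}

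\textbf{The price.} You need the masses $\sum_{j\ne j_0}\norm{\psi_j}^2$ and $\norm{\phi^\perp}^2$ to be small, not just $\min_j d(\lambda,\sigma(\widehat{a}_j))$. So $r(h)$ must be taken larger than in Proposition~\ref{pre.wegner.prop}. Your choice $r(h)=\epsilon(h)^{1/4}$, with for instance $\epsilon(h)=\inf_N C_Nh^N$, is legitimate and still $O(h^\infty)$. The tradeoff you flag as the main obstacle is milder than you suggest: a large fixed multiple of $\epsilon(h)^{1/2}$ would already suffice, since the contradiction only needs the errors to be below $1/2$.

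Passing from $\sum_{j\ne j_0}\norm{\psi_j}^2=O(h^\infty)$ to $\norm{\sum_{j\ne j_0}\psi_j}=O(h^\infty)$ uses the same $\Lambda$-uniform off-diagonal summation as \eqref{norm.psij}. Both proofs rely on those estimates to the same extent.
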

\begin{proof}
For convenience let us assume that the special case occurs at $j=0$. 
With $r(h)$ defined as in Proposition~\ref{pre.wegner.prop},
suppose that $\widehat{a}_0$ has a simple eigenvalue $\mu$ with $\abs{\mu - \mu_0} \le \delta + r(h)$. Let $\phi \in L^2(\bbR)$ 
be the corresponding eigenfunction with $\norm{\phi}=1$. Note that $\phi$ has the same localization properties as $\widehat{a}_0$, 
in terms of the operator $P_0$ defined as in \eqref{Pj.def}. That is,
\[
\norm[\big]{(1- P_0)\widehat{a}_0} = O(h^\infty)
\]
by \cite[Theorem 4.25]{Zw12}, because the symbols have disjoint supports.
From the eigenvalue equation $(\widehat{a}_0 - \mu)\phi = 0$ we can thus deduce that
\begin{equation}\label{p0.phi.loc}
\norm*{(1-P_0)\phi} = O(h^\infty).
\end{equation}

Now consider the rank-one perturbation
\[
\widehat{a}_0' := \widehat{a}_0 - \mu \phi\otimes \overline{\phi},
\]
which moves the eigenvalue $\mu$ to $0$ without altering the rest of the spectrum.
In particular, $\widehat{a}_0'$ has no eigenvalues in the range $[\mu_0\pm (\delta+r(h))]$.
By \eqref{p0.phi.loc}, the argument from Proposition~\ref{pre.wegner.prop} applies to the perturbed operator
\[
A' := A - \mu \phi\otimes \overline{\phi}.
\]
Hence $A'$ has no eigenvalues in the range $[\mu_0\pm \delta]$.

Since the positive eigenvalues of $A'$ and $A$ may be computed using the min-max principle, 
we can argue as in the Weyl inequality for matrices. Label the positive eigenvalues of $A$ in decreasing order
as $\lambda_k$ and those of $A'$ as $\lambda_k'$. The fact that $\mu>0 $ implies that
\[
\lambda_k' \le \lambda_k.
\]
On the other hand,
\[
\begin{split}
\lambda_k' &= \min_{w_1,\dots,w_{k-1}} \paren*{\max_{u \perp \set{w_1,\dots,w_{k-1}}}
\frac{\brak{u,A u} - \mu\abs{\brak{u,\phi}}^2}{\norm{u}^2}}\\
&\ge \min_{w_1,\dots,w_{k-1}} \paren[\bigg]{\max_{u \perp \set{w_1,\dots,w_{k-1},\phi}}
\frac{\brak{u,A u}}{\norm{u}^2}} \\
&\ge \min_{w_1,\dots,w_{k}} \paren[\bigg]{\max_{u \perp \set{w_1,\dots,w_k}}
\frac{\brak{u,A u}}{\norm{u}^2}} \\
&= \lambda_{k+1}.
\end{split}
\]
Hence we have the interlacing result 
\[
\lambda_k' \le \lambda_k \le \lambda_{k-1}'.
\]
Since $A'$ has no eigenvalues in the range $[\mu_0\pm\delta]$, $A$ can have at most one simple eigenvalue in this range.
\end{proof}

\section{Return to the Landau model: proofs of the main theorems}\label{landau.sec}

As noted in the introduction, the Wegner and Minami estimates reflect the heuristic that the contributions 
to the spectrum of $H_{V_\omega}$ from each lattice site behave as independent random variables. 
In this section we use the results of  \S\ref{lattice.sec} to turn this intuition 
into a proof, at the cost of error arising from the remainder term in the approximation given in Proposition~\ref{Qexpand.prop}. 

 As above, we write $\widetilde{\Lambda} := \Lambda \cap \Z^2$.  The random potential $V_\omega$ is described in \eqref{eq:rand_pot1}, with
\[
V_\omega := \sum_{j \in \widetilde{\Lambda}} \omega_j v_j.
\]
The support requirements for $v_j$ are loosened in \S\ref{scaling.sec}, but for this section we reiterate the support 
assumption made in the introduction, 
\[
\supp v_0 \subset (-\tfrac12,\tfrac12)^2.
\]
This ensures that the shifted potentials $v_j$ have disjoint supports. 
As before, the distribution of each $\omega_j \in [-1,1]$ is given by
\[
d\bbP(\omega_j) = g(\omega_j) d\omega_j,
\]
but for this section we can assume that $g \in L^\infty[-1,1]$, with no additional regularity required.

\subsection{Semiclassical Wegner estimate}\label{wegner.sec}
 
The Wegner estimate of Theorem~\ref{thm:wegner} due to Wang \cite{wm_wang97} has the correct dependence on interval width, but with volume scaling
$\abs{\Lambda}^2$ rather than the expected $\abs{\Lambda}$. The estimates from \S\ref{lattice.sec} allow us to prove Theorem~\ref{thm:wegner_h}, 
with the correct volume scaling at the cost of introducing a $B = h^{-1}$-dependent error in the width of the interval (see, however Proposition \ref{prop:wegner_edge}). We mention that these results do not require that $v_0$ has a fixed sign, unlike \cite{ch96}. 

In the notation of Proposition~\ref{Qexpand.prop}, with $V_\omega$ replacing $V$, we define the operator
\[
A_\omega = \widehat{V_\omega} + h \widehat{V_{\omega,1}} + h^2 \widehat{V_{\omega,2}},
\]
so that 
\[
Q_{V_\omega}(\mu) = A_\omega - \mu + O(h^3).
\]
Because of the disjoint supports, we can write $A_\omega$ as a sum over individual site operators,
\[
A_\omega = \sum_{j \in \widetilde{\Lambda}} \widehat{a}_j(\omega_j),
\]
where the local symbol functions are given by
\begin{equation}\label{aw.def}
a_j(\omega_j) = \omega_j\paren[\Bigg]{v_j + \frac{2n+1}4h\Delta v_j + h^2 \sum_{\abs{\alpha} = 4}c_\alpha \del^\alpha v_j} 
+ \frac{h^2}4 \omega_j^2 (\nabla v_j)^2.
\end{equation}
Note that the semiclassical quantization \eqref{widehat.def} yields operators that depend on $h$ even if the symbol 
does not. In our case, we have in addition some explicit $h$-dependence in the symbols. By the unitary equivalence from
Lemma~\ref{aj.spec.lemma}, probabilities involving the spectrum $\sigma(\widehat{a}_j(\omega_j))$ 
are independent of $j$. 
 
\begin{lemma}\label{site.wegner.lemma}
For $\mu_0 \ge b_0 >0$ and $0<\epsilon<b_0/2$, there is a constant $C$ depending only on 
$a_0$, $b_0$ and the probability density function $g$, 
such that
\[
\bbP\paren[\Big]{\#\paren[\big]{\sigma(\widehat{a}_j(\omega_j)) \cap [\mu_0\pm \epsilon]} \ge 1} \le Ch^{-1}\epsilon.
\]
\end{lemma}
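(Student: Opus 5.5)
By Lemma~\ref{aj.spec.lemma} it suffices to treat $j=0$. Write the symbol \eqref{aw.def} as $a_0(\omega) = \omega\, t + \omega^2 h^2 s$, where
\[
t := v_0 + \tfrac{2n+1}{4}h\Delta v_0 + h^2\sum_{\abs{\alpha}=4} c_\alpha \del^\alpha v_0, \qquad s := \tfrac14 (\nabla v_0)^2 .
\]
Both $t$ and $s$ are supported in $\supp v_0$ and have $C^r_{\rm b}$ norms bounded uniformly in $h<1$. Thus $T(\omega) := \widehat{a}_0(\omega) = \omega\widehat{t} + \omega^2 h^2 \widehat{s}$ is a real-analytic (in fact polynomial) family of self-adjoint Hilbert--Schmidt operators in $\omega \in [-1,1]$. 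I plan to bound the probability by an expectation:
\[
\bbP\paren[\big]{\#(\sigma(T(\omega))\cap[\mu_0\pm\epsilon])\ge 1} \le \norm{g}_\infty \int_{-1}^1 \operatorname{Tr} \mathbf 1_{[\mu_0\pm\epsilon]}(T(\omega))\,d\omega,
\]
and then estimate the right-hand side by spectral averaging along the analytic eigenvalue branches.

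\textbf{Counting.} First I show that the number of eigenvalues of $T(\omega)$ in $[b_0/2,\infty)$ is at most $Ch^{-1}$, uniformly in $\omega$. The Hilbert--Schmidt norm of a semiclassical Weyl quantization is $\norm{\widehat{b}}_{HS}^2 = (2\pi h)^{-1}\int \abs{b}^2$. Hence $\norm{T(\omega)}_{HS}^2 \le C h^{-1}$, with $C$ depending only on $v_0$. It follows that $\#\set{\lambda \in\sigma(T(\omega)) : \abs{\lambda}\ge b_0/2} \le 4 b_0^{-2}\norm{T(\omega)}_{HS}^2 \le Ch^{-1}$. Note also that $\norm{T(\omega)} \le C\abs{\omega}$, so only $\abs{\omega} \ge c(b_0) > 0$ can contribute eigenvalues near $\mu_0 \ge b_0$.

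\textbf{Monotonicity of branches.} By Rellich--Kato, the eigenvalues of $T(\omega)$ away from $0$ can be organized into real-analytic branches $\lambda_k(\omega)$ with normalized analytic eigenvectors $\phi_k(\omega)$. Feynman--Hellmann gives
\[
\omega\lambda_k'(\omega) = \omega\brak{\phi_k, (\widehat t + 2\omega h^2\widehat s)\phi_k} = \lambda_k(\omega) + \omega^2h^2\brak{\phi_k,\widehat s\phi_k}.
\]
So whenever $\lambda_k(\omega) \ge b_0/2$, we get $\abs{\lambda_k'(\omega)} \ge \lambda_k - Ch^2 \ge b_0/4$ for $h$ small. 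The derivative has constant sign there (the sign of $\omega$). Consequently each branch meets $[\mu_0\pm\epsilon] \subset [b_0/2,\infty)$ for a set of $\omega$ of measure at most $8\epsilon/b_0$ on each of $\omega>0$ and $\omega<0$.

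\textbf{Conclusion and main obstacle.} The main obstacle is the bookkeeping: branches may cross, and one must bound how many branches ever enter $[b_0/2,\infty)$. The same monotonicity handles this. On $\omega>0$, a branch lying above $b_0/2$ is increasing in $\omega$, so once it is above $b_0/2$ it stays there up to $\omega=1$. Hence every branch that ever enters $[b_0/2,\infty)$ is still counted among the eigenvalues of $T(1)$ above $b_0/2$; the case $\omega<0$ is symmetric with $T(-1)$. By the counting step, at most $Ch^{-1}$ branches are involved. Summing the measure bound over these branches gives
\[
\int_{-1}^1 \operatorname{Tr}\mathbf 1_{[\mu_0\pm\epsilon]}(T(\omega))\,d\omega \le Ch^{-1}\epsilon,
\]
with $C$ depending on $v_0$ and $b_0$. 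Multiplying by $\norm{g}_\infty$ yields the lemma.

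If the analytic-branch argument proves awkward, I would instead run the same computation with $\operatorname{Tr} f(T(\omega))$ for a smooth increasing cutoff $f$. The identity $\frac{d}{d\omega}\operatorname{Tr} f(T) = \operatorname{Tr} f'(T)T'$, combined with $\omega T' = T + \omega^2h^2\widehat s$ on the range of $f'(T)$, gives the same positivity.
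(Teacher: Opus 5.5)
Your proposal is correct and follows essentially the paper's argument. You reduce to $j=0$ and one sign of $\omega$, and use $\widehat{a}_0(\omega)=\omega\widehat{p}_0+O(h^2)$ to show the eigenvalue curves above $b_0/2$ are monotone in $\omega$ with slope at least $b_0/4$, so each spends $\omega$-measure at most $8\epsilon/b_0$ in $[\mu_0\pm\epsilon]$, and you count the curves by the Hilbert--Schmidt bound $O(h^{-1})$. The only difference is packaging: the paper uses the ordered eigenvalues $\mu_k(\omega)=\omega\nu_k(\omega)$ with the min-max Lipschitz bound $\abs{\nu_k(\omega)-\nu_k(\omega')}\le\norm{\widehat{q}_0}\abs{\omega-\omega'}$, which avoids the branch-crossing bookkeeping you handle via Rellich--Kato and Feynman--Hellmann (or via your trace-of-$f(T)$ alternative).
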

\begin{proof}
Since the spectrum is independent of $j$, it suffices to consider the case $j=0$. We may also restrict our
attention to the case $\omega_j\ge 0$, since a similar argument can be applied to $\omega_j \le 0$.
For $\omega \in [0,1]$, let $\mu_k(\omega)$ denote the $k$th positive eigenvalue of $\widehat{a}_0(\omega)$, in decreasing order. 
The min-max principle implies that $\mu_k(\cdot)$ is a continuous function of $\omega$ for each $k$. 


Let us make the dependence of \eqref{aw.def} on $\omega$ explicit by writing
\[
a_0(\omega) = \omega p_0 + \omega^2q_0
\]
where
\begin{equation}\label{p0.def}
p_0 = v_0 + \frac{2n+1}4h\Delta v_0 + h^2 \sum_{\abs{\alpha} = 4}c_\alpha \del^\alpha v_0
\end{equation}
and
\[ 
q_0 = \frac{h^2}4 (\nabla v_0)^2.
\]
Let $\nu_k(\omega)$ denote the $k$th positive eigenvalue of $\widehat{p}_0 + \omega\widehat{q}_0$, so that
\[
\mu_k(\omega) = \omega\nu_k(\omega).
\] 
By the min-max principle, $\nu_k(\omega)$ satisfies the estimate
\begin{equation}\label{nuk.lip}
\abs*{\nu_k(\omega) - \nu_k(\omega')} \le \norm{\widehat{q}_0} \,\abs{\omega-\omega'}.
\end{equation}
Since $\norm{\widehat{q}_0} = O(h^2)$, we can assume that $h_0$ is chosen so that 
$\norm{\widehat{q}_0} \le b_0/4$ for $h \le h_0$
Then, by \eqref{nuk.lip} there exists some finite $N$ such that the eigenvalues satisfy
\[
\nu_k(\omega) \ge b_0/2\quad\text{for }k \le N,
\]
for all $h \le h_0$, and $\omega \in [0,1]$.
We can then deduce from \eqref{nuk.lip} that for $\omega<\omega'$,
\[
\begin{split}
\mu_k(\omega') - \mu_k(\omega) &= (\omega'-\omega) \nu_k(\omega) + w'(\nu_k(\omega') - \nu_k(\omega)) \\
&\ge \frac{b_0}4 (\omega'-\omega),
\end{split}
\]
for all $k \le N$, $h \le h_0$ and $\omega \in [0,1]$.
In particular, the functions $\mu_k$ are strictly increasing under these conditions.
Hence the set $\set{ \omega:\>\mu_k(\omega) \in [\mu_0\pm \vep]}$ is an interval for $k \le N$ 
and we can estimate
\[
\abs[\Big]{\set{ \omega : \>\mu_k(\omega) \in [\mu_0\pm \vep]}} \le \frac{8}{b_0} \vep.
\]
For a single $k \le N$ we can thus estimate
\[
\bbP\paren[\big]{\mu_k(\omega_j) \in [\mu_0\pm \vep]\text{ with }\omega_j\ge 0} \le C\vep
\]
where $C$ depends only on $b_0$ and the probability distribution function $g$.
This gives the estimate
\[
\bbP\paren[\big]{\mu_k(\omega_j) \in [\mu_0\pm \vep]\text{ for some }k \le N, \text{ with }\omega_j\ge 0} 
\le CN\vep.
\]
To complete the estimate, we use the Hilbert-Schmidt norm as in the proof of Lemma~\ref{Vhat.lemma},
\[
\begin{split}
N &\le \frac{4}{b_0^2} \sum_{k=1}^N \frac{1}{\nu_k(0)^2} \\
&\le \frac{4}{b_0^2} \norm{\widehat{p}}^2_{\rm HS} \\
&= \frac{8\pi}{hb_0^2} \norm{p}^2_{L^2}.
\end{split}
\]

This yields the probability bound under the restriction $\omega_j \ge 0$. 
As noted at the start, the same argument can be used for $\omega_j \le 0$ by redefining
$\mu_k(\omega)$ as the sequence of negative eigenvalues of $\widehat{a}_0(\omega)$, written in increasing order.
\end{proof}

\begin{lemma}\label{Aw.lemma}
For $\mu_0 \ge b_0 >0$ and $0<\delta<b_0/2$, there exists a constant $C>0$ and a function $r(h) = O(h^\infty)$, each 
depending only on $b_0$ and $v_0$, such that
\[
\bbP\paren[\big]{ \sigma(A_\omega) \cap [\mu_0\pm\delta] \ne \emptyset} \le Ch^{-1}\abs{\Lambda}(\delta+r(h))
\]
\end{lemma}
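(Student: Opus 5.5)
The plan is to combine the deterministic lattice-site reduction of Proposition~\ref{pre.wegner.prop} with the single-site probability bound of Lemma~\ref{site.wegner.lemma}, using a union bound over the sites of $\widetilde{\Lambda}$.

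First, for each fixed $\omega$, I apply Proposition~\ref{pre.wegner.prop} to the symbols $a_j(\omega_j)$ from \eqref{aw.def}. Each such symbol is supported in $\supp v_j \subset j + (-\tfrac12,\tfrac12)^2$, so the disjoint-support estimates of Lemmas~\ref{site.sep.lemma} and~\ref{site.sep.lemma2} apply. The one point to check is that the $O(h^\infty)$ constants, and hence the function $r(h)$, can be chosen uniformly in $\omega \in [-1,1]^{\widetilde{\Lambda}}$ and in $h \le h_0$.

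This uniformity is the main obstacle, since Proposition~\ref{pre.wegner.prop} is stated for a single fixed $a_0$ and here the symbols vary with $\omega_j$ and carry explicit $h$-dependence. I would handle it as follows.
\begin{itemize}
\item Fix one cutoff $\chi_0$ equal to $1$ near $\supp v_0$, and use it to define $P_j$ independently of $\omega$.
\item The constants in the mixed-term and disjoint-support bounds \cite[Theorems 4.12, 4.22, 4.25]{Zw12} depend only on finitely many seminorms $\sup\abs{\del^\alpha a_j}$ and on the supports.
\item These seminorms are bounded uniformly, because $\abs{\omega_j}\le 1$, $h\le 1$, and $a_j(\omega_j)$ is a fixed polynomial in $(\omega_j,h)$ with coefficients given by derivatives of $v_j$.
\end{itemize}
The resulting $r(h)=O(h^\infty)$ then depends only on $v_0$ and $b_0$. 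The proof of Proposition~\ref{pre.wegner.prop} also only needs $\lambda \ge b_0/2$ to control $Q\psi$, and this holds since $\mu_0\ge b_0$ and $\delta<b_0/2$.

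Taking the contrapositive, I get the inclusion of events
\[
\set[\big]{\sigma(A_\omega)\cap[\mu_0\pm\delta]\neq\emptyset} \subset \bigcup_{j\in\widetilde{\Lambda}} \set[\big]{\sigma(\widehat{a}_j(\omega_j))\cap[\mu_0\pm(\delta+r(h))]\neq\emptyset}.
\]
A union bound gives a sum over $j\in\widetilde{\Lambda}$. Each event depends only on $\omega_j$, and by the unitary equivalence of Lemma~\ref{aj.spec.lemma} each has the same probability.

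Finally, I apply Lemma~\ref{site.wegner.lemma} with $\epsilon=\delta+r(h)$, which bounds each term by $Ch^{-1}(\delta+r(h))$. That lemma requires $\epsilon<b_0/2$. This fails only when $\delta$ is within $r(h)$ of $b_0/2$; I would cover that case either by shrinking $h_0$, or by noting that the lemma's proof works verbatim for $\epsilon<3b_0/4$ with a modified constant. Since $\#\widetilde{\Lambda}\le C\abs{\Lambda}$ for a square, summing over $j$ yields $Ch^{-1}\abs{\Lambda}(\delta+r(h))$, as claimed.
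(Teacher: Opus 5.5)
Your proposal is correct and follows the paper's proof: take the contrapositive of Proposition~\ref{pre.wegner.prop} to reduce the event to some site operator $\widehat{a}_j(\omega_j)$ having spectrum in $[\mu_0\pm(\delta+r(h))]$, then bound each such site event by $Ch^{-1}(\delta+r(h))$ using Lemma~\ref{site.wegner.lemma}. The only difference is how you combine the sites. The paper uses independence to write the complementary probability as $(1-Ch^{-1}(\delta+r(h)))^{\abs{\Lambda}}$ and then applies Bernoulli's inequality, whereas your union bound needs no independence; your explicit checks (uniformity of $r(h)$ in $\omega$, $\lambda\ge b_0/2$, and $\epsilon=\delta+r(h)$ possibly exceeding $b_0/2$, a case that is in fact trivial since the right-hand side is then at least $1$ once $C\ge 2/b_0$ and $h\le 1$) fill in points the paper leaves implicit.
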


\begin{proof}
Choose $r(h)$ as in Proposition~\ref{pre.wegner.prop}, so that 
\beq\label{eq:ub1}
\bbP\paren[\Big]{ \sigma(A_\omega) \cap [\mu_0\pm\delta] = \emptyset} 
\ge \bbP\paren[\Big]{ \sigma(\widehat{a}_j(\omega_j)) \cap [\mu_0\pm(\delta + r(h))] = \emptyset\text{ for all }j \in \widehat{\Lambda}}
\eeq
By Lemma~\ref{site.wegner.lemma}, for each $j \in \widehat{\Lambda}$,
\[
\bbP\paren[\Big]{ \sigma(\widehat{a}_j(\omega_j)) \cap [\mu_0\pm(\delta + r(h))] \ne \emptyset} \le Ch^{-1}(\delta+r(h)).
\]
Therefore, it follows from this and the inequality \eqref{eq:ub1} that 
\[
\begin{split}
\bbP\paren[\Big]{ \sigma(A_\omega) \cap [\mu_0\pm\delta] = \emptyset} 
&\ge {\paren[\big]{1-Ch^{-1}(\delta+r(h))}^{\abs{\Lambda}}} \\
&\ge 1 - Ch^{-1}\abs{\Lambda}(\delta+r(h)).
\end{split}
\] 
\end{proof}

\begin{proof}[Proof of Theorem~\ref{thm:wegner_h}]
The existence of an eigenvalue of $H_\omega$
in the range $B_n+I$ corresponds, by Theorem~\ref{bellissard.thm}, to the fact that $Q_{V_\omega}(\mu)$ has an eigenvalue 
$0$ for some $\mu \in I$. 
By Proposition~\ref{Qexpand.prop} there exists a function $c(h) = O(h^3)$ such that
if $0 \in \sigma(Q_{V_\omega}(\mu))$ then $A_\omega$ has an eigenvalue in the range $[\mu \pm c(h)]$. This gives an estimate
\[
\bbP\paren[\big]{0 \in \sigma(Q_{V_\omega}(\mu))\text{ for some }\mu \in I} \le \bbP\paren[\big]{\sigma(A_\omega) \cap [I\pm c(h)] \ne \emptyset}.
\]
For $h \le h_0$, Lemma~\ref{Aw.lemma} then gives
\[
\bbP\paren[\big]{0 \in \sigma(Q_{V_\omega}(\mu))\text{ for some }\mu \in I} \le Ch^{-1} 
\abs{\Lambda}(\abs{I} + O(h^3)).
\]
\end{proof}

\subsection{Semiclassical Minami Estimate}\label{minami.sec}

Given a potential $v_0$, we define $a_j(\omega)$ as in \eqref{aw.def}. As noted above, Lemma~\ref{aj.spec.lemma} implies that the 
operators $\widehat{a}_j(\omega)$ have spectrum independent of $j$. Define the symbol $p_0$ as in \eqref{p0.def}, so that
$\widehat{a}_0(\omega) =\omega \widehat{p}_0 + O(h^2)$.
For the Minami estimate, we require the following:
\begin{itemize}
\item\emph{spectral gap assumption}: Assume that for $b_0>0$, there exists $\kappa>0$ such that
the eigenvalues of $\widehat{p}_0$ in $\set{\abs{\mu} \ge b_0}$ are simple and separated by gaps of at least $\kappa h$ for all $h \le h_0$.
\end{itemize}
In Appendix~\S\ref{radial.sym.sec} we will discuss a family of examples that satisfy this hypothesis. In particular, one can define $v_0$ by taking a compactly supported cutoff of a radial Gaussian potential.

\begin{proof}[Proof of Theorem \ref{thm:minami_h}]
Suppose that $H_\omega$ has at least two eigenvalues in the range
$B_n+I$, counted with multiplicity. By Theorem~\ref{bellissard.thm} this corresponds to the number of $0$ eigenvalues of 
$Q_{V_\omega}(\mu)$, for $\mu \in I$. 
By Proposition~\ref{Qexpand.prop} there exists a function $c(h) = O(h^3)$ such that
\beq\label{Aw.est}
\bbP \paren[\Big]{\#\paren[\big]{\sigma(H_{V_\omega}) \cap (B_n + I)} \ge 2} \le 
\bbP\paren[\Big]{\#\paren[\big]{\sigma(A_\omega) \cap [I\pm c(h)]} \ge 2}.
\eeq

Under the spectral gap assumption, for $I \subset [b_0,1]$ and $\kappa h \le \abs{I}$, 
the site operators $\widehat{a}_j(\omega_j)$ have at most a single simple eigenvalue in $I$. 
Under this assumption, the contrapositive of the statement in
Proposition~\ref{pre.minami.prop} then implies that, by adjusting $c(h)$ by an $O(h^\infty)$ correction if needed,
\begin{equation}\label{pre.minani.eq}
\bbP \paren[\Big]{\#\paren[\big]{\sigma(H_{V_\omega}) \cap (B_n + I)} \ge 2} \le 
\bbP\paren[\bigg]{\sum_{j\in \widehat{\Lambda}} \# \paren[\big]{\sigma(\widehat{a}_j(\omega_j)) \cap [I \pm c(h)]} \ge 2}.
\end{equation}

Define the probability $\rho$ so that
\[
\bbP\paren[\big]{\sigma(\widehat{a}_j)(\omega_j) \cap [I \pm c(h)] = k} = \begin{cases}\rho, &k=1,\\ 1-\rho, &k=0, \end{cases}
\]
which is independent of $j$.
Estimating as in Lemma~\ref{site.wegner.lemma} gives
\begin{equation}\label{q.bound}
\rho \le Ch^{-1}(\abs{I} + 2c(h)).
\end{equation}
Then, because the $\omega_j$ are independent,
\[
\begin{split}
\bbP\paren[\bigg]{\sum_{j\in \widehat{\Lambda}} \# \paren[\big]{\sigma(\widehat{a}_j(\omega_j)) \cap [I \pm c(h)]} \le 1}
&= (1-\rho)^m + m\rho(1-\rho)^{m-1} \\
&\ge 1 - {m\choose 2}\rho^2,
\end{split}
\]
where $m = \abs{\Lambda}$. Combining this estimate with \eqref{Aw.est} and \eqref{pre.minani.eq} gives
\[
\bbP \paren[\Big]{\#\paren[\big]{\sigma(H_{V_\omega}) \cap (B_n + I)} \ge 2} \le {m\choose 2}\rho^2,
\]
and the result follows from \eqref{q.bound}.
\end{proof}

\section{Additional Wegner estimates}\label{scaling.sec}

In this section, we first present a simple proof of the Wegner estimate in small energy intervals at the band edges which yields the correct scaling with respect to the volume (surface area) and energy interval size. This result implies the Lipschitz continuity of the density of states measure.  We then discuss the technique for proving the Wegner estimate introduced by Wang in \cite{wm_wang97}.
The idea is essentially to convert a shift in the spectral parameter into a rescaling of the random variables. The 
principal benefit of this approach is that it does not require positivity or support restrictions on the single-site potentials. The drawback is that the volume dependence of the Wegner estimate is not optimal (see, however, Proposition \ref{prop:wegner_edge}).

The form of the random potential $V_\omega$ is as described in \S\ref{intro.sec}. However, Wang's argument
does not require disjoint supports of the single-site potentials. In this section, we loosen the assumption to 
$v_0 \in \calS(\bbR^2)$, with
\begin{equation}\label{vj.overlap}
\sum_{j \in \bbZ^2} \abs{v_j} \le 1.
\end{equation}
On the other hand, this approach does require smoothness of the single-site probability density. That is, in this section, we take $d\bbP(\omega_j) = g(\omega_j) d\omega_j$ with $g \in \cinf[-1,1]$.

In \S\ref{subsec:wegner_edge1}, we present a simple proof of the Wegner estimate for energies within a small distance of the band edge. This estimate has the optimal dependence of $|\Lambda|$. Thus, it implies the local. Lipschitz continuity of the density of states.

In \S\ref{subsec:wang_wegner1}, we revisit Wang's proof on the Wegner estimate. 
We include an independent proof of Wang's version of the Wegner estimate \cite[Prop.~3.1]{wm_wang97}
for several reasons. First of all, the uniformity of certain estimates
with respect to the volume $\abs{\Lambda}$ is a crucial point that was not made transparent in the original proof. Secondly, we 
observe that the universal upper bound on the counting function, \cite[Lemma~3.1]{wm_wang97},
admits a much shorter and more elementary proof. Finally, there is a slight mistake in the proof of 
\cite[Prop.~3.1]{wm_wang97}. This issue, translated to our notation, is that the number of eigenvalues of $H_{V_\omega}$ greater than
$(2n+1) + \mu_0$ was assumed to be equal to the number of eigenvalues of $Q_{V_\omega}(\mu_0)$ greater than $0$.
By Theorem~\ref{bellissard.thm}, we instead need to count the number of $\mu \ge \mu_0$ for which 
$0 \in \sigma(Q_{V_\omega}(\mu))$. The correction requires an additional error estimate, but does not alter the
strategy of the proof.

\subsection{Wegner estimate at the band edge}\label{subsec:wegner_edge1}

To illustrate the scaling philosophy, we start with a very simple estimate near the edge of the band.
This argument uses the original Landau Hamiltonian and does not require the Grushin method.

\begin{proposition}\label{prop:wegner_edge}
For $\epsilon > 0$,
\[
\bbP\paren[\Big]{B_n+ \mu \in \sigma(H_{V_\omega})\text{ for some }\abs{\mu} \ge 1-\epsilon}
\le C\abs{\Lambda}\epsilon,
\]
where $C$ depends only on $g$.
\end{proposition}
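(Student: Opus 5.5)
Throughout, the idea is to trade the spectral condition $\abs{\mu}\ge 1-\epsilon$ for a condition on the random variables themselves. In this section only $\sum_j\abs{v_j}\le 1$ is assumed, so $\abs{V_\omega}\le 1$ always. We work directly with $H_{V_\omega}=H_0+V_\omega$, with no Grushin reduction. Everything rests on the operator bound
\[
H_0 - \norm{V_\omega}_\infty \le H_{V_\omega} \le H_0 + \norm{V_\omega}_\infty .
\]
Together with the gap structure of $\sigma(H_0)$, this confines the $n$th band of $\sigma(H_{V_\omega})$ to $[B_n-\norm{V_\omega}_\infty,\,B_n+\norm{V_\omega}_\infty]$. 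The gap between consecutive Landau levels is $2B>2$, so bands do not interfere. The precise justification: if $B_n+\mu\in\sigma(H_{V_\omega})$ with $B_n+\mu$ at distance greater than $\norm{V_\omega}_\infty$ from $\sigma(H_0)$, then the resolvent identity/Neumann series for $(H_0+V_\omega-z)^{-1}$ converges, which is a contradiction.

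Step 1. If $B_n+\mu\in\sigma(H_{V_\omega})$ with $\abs{\mu}\ge 1-\epsilon$, then $\norm{V_\omega}_\infty\ge 1-\epsilon$.

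Step 2. Bound $\norm{V_\omega}_\infty$ by the random variables:
\[
\abs{V_\omega(x)}\le \sum_j \abs{\omega_j}\abs{v_j(x)}\le \max_{j\in\widetilde\Lambda}\abs{\omega_j}\sum_j\abs{v_j(x)}\le \max_j\abs{\omega_j}.
\]
Hence the event in the proposition is contained in $\set{\max_{j\in\widetilde\Lambda}\abs{\omega_j}\ge 1-\epsilon}$. This is a union of $\abs{\widetilde\Lambda}\le C\abs{\Lambda}$ events (for a square, $\abs{\widetilde\Lambda}\le \abs{\Lambda}+O(L)\le 2\abs{\Lambda}$ when $L\ge1$).

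Step 3. Estimate each event by the density:
\[
\bbP(\abs{\omega_j}\ge 1-\epsilon)=\int_{1-\epsilon\le\abs{s}\le1} g(s)\,ds\le 2\norm{g}_\infty\,\epsilon .
\]
A union bound then gives $\bbP\le 2\norm{g}_\infty\abs{\widetilde\Lambda}\epsilon\le C\abs{\Lambda}\epsilon$, where $C$ depends only on $g$.

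The only delicate point is Step 1. There we must make sure the perturbed $n$th band is controlled by the actual sup norm $\norm{V_\omega}_\infty$ rather than by the a priori bound $1$. We must also check that spectrum near $B_n\pm1$ cannot come from a neighboring band. Both follow from the distance estimate $d(z,\sigma(H_0))>\norm{V}_\infty\Rightarrow z\notin\sigma(H_0+V)$, valid for self-adjoint $H_0$ and bounded $V$, combined with $B>1$. I expect no further obstacle.
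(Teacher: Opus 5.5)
Your proof is correct and is essentially the paper's. Both arguments reduce the event to $\{\max_{j}\abs{\omega_j}\ge 1-\epsilon\}$ via $d(\sigma(H_{V_\omega}),\sigma(H_0))\le\norm{V_\omega}_\infty\le\max_j\abs{\omega_j}$, then bound its probability by $2\norm{g}_\infty\abs{\Lambda}\epsilon$; the only difference is that you use a union bound (which needs no independence) where the paper uses independence and $1-(1-c\epsilon)^{\abs{\Lambda}}\le c\abs{\Lambda}\epsilon$. One harmless slip: for a square with $L\ge1$ you only get $\abs{\widetilde\Lambda}\le(L+1)^2\le 4\abs{\Lambda}$, not $2\abs{\Lambda}$.
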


\begin{proof}
By the assumption that $\abs{v_0}\le 1$, 
\[
d\paren[\big]{\sigma(H_{V_\omega}), \sigma(H_{0})} \le \max_{j \in \widehat{\Lambda}}\, \abs{\omega_j}
\]
Therefore,
\[
\bbP\paren[\Big]{B_n + \mu \in \sigma(H_{V_\omega})\text{ for some }\abs{\mu} \ge 1-\epsilon}
\le \bbP\paren*{\max_{j \in \widehat{\Lambda}}\, \abs{\omega_j} \ge 1-\epsilon}.
\]
Since the $\omega_j$ are independent variables,
\[
\bbP\paren*{\max_{j \in \widehat{\Lambda}}\, \abs{\omega_j} \ge 1-\epsilon}
= 1 - \paren*{\int_{-1+\epsilon}^{1-\epsilon} g(\omega) d\omega}^{\abs{\Lambda}}.
\]
Since $g$ is a continuous probability density,
\[
\int_{-1+\epsilon}^{1-\epsilon} g(w) \>dw \ge 1 - c\epsilon,
\]
where $c := 2\sup \abs{g}$. This gives
\[
\begin{split}
\bbP\paren*{\max_{j \in \widehat{\Lambda}}\, \abs{\omega_j} \ge 1-\epsilon}
&\le 1- (1-c\epsilon)^{\abs{\Lambda}} \\
&\le c\abs{\Lambda} \epsilon
\end{split}
\]
provided $c\epsilon < 1$. (The result is trivial for $c\epsilon \ge 1$.)
\end{proof}

\begin{remark}
The Wegner estimate at the band edge in Proposition \ref{prop:wegner_edge} has the correct dependence on the area $|\Lambda|$. As in \cite[Theorem 2.2]{ch96}, this implies the integrated density of states is Lipschitz continuous for $E \in (B_n - \epsilon, B_n)$. This is the first result on the density of states for the random Landau Hamiltonian with nonsign-definite, single-site potentials.

\end{remark}

\subsection{Wang's Wegner estimate}\label{subsec:wang_wegner1}

The main result of this section is the following Wegner estimate adapted from \cite[Prop.~3.1]{wm_wang97},
which was cited as \eqref{eq:wegner_wmw} in the introduction. 
As noted above, we assume here that $g$ is smooth but do not require disjoint supports of the $v_j$. 

\begin{theorem}[Wegner estimate]\label{thm:wegner}
There exists $h_0$ depending on $v_0$ such that for $I \subset [b_0,1]$ with $b_0>0$ and $h \le h_0$,  
\[
\bbP \paren[\Big]{\#\paren[\big]{\sigma(H_{V_\omega}) \cap (B_n + I)} \ge 1}
\le Ch^{-1} \abs{\Lambda}^2 \abs{I},
\]
where the constant $C$ and the $O(h^3)$ error estimate depend on $n$, $v_0$, $b_0$, $h_0$, and the probability density $g$.
\end{theorem}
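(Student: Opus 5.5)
We follow Wang's scaling idea: a shift in the spectral parameter $\mu$ is converted into a dilation of the random variables $\omega$, and the dilation is then absorbed by the smooth density $g$.

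\textbf{Step 1: reduce to a counting function.} Fix $I=[\mu_1,\mu_2]\subset[b_0,1]$. By Theorem~\ref{bellissard.thm}, eigenvalues of $H_{V_\omega}$ in $B_n+I$ correspond to values $\mu\in I$ with $0\in\sigma(Q_{V_\omega}(\mu))$. Define
\[
N_\omega(\mu) := \#\set{\mu'\ge\mu:\> 0\in\sigma(Q_{V_\omega}(\mu'))},
\]
counted with multiplicity. This is exactly the number of eigenvalues of $H_{V_\omega}$ in $[B_n+\mu,B_n+1]$. Chebyshev then gives
\[
\bbP(\ldots\ge1)\le \bbE\paren[\big]{N_\omega(\mu_1)-N_\omega(\mu_2)}.
\]
By Proposition~\ref{Qexpand.prop} and the analyticity in $\mu$, we have $\partial_\mu Q_{V_\omega}(\mu)=-1+O(h)$. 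For $h\le h_0$, eigenvalue branches of $Q_{V_\omega}(\mu)$ therefore cross $0$ transversally and monotonically. Consequently $N_\omega(\mu)$ equals the number of positive eigenvalues of the compact operator $Q_{V_\omega}(\mu)+\mu$ lying above $\mu$, up to a controlled correction. This correction is where Wang's slip occurs, and we handle it here rather than assuming the two counts are equal.

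\textbf{Step 2: scaling.} To leading order $Q_{V_\omega}(\mu)+\mu$ is linear in $\omega$; write it as $\widehat{V_\omega}+h\,\widehat{V_{\omega,1}}+O(h^2)$. The eigenvalue equation $\lambda(\omega)=\mu$ for a dilation $\omega\mapsto t\omega$ with $t=\mu/\mu_1$ shows that
\[
N_\omega(\mu)=N_{(\mu_1/\mu)\omega}(\mu_1)+E,
\]
where $E$ comes from the nonlinear terms. To control $E$, we compare $Q_{V_{t\omega}}(t\mu)$ with $tQ_{V_\omega}(\mu)$; their difference is $O(h^2)$ in norm, with constants uniform in $\Lambda$ because they depend only on $\norm{V_\omega}_{C^r_{\rm b}}$. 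We then argue that $Q_{V_{t\omega}}(t\mu)$ and $tQ_{V_\omega}(\mu)$ have the same count of zero-crossings in $\mu$. Transversality from Step 1 makes zero-crossings stable under perturbations whose $\mu$-derivative is small, and I would verify that the $O(h^2)$ error term has small $\mu$-derivative.

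\textbf{Step 3: expectation under dilation.} Write
\[
\bbE\,N_\omega(\mu_1)-\bbE\,N_\omega(\mu_2)=\int N_\omega(\mu_1)\paren[\Big]{\prod_j g(\omega_j)-t^{-m}\prod_j g(\omega_j/t)}d\omega,
\]
with $t=\mu_2/\mu_1=1+O(\abs{I}/b_0)$ and $m=\abs{\Lambda}$. Smoothness of $g$ bounds the density difference in $L^1$ by $Cm(t-1)$, since each of the $m$ factors contributes $O(t-1)$. The boundary of $[-1,1]^m$ needs care: $\omega/t$ stays inside the cube, so truncation is harmless.

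\textbf{Step 4: uniform counting bound and main obstacle.} Finally we need $\sup_\omega N_\omega(\mu_1)\le Ch^{-1}m$. This follows from a Hilbert--Schmidt bound: the number of eigenvalues of $\widehat{V_\omega}$ above $b_0/2$ is at most $4b_0^{-2}\norm{\widehat{V_\omega}}_{\rm HS}^2=Ch^{-1}\norm{V_\omega}_{L^2}^2\le Ch^{-1}m$, plus an $O(h)$-perturbation via min--max. This is the short elementary proof replacing Wang's Lemma~3.1. Multiplying the bounds from Steps 3 and 4 gives $Ch^{-1}m^2\abs{I}$. The main obstacle is Step 2: making the $O(h^2)$ nonlinear corrections compatible with exact counting, uniformly in $\Lambda$.
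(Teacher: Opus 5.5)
Your overall architecture matches the paper: dilating $\omega$, bounding the dilated density difference in $L^1$ by $Cm(t-1)$, and bounding $\sup_\omega N$ by $Cm/h$ through the Hilbert--Schmidt norm. Your Step~1 is also fine. Counting zeros of $Q_{V_\omega}(\mu)$ through monotone eigenvalue branches, using $\partial_\mu Q_{V_\omega}=-1+O(h)$, is a legitimate alternative to the paper's device of freezing the spectral parameter.

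The genuine gap is Step~2, exactly where you locate the obstacle. The claim that $Q_{V_{t\omega}}(t\mu)$ and $tQ_{V_\omega}(\mu)$ have the same number of zero-crossings is false. A perturbation $D(\mu)$ with $\norm{D}\le\varepsilon$ moves each crossing point by up to order $\varepsilon$, however small $\partial_\mu D$ is. So crossings within $\varepsilon$ of the lower endpoint of the counting window can enter or leave it. Transversality stabilizes the \emph{location} of crossings, not the \emph{count} in a window whose endpoint sits inside the band, where eigenvalues of the random operator can lie anywhere. Hence $E\neq 0$ in general. The natural bound on $\abs{E}$ is the number of eigenvalues of $H_{V_\omega}$ within $O(\varepsilon)$ of $B_n+\mu_1$, and controlling $\bbE\abs{E}$ is itself a Wegner estimate, so the argument becomes circular. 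Recording the discrepancy merely as $O(h^2)$ would also be fatal: $\abs{I}$ may be far smaller than $h^2$, and the theorem has no additive error. You need bounds proportional to the shift or dilation, as in Lemma~\ref{RV.est.lemma}: $\norm{F_V(\mu)-F_V(\mu_0)}\le Ch^3\abs{\mu-\mu_0}$ and $\norm{F_{e^tV}(\mu)-F_V(\mu)}\le Ch\abs{t}$.

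The missing idea is to give up exact counting and use one-sided inequalities with a margin. The paper writes $I=[\mu_0e^{-\delta},\mu_0e^{\delta}]$ and $T_\omega(\mu)=\widehat{V}_\omega+F_{V_\omega}(\mu)$. A solution of $\mu\in\sigma(T_\omega(\mu))$ with $\mu\in I$ forces the single self-adjoint operator $T_\omega(\mu_0)$ to have an eigenvalue in $[\mu_0e^{-2\delta},\mu_0e^{2\delta}]$. So the probability is at most $\bbE\paren{N(\mu_0e^{-2\delta},\omega)-N(\mu_0e^{2\delta},\omega)}$, where $N(b,\omega)$ counts eigenvalues $\ge b$ of $T_\omega(\mu_0)$. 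Next, $\norm{T_{e^t\omega}(\mu_0)-e^tT_\omega(\mu_0)}\le Ch\abs{t}\le\mu_0(1-e^{-\delta})$ for $\abs{t}\le 3\delta$ and $h$ small depending on $b_0$. Min--max then gives only inequalities: $N(\mu_0,e^{-3\delta}\omega)\le N(\mu_0e^{2\delta},\omega)$ and $N(\mu_0,e^{3\delta}\omega)\ge N(\mu_0e^{-2\delta},\omega)$. Dilating by $e^{\pm 3\delta}$ rather than by the \emph{exact} factor $e^{\pm2\delta}$ leaves room to absorb the nonlinear error. Hence $N(\mu_0e^{-2\delta},\omega)-N(\mu_0e^{2\delta},\omega)\le N(\mu_0,e^{3\delta}\omega)-N(\mu_0,e^{-3\delta}\omega)$, and your Steps~3 and~4 then apply with dilation factors $e^{\pm3\delta}$. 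There is also a minor slip in Step~3: for $N_{\omega/t}$ the transformed density is $t^m\prod_j g(t\omega_j)$ on $[-1/t,1/t]^m$, not $t^{-m}\prod_j g(\omega_j/t)$.
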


Wang's approach to Theorem~\ref{thm:wegner} is based on the approximation of the effective Hamiltonian 
$Q_V(\mu)$ by its leading term $\widehat{V} - \mu$. 
By Proposition~\ref{Qexpand.prop} we can write
\begin{equation}\label{QV.RV}
Q_V(\mu) = \widehat{V} - \mu +  F_V(\mu),
\end{equation}
with the remainder estimate
\begin{equation}\label{RV.est1}
\norm{F_V(\mu)} \le Ch,
\end{equation}
where $C$ and $h_0$ depend only $\norm{V}_{C^r_{\rm b}}$ for some $r$. 
For the proof of Theorem~\ref{thm:wegner}, we need to consider a more explicit dependence that $F$ has on
$\mu$ and $V$.

\begin{lemma}\label{RV.est.lemma}
Assume $V \in S(\bbR^2)$ as defined in \eqref{symS.def}, with $\abs{V} \le 1$.
For $h \le h_0$ the remainder term in \eqref{QV.RV} satisfies 
\begin{equation}\label{RV.est2}
\norm{F_V(\mu) - F_V(\mu_0)} \le Ch^3\abs{\mu-\mu_0}.
\end{equation}
and
\begin{equation}\label{RV.est3}
\norm{F_{e^tV}(\mu) - F_V(\mu)} \le Cht
\end{equation}
for $\abs{t}\le 1$, where $C$ and $h_0$ depend only $n$ and $\norm{V}_{C^r_{\rm b}}$ for some $r$.
\end{lemma}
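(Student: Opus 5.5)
We work directly from the power series \eqref{QV.exp}. Since the $-\mu$ and $\widehat V$ terms are split off, we have
\[
F_V(\mu) = \sum_{j\ge1} (-1)^j h^j R_n^* W \paren[\big]{E_0(\mu) W}^j R_n + \paren[\big]{R_n^* W R_n - \widehat{V}}.
\]
The second bracket is independent of $\mu$ and linear in $V$. The $\mu$-dependence therefore enters only through $E_0(\mu)$, and the $V$-dependence through $W$, which is linear in $V$. Both estimates thus reduce to term-by-term bounds on the series, summed geometrically for $h\le h_0$.

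\emph{Estimate \eqref{RV.est2}.} From \eqref{E0.def},
\[
E_0(\mu)-E_0(\mu_0) = \sum_{l\ne n} \frac{h(\mu-\mu_0)}{(2(l-n)-h\mu)(2(l-n)-h\mu_0)} R_lR_l^*,
\]
so $\norm{E_0(\mu)-E_0(\mu_0)}\le Ch\abs{\mu-\mu_0}$. Telescoping the $j$-th term gives $j$ differences, each with one factor $E_0(\mu)-E_0(\mu_0)$ and the rest bounded. A naive bound then gives $h^j\cdot jCh\abs{\mu-\mu_0}\norm{W}^{j+1}$, which is only $O(h^2)$ at $j=1$. To reach $h^3$, we reuse the key cancellation from the proof of Proposition~\ref{Qexpand.prop}. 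Because $E_0(\mu)R_n=0$ and $R_n^*E_0(\mu)=0$ for every $\mu$, the outer factors $W$ may be replaced by $h^{1/2}\Op(r_1)$. The same holds for the difference $E_0(\mu)-E_0(\mu_0)$, which is also built from $R_lR_l^*$ with $l\ne n$. This gains an extra factor $h$ in every term, with the $\Op(r_1)R_n$ bound $Cn^{1/2}\norm{V}_{C^r_{\rm b}}$ as before. The $j$-th term is then bounded by $C j h^{j+2}\norm{V}^{j+1}\abs{\mu-\mu_0}$, and summing over $j\ge1$ gives $Ch^3\abs{\mu-\mu_0}$.

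\emph{Estimate \eqref{RV.est3}.} Replacing $V$ by $e^tV$ replaces $W$ by $e^tW$, since the quantization \eqref{W.def} is linear in the symbol, and $\widehat{V}$ by $e^t\widehat V$. The $\mu$-independent bracket therefore becomes $e^t(R_n^*WR_n-\widehat V)$. That bracket is $O(h)$ by the $j=0$ computation in Proposition~\ref{Qexpand.prop}, so its change is at most $\abs{e^t-1}\,Ch\le C'h\abs{t}$ for $\abs t\le1$. The $j$-th series term is multiplied by $e^{(j+1)t}$, so its change is bounded by $\abs{e^{(j+1)t}-1}\,h^j\norm{R_n^*W(E_0W)^jR_n}$. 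Using $\abs{e^{(j+1)t}-1}\le (j+1)e^{j+1}\abs t$ and the bound $Ch^{j}\cdot h\,\norm{V}^{j+1}$ from before, the sum over $j\ge1$ is $O(h^2\abs t)\le Ch\abs t$. The factors $e^{j+1}$ are absorbed by shrinking $h_0$. Note that $\norm{e^tV}_{C^r_{\rm b}}\le e\norm{V}_{C^r_{\rm b}}$, so all constants remain uniform.

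\emph{Main obstacle.} The main obstacle is the $h^3$ power in \eqref{RV.est2}. The naive telescoping gives only $h^2$, so the proof must check that the cancellation \eqref{ER.cancel} survives differentiation in $\mu$. If that alone still falls short at $j=1$, the next step is to expand both outer $W$'s to second order, as in the $j=1$ analysis of Proposition~\ref{Qexpand.prop}. The leading $h^2$ coefficient there depends on $\mu$ only through $\frac{2(n+1)}{2-h\mu}-\frac{2n}{2+h\mu}$, whose $\mu$-derivative is $O(h)$; this yields $h^3\abs{\mu-\mu_0}$. The remaining terms are already $O(h^3)$ and carry a further factor $h\abs{\mu-\mu_0}$ from the $E_0$ difference.
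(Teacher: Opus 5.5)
Your proposal is correct and follows the paper's route. The paper derives \eqref{RV.est2} from the analysis in the proof of Proposition~\ref{Qexpand.prop} together with $\norm{E_0(\mu)-E_0(\mu_0)}\le 2h\abs{\mu-\mu_0}$, and \eqref{RV.est3} by inspecting the $V$-dependence of the error terms; you carry out exactly this, via telescoping with the cancellation \eqref{ER.cancel} on the outer $W$'s and the scaling $W\mapsto e^tW$. Your fallback in the last paragraph is not needed: already at $j=1$, the prefactor $h$, the two outer factors $h^{1/2}$ from $\Op(r_1)$, and the factor $h\abs{\mu-\mu_0}$ from the $E_0$ difference give $h^3\abs{\mu-\mu_0}$.
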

\begin{proof}
The estimate \eqref{RV.est2} follows from the analysis in  the proof of 
Proposition~\ref{Qexpand.prop}, together with the observation that
\[
\norm{E_0(\mu) - E_0(\mu_0)} \le 2h\abs{\mu-\mu_0}
\]
for $h \le 1$ and $\mu,\mu_0 \in [-1,1]$. The estimate \eqref{RV.est3} similarly follows from an inspection of 
the $V$ dependence of the error terms in Proposition \ref{Qexpand.prop}.
\end{proof}  

Our replacement for \cite[Lemma~3.1]{wm_wang97} is the following simple spectral estimate.
\begin{lemma}\label{Vhat.lemma}
Suppose that $V \in L^2(\bbR^2) \cap S(\bbR^2)$. Then,
for any $b_0 > 0$,
\[
\#\set*{\mu \in \sigma(\widehat{V}) :\> \mu \ge b_0} \le \frac{2\pi}{hb_0^2} \norm{V}^2_{L^2(\bbR^2)}.
\]
\end{lemma}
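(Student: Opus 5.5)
The bound will come from a Hilbert--Schmidt (Chebyshev-type) counting argument. Since $V$ is real-valued, $\widehat{V}$ is self-adjoint. Its integral kernel is
\[
K(y,y') = \frac{1}{2\pi h}\int_{\bbR} e^{i(y-y')\eta/h} V\paren*{\tfrac{y+y'}2,\eta}\,d\eta .
\]
For $V \in L^2(\bbR^2)$ this kernel lies in $L^2(\bbR^2)$, so $\widehat{V}$ is Hilbert--Schmidt, hence compact, and its nonzero spectrum consists of eigenvalues of finite multiplicity. The plan is to compute $\norm{\widehat{V}}_{\rm HS}^2$ exactly and then compare it with the number of eigenvalues in $[b_0,\infty)$.

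\textbf{Step 1: compute the Hilbert--Schmidt norm.} Change variables to $s = \tfrac{y+y'}2$ and $t = y-y'$; the Jacobian of $(y,y')\mapsto(s,t)$ is $1$. Then $K$ becomes $(2\pi h)^{-1}$ times the one-dimensional Fourier transform of $V(s,\cdot)$, evaluated at $-t/h$. Applying Plancherel in the $\eta$ variable for each fixed $s$ gives
\[
\int \abs{K}^2\,dy\,dy' = \frac{1}{(2\pi h)^2}\int\!\!\int \abs*{\int e^{it\eta/h}V(s,\eta)\,d\eta}^2 dt\,ds = \frac{1}{(2\pi h)^2}\cdot 2\pi h\,\norm{V}_{L^2}^2 = \frac{1}{2\pi h}\norm{V}_{L^2(\bbR^2)}^2 .
\]
This is the only place where real computation is needed. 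The one point requiring care is the factor $2\pi h$ produced by the rescaling $\eta/h$ in Plancherel. Smoothness and the $S(\bbR^2)$ assumption are needed only to justify that $\widehat{V}$ is well defined; the identity itself holds for $L^2$ symbols by density.

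\textbf{Step 2: count eigenvalues.} List the eigenvalues $\mu_k \ge b_0$ of $\widehat{V}$ with multiplicity. Each satisfies $\mu_k^2 \ge b_0^2$, so
\[
b_0^2\,\#\set{\mu\in\sigma(\widehat{V}):\mu\ge b_0} \le \sum_k \mu_k^2 \le \norm{\widehat{V}}_{\rm HS}^2 = \frac{1}{2\pi h}\norm{V}_{L^2}^2 .
\]
This yields $\#\set{\mu \ge b_0} \le (2\pi h b_0^2)^{-1}\norm{V}_{L^2}^2$, which is stronger than the stated bound $\frac{2\pi}{hb_0^2}\norm{V}_{L^2}^2$ by a factor of $4\pi^2$.

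No step is a genuine obstacle. The main risk is a constant or normalization slip in Step 1, but any correct constant of the form $C/h$ suffices for the stated inequality, since $2\pi \ge (2\pi)^{-1}$.
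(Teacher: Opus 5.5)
Your proof is correct and follows the paper's argument: bound the eigenvalue count by $b_0^{-2}\norm{\widehat{V}}_{\rm HS}^2$, then compute the Hilbert--Schmidt norm of the kernel via the change of variables $(s,t)$ and Plancherel. Your constant $(2\pi h b_0^2)^{-1}$ is the right one for the normalization in \eqref{widehat.def}; the paper's kernel omits the factor $1/(2\pi)$, which is why it reaches the weaker but still valid constant $2\pi/(hb_0^2)$.
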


\begin{proof}
Since $\widehat{V}$ is self-adjoint, its Hilbert-Schmidt norm is given by
\[
\norm{\widehat{V}}^2_{\rm HS} = \sum_{\mu \in \sigma(\hat{V})} \mu^2.
\]
Restricting the sum to $\mu \ge b_0$ gives the estimate 
\[
\#\set*{\mu \in \sigma(\widehat{V}) :\> \mu \ge b_0} \le \frac{1}{b_0^2} \norm{\widehat{V}}^2_{\rm HS}.
\]
To compute the Hilbert-Schmidt norm we note that $\widehat{V}$ has the integral kernel
\[
K(y,y') = \frac{1}{h} \int_{\bbR} e^{i(y-y')\eta/h} V(\tfrac{y+y'}2,\eta) \>d\eta.
\]
Thus by a simple change of variables and an application of Plancherel's theorem, 
\[
\begin{split}
\norm{\widehat{V}}^2_{\rm HS} &= \norm{K}^2_{L^2(\bbR^2)} \\
&= \frac{2\pi}{h} \norm{V}^2_{L^2(\bbR^2)}.
\end{split}
\]
\end{proof}

\begin{proof}[Proof of Theorem~\ref{thm:wegner}]
The scaling aspect of the proof makes it convenient to write the interval in the form
\[
I = \sqbrak*{\mu_0e^{-\delta},\mu_0e^{\delta}}.
\]
In the notation of \eqref{QV.RV}, let 
\[
T_\omega(\mu) := \widehat{V}_\omega + F_{V_\omega}(\mu),
\]
By Theorem~\ref{bellissard.thm}, $B_n + \mu$ is an eigenvalue of $H_{V_\omega}$ with $\abs{\mu}\le1$ if and only if 
$\mu$ is an eigenvalue of $T_\omega(\mu)$.
Assume that $\mu$ satisfies these conditions with $\mu\in [\mu_0e^{-\delta},\mu_0e^{\delta}]$,
We can then estimate the resolvent of $T_{\omega}(\mu_0)$ by 
\[
\norm*{(T_{\omega}(\mu_0)-\mu)^{-1}} \ge \frac{1}{\norm*{F_{V_\omega}(\mu_0) - F_{V_\omega}(\mu)}}.
\]
By the spectral theorem this implies that
$T_{\omega}(\mu_0)$ has at least one eigenvalue in the range 
\[
\mu \pm \norm*{F_{V_\omega}(\mu_0) - F_{V_\omega}(\mu)}.
\]
By the bound \eqref{RV.est2}, for $h \ge h_0$ we have
\begin{equation}\label{Fmu.est}
\norm*{F_{V_\omega}(\mu_0) - F_{V_\omega}(\mu)} \le Ch^3\abs{\mu_0-\mu},
\end{equation}
where the constant $C$ depends only on the sup norms of $v_0$ and not on $\omega$ or any other 
parameters.

By choosing $h_0$ sufficiently small, depending only on $v_0$, 
we can guarantee that the constant in \eqref{Fmu.est} satisfies $Ch_0^3 \le e^{-2}$.
Thus for $h \le h_0$, $0<\delta<1$, and $\mu$ as above, it follows that 
$T_{\omega}(\mu_0)$ has at least one eigenvalue within the interval 
$[\mu_0e^{-2\delta},\mu_0e^{2\delta}]$.

For $b \in (0,1)$ and $\omega$ as above, set 
\[
N(b, \omega) = \#\set*{\gamma \in \sigma(T_{\omega}(\mu_0)): \gamma \ge b}.
\]
To summarize the arguments made above, 
the existence of $\mu\in (0,1]$ such that $B_n + \mu$ is an eigenvalue of $H_{V_\omega}$ 
is equivalent to $0 \in \sigma(Q_{V_{\omega}}(\mu))$.
We have shown that if this holds for $\mu\in [\mu_0e^{-\delta},\mu_0e^{\delta}]$ then
\begin{equation}\label{N2d}
N(\mu_0e^{-2\delta}, \omega) - N(\mu_0e^{2\delta}, \omega) \ge 1.
\end{equation}
Since the left side of \eqref{N2d} is a positive, integer-valued random variable, we can thus estimate
\[
\bbP\paren[\Big]{\exists \mu \in \sqbrak*{\mu_0e^{-\delta},\mu_0e^{\delta}}:\> \mu \in \sigma(T_\omega(\mu))}
\le \bbE\paren[\Big]{N(\mu_0e^{-2\delta}, \omega) - N(\mu_0e^{2\delta}, \omega)}.
\]

Our goal is to move the scaling factors $e^{\pm 2\delta}$ from $\mu_0$ onto the vector $\omega$. Observe that
\[
\begin{split}
T_{e^t \omega}(\mu_0) &= e^t \widehat{V}_\omega + F_{V_{e^t\omega}}(\mu_0) \\
&= e^tT_{\omega}(\mu_0) - e^tF_{V_\omega}(\mu_0) + F_{e^tV_\omega}(\mu_0).
\end{split}
\]
By the estimates from Lemma~\ref{RV.est.lemma} and equation (\ref{RV.est1}), the remainder terms are bounded by
$Ch \abs{t}$, where $C$ depends only on $v_0$.
For $h\le h_0$, with $h_0$ depending now on both $v_0$ and $b_0$, we can assume that
\[
\norm*{T_{e^t \omega}(\mu_0) - e^t T_{\omega}(\mu_0)} \le \mu_0(1-e^{-\delta})
\]
for $\abs{t} \le 3\delta$.
By standard perturbation theory, this implies that 
\[
\begin{split}
N(\mu_0, e^{-3\delta}\omega)
&\le \#\set*{\gamma \in e^{-3\delta}\sigma(T_{\omega}(\mu_0)): \gamma \ge \mu_0e^{-\delta}} \\
&= N(\mu_0e^{2\delta}, \omega).
\end{split}
\]
Similarly,
\[
\begin{split}
N(\mu_0, e^{3\delta}\omega)
&\ge \#\set*{\gamma \in e^{3\delta}\sigma(T_{\omega}(\mu_0)): \gamma \ge \mu_0e^{\delta}} \\
&= N(\mu_0e^{-2\delta}, \omega).
\end{split}
\]
Hence, for $h \ge h_0$ and $\mu_0 \ge b_0$,
\begin{equation}\label{N.ineq}
N(\mu_0e^{-2\delta}, \omega) - N(\mu_0e^{2\delta}, \omega) \le N(\mu_0, e^{3\delta}\omega) - N(\mu_0, e^{-3\delta}\omega).
\end{equation}

Let $m = \abs{\Lambda}$ and write $\omega$ as a vector $(\omega_1, \dots \omega_m)$. For $t <0$ we can estimate
\[
\bbE(N(\mu_0, e^{t}\omega) - N(\mu_0, e^{-t}\omega)) \\
= \int_{[-1,1]^{m}} \sqbrak*{N(\mu_0, e^{t}\omega) - N(\mu_0, e^{-t}\omega)} G(\omega) d^m\omega,
\]
where
\[
G(\omega) := \prod_{j=1}^m g(\omega_j).
\]
A change of variables for each term gives
\begin{equation}\label{EN.scale}
\begin{split}
&\int_{[-1,1]^{m}} \sqbrak[\Big]{N(\mu_0, e^{t}\omega) - N(\mu_0, e^{-t}\omega)} G(\omega) d\omega \\
&\qquad= \int_{[-e^t,e^t]^{m}} N(\mu_0,\omega) \sqbrak[\Big]{e^{-mt} G(e^{-t}\omega) -  e^{mt} G(e^{t}\omega)}d^m\omega 
\end{split}
\end{equation}
By Lemma~\ref{Vhat.lemma} and the bound \eqref{RV.est1} we can estimate for $h \ge h_0$ and $\mu_0 \ge b_0$,
\[
\begin{split}
\sup_{\omega} N(\mu_0,\omega) &\le \#\set*{\mu \in \sigma(\widehat{V}_\omega) :\> \mu \ge \mu_0e^{-\delta}} \\
&\le \frac{2\pi e^{2\delta}}{hb_0^2} \norm{V_\omega}_{L^2(\bbR^2)}^2.
\end{split}
\]
Since $\abs{\omega} \le 1$ we have a bound
\[
\norm{V_\omega}_{L^2(\bbR^2)}^2 \le m \norm{v_0}_{L^2(\bbR^2)}^2,
\]
so that for $\delta \in (0,1)$,
\begin{equation}\label{supN}
\sup_{\omega} N(\mu_0,\omega) \le C\frac{m}{hb_0^2}
\end{equation}
where $C$ depends only on $v_0$.

To handle the contribution from $G$ in \eqref{EN.scale} we set $g_t(w) = e^tg(e^tw)$ 
and expand the difference of products to obtain
\[
\begin{split}
e^{-mt} G(e^{-t}\omega) -  e^{mt} G(e^{t}\omega) 
&= \prod_{j=1}^m g_{-t}(\omega_j) - \prod_{j=1}^m g_{t}(\omega_j) \\
&= \sum_{j=1}^m g_{-t}(\omega_1) \cdots g_{-t}(\omega_{i-1}) 
\sqbrak[\Big]{g_{-t}(\omega_i) - g_t(\omega_i)} \\
&\hskip.5in\times g_t(\omega_{i+1}) \cdots g_t(\omega_m) 
\end{split}
\]
Since $g$ is a smooth probability density, this gives
\[
\begin{split}
\int_{[-e^t,e^t]^{m}} \abs[\Big]{e^{-mt} G(e^{-t}\omega) -  e^{mt} G(e^{t}\omega)}\>d^m\omega 
&= m\int_{-e^t}^{e^t} \abs[\big]{g_{-t}(w) - g_t(w)} \>dw \\
&\le Cmt,
\end{split}
\]
where $C$ depends only on $g$. Applying this estimate along with \eqref{supN} to \eqref{EN.scale} yields the
estimate 
\[
\bbP \paren[\Big]{N(\mu_0, e^{3\delta}\omega) - N(\mu_0, e^{-3\delta}\omega) \ge 1} \le C \frac{m^2\delta}{hb_0^2}
\]
where $C$ depends on $v_0$ and $g$. In view of the discussion before \eqref{N.ineq}, 
this completes the argument. 
\end{proof}

\appendix


\section{The Grushin method and the effective Hamiltonian}\label{grushin.sec}

In this appendix, we outline the derivation of the effective Hamiltonian \eqref{QV.exp} following \cite{HS89}. As before, we set $h = B^{-1}$.   We start with a key unitary transformation from \cite[Proposition 1.1]{HS89}.
To explain this unitary operator, we recall the harmonic oscillator operator 
$L := -\del_x^2 + x^2$ acting on $L^2(\bbR)$. We use $L\otimes I$ to denote the corresponding
operator acting on $L^2(\bbR^2)$ through the first variable. 
We also recall the quantization $W$ of the potential $V \in C_b^r (\mathbb{R}^2)$, acting on $L^2(\bbR^2)$,  defined in \eqref{W.def}.

\begin{lemma}\label{l:normal_form}
There exists a unitary operator $U$ on $L^2(\mathbb{R}^2)$ such that 
\begin{equation}\label{HV.conj}
U^{-1} H_V U = h^{-1} L\otimes I + W.
\end{equation}
\end{lemma}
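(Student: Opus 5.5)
We construct $U$ as a composition of explicit unitary operators, each of which simplifies $H_V$ a little further.

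\emph{Step 1: rescaling.} Apply the dilation $u(x,y)\mapsto B^{1/2}u(B^{1/2}x,B^{1/2}y)$. In the symmetric gauge $A=\frac B2(y,-x)$, this conjugates $H_0$ to $B\,(-i\nabla - \tfrac12(y,-x))^2$. It also replaces the potential $V(x,y)$ by $V(h^{1/2}x,h^{1/2}y)$. After this step, $h^{-1}$ multiplies a Landau operator with unit field.

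\emph{Step 2: normal form of the unit-field operator.} The classical symbol is $(\xi-\tfrac y2)^2+(\eta+\tfrac x2)^2$. Use the linear symplectic coordinates
\[
X=\xi-\tfrac y2,\qquad \Xi=\eta+\tfrac x2,\qquad Y=\tfrac{\xi}{2}+\tfrac{y}{4}\ \text{(up to normalization)},\qquad H=\ldots
\]
chosen so that $\{X,\Xi\}=1$, $\{Y,H\}=1$, and the $(X,\Xi)$ pair Poisson-commutes with the $(Y,H)$ pair. By the metaplectic representation, this linear symplectic map $\kappa$ is implemented by a unitary $U_1$ on $L^2(\bbR^2)$. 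Because the symbols are quadratic, exact Egorov holds for Weyl quantization: $U_1^{-1}\Op(q)U_1=\Op(q\circ\kappa)$ with no error term. The magnetic Laplacian therefore becomes $L\otimes I$, with $L=-\partial_x^2+x^2$ acting in the first variable.

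\emph{Step 3: the potential.} The multiplication by $V(h^{1/2}x,h^{1/2}y)$ is the Weyl quantization of a symbol independent of the momenta. By exact Egorov it becomes $\Op(V(h^{1/2}\cdot)\circ\kappa^{-1})$. Writing the old position variables as linear combinations of the new ones gives an argument of the form $V\big(\tfrac{y}{\cdot}+h^{1/2}x\ldots,\ h^{1/2}(\ldots)-h^{1/2}\xi\big)$. A further dilation in the second variable, $y\mapsto h^{-1/2}y$, together with the corresponding scaling of $\eta$, produces exactly the arguments $y+h^{1/2}x$ and $h\eta-h^{1/2}\xi$. Midpoint evaluation in \eqref{W.def} is precisely Weyl quantization, which gives $W$. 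This dilation acts only in the second factor, so it commutes with $L\otimes I$.

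\emph{Main obstacle.} The only delicate point is bookkeeping: choosing the signs and normalizations of $\kappa$ and of the final dilation so that they reproduce the precise combination in \eqref{W.def}, including the orientation $-h^{1/2}\xi$. I would first fix $\kappa$ by requiring $X,\Xi$ to be the kinetic momenta and $Y,H$ to be the guiding-center coordinates. The remaining constants then follow by matching \eqref{W.def}, possibly after composing with the reflection $\xi\mapsto-\xi$, which is unitary.
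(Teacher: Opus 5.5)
Your construction is correct and is essentially what the paper's citation of Helffer--Sj\"ostrand amounts to: $U$ is the metaplectic lift of one linear symplectic map (dilation, passage to kinetic-momentum/guiding-center coordinates, dilation in the second factor), and in the unit-field variables the choice $(x,\xi)=(\eta_1+\tfrac{x_1}{2},\,\xi_1-\tfrac{y_1}{2})$, $(y,\eta)=(\tfrac{x_1}{2}-\eta_1,\,\tfrac{y_1}{2}+\xi_1)$ gives $x_1=y+x$, $y_1=\eta-\xi$ directly. Two small corrections: exact Egorov holds for every symbol because $\kappa$ is linear, not because the symbols are quadratic, and this general form is what Step 3 uses since $V$ is not quadratic; and no reflection $\xi\mapsto-\xi$ is needed or even available, since that map is anti-symplectic and has no unitary implementation, while the sign in $h\eta-h^{1/2}\xi$ is already forced by requiring the two old position coordinates to Poisson-commute.
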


The transformation $U$ can be constructed explicitly as the quantization of a sequence of linear symplectomorphisms, but
we will not need those details here. Note that Lemma~\ref{HV.conj} demonstrates the connection between the 
Landau level $n$ and the corresponding eigenstate $\psi_n$ of $L$ defined in \eqref{psil.def}.

The Grushin method is applied to the operator $L\otimes I + hW$ as follows.
Fix a Landau level $n$ and define the family of operators for $\mu \in \bbC$,
\begin{equation}\label{scrP.def}
\mathscr{P}_V(\mu) = \begin{bmatrix} L\otimes I + hW - h(B_n+\mu) & R_n \\ R_n^* & 0 \end{bmatrix},
\end{equation}
acting on $L^2(\bbR^2) \oplus L^2(\bbR)$, where $R_n: L^2 ( \bbR) \to L^2(\bbR^2) \oplus L^2(\bbR)$ is the map associated to $\psi_n$ as in \eqref{Rl.def}.

For $\abs{h\mu} < 2$ the operator $E_0(\mu)$ defined in \eqref{E0.def} commutes with $L\otimes I$ and satisfies
\[
(L\otimes I-hB_n - h\mu)E_0 = 1 - R_nR_n^*.
\]
Since $(L\otimes I-hB_n)R_n = 0$, $E_0(\mu)R_n = 0$, and $R_n^*R_n$ is the identity on $L^2(\bbR)$, it follows that
$\mathscr{P}_0(\mu)$ is invertible with inverse given by
\[
\mathscr{E}_0(\mu) = 
\begin{bmatrix}
E_0(\mu) & R_n \\
R_n^* & h\mu
\end{bmatrix}.
\]
We can write \eqref{scrP.def} as $\mathscr{P}_V = \mathscr{P}_0 + h\mathscr{W}$, where
\[
\mathscr{W} = \begin{bmatrix}
W & 0 \\
0 & 0
\end{bmatrix}.
\]
As noted in \S\ref{eff.H.sec}, we have the bounds $\norm{R_n} = 1$, $\norm{E_0(\mu)} = (2-h\mu)^{-1}$ for 
$\abs{h\mu} < 2$, and $\norm{W} \le C \norm{V}_{C^r_{\rm b}}$.
Therefor there exists $h_0\le 1$ depending only on $\norm{V}_{C^r_{\rm b}}$ 
such that for $h \le h_0$ and we can invert $\mathscr{P}_B$ by taking
\begin{equation}\label{scrE.def}
\mathscr{E}_V(\mu) = \paren[\big]{1+ \mathscr{E}_0(\mu) h \mathscr{W}}^{\!-1} \mathscr{E}_0(\mu),
\end{equation}
for $\abs{\mu} \le 1$.

Using the standard Grushin notation, we write the inverse in block form as
\begin{equation}\label{scrE.block}
\mathscr{E}_V(\mu) = \begin{bmatrix} E(\mu) & E^+(\mu) \\ E^-(\mu) & E^{-+}(\mu) \end{bmatrix}.
\end{equation}
The effective Hamiltonian that we are looking for is then defined by 
\begin{equation}\label{QV.def}
Q_V(\mu) = -h^{-1}  E^{-+}(\mu),
\end{equation}
acting on $L^2(\bbR)$.

\begin{proof}[Proof of Theorem~\ref{bellissard.thm}]
By Lemma~\ref{l:normal_form}, $B_n + \mu \in \sigma(H_V)$ for $\abs{\mu} \le 1$ if and only if
$L\otimes I + hW - h(B_n+\mu)$ fails to be invertible. The Schur complement formula, applied to the block forms  \eqref{scrP.def} and \eqref{scrE.block}, implies that $L\otimes I+ hW - h(B_n+\mu)$ is invertible if and only if 
$Q_V(\mu)$ is invertible. These inverses are given by 
\[
\paren[\big]{L\otimes I + hW - h(B_n+\mu)}^{\!-1} = E(\mu) + h^{-1} E^+(\mu) Q_V(\mu)^{-1} E^-(\mu)
\]
and
\[
Q_V(\mu)^{-1} = - hR_n^* \sqbrak[\Big]{E(z) - \paren[\big]{L\otimes I + hW - h(B_n+\mu)}^{\!-1}} R_n.
\]
It is shown in \cite[Prop.~2.3.1]{HS89} that $Q_V(\mu)$
is a semiclassical pseudodifferential operator with respect to $h$.
\end{proof}

To conclude this section, let us explain how \eqref{scrE.def} leads to the formula \eqref{QV.exp}
for $Q_V(\mu)$.
For $h \le h_0$ the right side of \eqref{scrE.def} can be written as a geometric series
\[
\mathscr{E}_V(\mu) = \sum_{k=0}^\infty (-h)^k (\mathscr{E}_0(\mu) \mathscr{W})^k \mathscr{E}_0(\mu),
\]
which converges in operator norm for $\abs{\mu} \le 1$. A simple computation gives
\[
(\mathscr{E}_0(\mu) \mathscr{W})^k  = \begin{bmatrix} (E_0(\mu)W)^k & 0 \\ R_n^*W(E_0(\mu)W)^{k-1} & 0 
\end{bmatrix}
\]
for $k \ge 1$. It follows that
\[
E^{-+}(\mu) = h\mu + \sum_{k=1}^\infty (-h)^k R_n^*W (E_0(\mu)W)^{k-1} R_n.
\]
Dividing by $-h$ yields \eqref{QV.exp}.

\section{Radial symbol quantizations}\label{radial.sym.sec}

In this section we discuss some examples of potential functions $v_0$ which demonstrate that the spectral 
requirements of \S\ref{minami.sec} can be satisfied.
For this discussion, let us recall the semiclassical Weyl quantization of a symbol $a(y,\eta)$ defined in \eqref{widehat.def},
\[
\widehat{a}f(y) := \frac{1}{2\pi h} \int_{\bbR^2} e^{i(y-y')\eta/h} a(\tfrac{y+y'}2,\eta) f(y')\>dy'\>d\eta.
\]
The quantization of the quadratic polynomial
\[
q(y,\eta) := y^2 + \eta^2,
\]
is the quantum harmonic oscillator 
\[
\widehat{q} = - h^2 \del_y^2 + y^2.
\] 
with eigenvalues $(2k+1)h$ for $k \in \bbN_0$. The corresponding eigenfunctions 
are rescalings of the oscillator states $\psi_k$ introduced in \eqref{psil.def},
\begin{equation}\label{phik.basis}
\phi_k(y) := h^{-\frac14} \psi_k\paren*{h^{-\frac12}y}
\end{equation}
for $k \in \bbN_0$.

\subsection{Mehler's formula}
The quantization of the function $e^{-zq}$ for $z \in \bbC$ yields an operator with a Gaussian kernel function. 
A classical result of Mehler \cite{mehler_1866} gives the expansion of a Gaussian kernel in terms of Hermite polynomials. 
This implies in particular that  the quantization of $e^{-zq}$ is a diagonal operator 
with respect to the basis \eqref{phik.basis}, an observation which has appeared many times in the
literature. Lemmas~\ref{mehler.lemma} and \ref{eig.Psi.lemma} below are adapted from \cite[\S4]{Ho95} and \cite[Lemma~1.1]{Ler19}.

\begin{lemma}\label{mehler.lemma}
With $q(y,\eta) := y^2 + \eta^2$ and  $z \in \bbC$ with for $\re z \ge 0$ and  $hz \ne -1$, we have
\[
\widehat{e^{-zq}} = \sum_{k=0} \frac{(1-hz)^k}{(1+hz)^{k+1}} \Pi_k
\] 
 where $\Pi_k$ denotes the orthogonal projection onto $\phi_k$ defined in \eqref{phik.basis}.
\end{lemma}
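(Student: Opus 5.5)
We plan to prove Lemma~\ref{mehler.lemma} by reducing to a real-parameter semigroup identity and then extending analytically in $z$. Since $q$ is invariant under rotations of the $(y,\eta)$ plane, and the metaplectic operators implementing these rotations commute with $\widehat{q}$ (Weyl quantization is exactly covariant under linear symplectic maps), $\widehat{e^{-zq}}$ commutes with $\widehat{q}$. Because $\widehat{q}$ has simple spectrum, this forces $\widehat{e^{-zq}}$ to be diagonal in the basis $\phi_k$, i.e.\ $\widehat{e^{-zq}} = \sum_k m_k(z)\Pi_k$. It then remains to compute the multipliers $m_k(z) = \brak{\phi_k, \widehat{e^{-zq}}\phi_k}$.

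To compute $m_k$, we first use the Weyl symbol pairing. For any symbol $a$ one has $\brak{\phi_k,\widehat{a}\phi_k} = \int a\, W_k\,dy\,d\eta$, where $W_k$ is the Wigner function of $\phi_k$. For the Hermite functions this is the classical formula
\[
W_k(y,\eta) = \frac{(-1)^k}{\pi h} e^{-q/h} L_k(2q/h),
\]
with $L_k$ the Laguerre polynomial. Substituting this and passing to polar coordinates, with $s = q/h$, gives
\[
m_k(z) = (-1)^k\int_0^\infty e^{-(1+hz)s} L_k(2s)\,ds.
\]
This Laplace transform is standard: $\int_0^\infty e^{-ps}L_k(2s)\,ds = (p-2)^k/p^{k+1}$. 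Taking $p = 1+hz$ yields $m_k(z) = (1-hz)^k/(1+hz)^{k+1}$, as claimed.

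As an alternative route, one can generate all $m_k$ at once. Apply the kernel of $\widehat{e^{-zq}}$ to the generating function $e^{-(y^2-2\sqrt{2h}\,ty)/(2h)+\ldots}$ of the $\phi_k$. This is a Gaussian integral, and comparing coefficients gives the same ratio. Either route is a direct Gaussian computation.

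The convergence issues are mild, and we would handle them as follows. For $\re z>0$, the symbol $e^{-zq}$ is Schwartz, so $\widehat{e^{-zq}}$ has a Schwartz kernel. For $\re z\ge 0$, the symbol is bounded with all derivatives, so Calder\'on--Vaillancourt gives a bounded operator, and the sum converges strongly since $\abs{1-hz}\le\abs{1+hz}$ in that half-plane. For $\re z = 0$, we would obtain the identity by taking weak limits from $\re z>0$: the kernels converge as distributions and the multipliers converge pointwise. The main obstacle is justifying the Wigner function / Laguerre identity, or equivalently the Laplace integral. If we prefer to avoid quoting it, the generating-function Gaussian computation is the fallback, and it only requires care with the complex square root, namely choosing the branch of $(1+hz)^{-1/2}$ that is positive for real $z$.
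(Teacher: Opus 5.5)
Your proof is correct, but it takes a genuinely different route from the paper's. The paper never separates diagonality from the computation of the multipliers. It sets $A(t) = \cosh(ht)^{-1}\,\widehat{e^{-h^{-1}\tanh(ht)\,q}}$ and uses the exact Moyal product $q\# e^{-cq} = (1-h^2c^2)qe^{-cq} + h^2 c\, e^{-cq}$ (the expansion terminates because $q$ is quadratic). This gives $\partial_t A + \widehat{q}A = 0$ with $A(0)=I$, hence $A(t) = e^{-t\widehat{q}}$, so diagonality and the eigenvalues $\cosh(ht)e^{-(2k+1)ht}$ come out together. The substitution $hz=\tanh(ht)$ then yields the formula for $0\le hz<1$, and analytic continuation plus continuity cover the rest of $\re z\ge 0$. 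You instead get diagonality from metaplectic rotation invariance together with the simplicity of $\sigma(\widehat q)$. You then compute $m_k$ from the classical Laguerre formula for the Wigner function of $\phi_k$ and the Laplace transform of $L_k$. The normalizations all check out ($\int W_k = 1$, $dy\,d\eta = \pi\,dq$ after angular integration, and $\int_0^\infty e^{-ps}L_k(2s)\,ds = (p-2)^k/p^{k+1}$), and they give exactly $(1-hz)^k/(1+hz)^{k+1}$, as in the statement; the last display of the paper's proof interchanges $1\pm hz$, which is a misprint. The paper's argument is self-contained apart from a two-line symbol computation, but it needs the $\tanh$ reparametrization and an analytic continuation. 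Yours imports two classical special-function identities, but computes the multipliers directly for all $\re z>0$ with no continuation step. It also has a useful by-product: for any radial symbol it gives $\brak{\phi_k,\widehat{\Psi(q)}\phi_k} = (-1)^k\int_0^\infty \Psi(hs)L_k(2s)e^{-s}\,ds$. Since $Q_k(u) = (-1)^kL_k(2u)$, this is exactly the direct integral formula the paper records after Lemma~\ref{eig.Psi.lemma}, obtained without Mehler's formula at all.

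One small correction. On the line $\re z = 0$, $z\ne 0$, the symbol $e^{-zq}$ is not bounded with all its derivatives (for instance $\partial_y e^{i\tau q} = 2i\tau y\, e^{i\tau q}$), so Calder\'on--Vaillancourt does not apply there. This is harmless: your weak-limit argument, which is the same ``continuity'' step the paper uses, already identifies the operator on that line and shows it is bounded, since $\abs{m_k(z)}\le\abs{1+hz}^{-1}\le 1$.
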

\begin{proof}
To keep the notation readable, let us denote the quantization \eqref{widehat.def} by
\[
\Op(a) := \widehat{a}.
\]
For $t \ge 0$ consider the family of operators 
\[
A(t) := \frac{1}{\cosh(ht)} \Op\paren*{e^{-\frac{1}{h} \tanh (ht) q}}.
\]
Differentiation gives
\begin{equation}\label{dotA}
\del_t A(t) = \frac{h \sinh (ht)}{\cosh^2(ht)} A(t) - \frac{1}{\cosh^3(ht)} \Op\paren*{qe^{-\frac{1}{h} \tanh (ht) q}}.
\end{equation}
We can simplify this expression by computing the Moyal product
\[
q \# e^{-cq} =  (1-h^2c^2) qe^{-cq} + h^2c e^{-cq},
\]
which implies that
\[
\Op(q) A(t) = \frac{1}{\cosh^3 (ht)} \Op\paren*{qe^{-\frac{1}{h} \tanh (ht) q}}
- h \frac{\tanh(ht)}{\cosh(ht)} A(t).
\]
In combination with \eqref{dotA} this shows that
\[
\del_t A(t) + \Op(q)A(t) = 0.
\]
Since $A(0) = I$, by integrating we obtain
\[
A(t) = \exp(-t\Op(q)).
\]
Applying $A(t)$ to an element $\phi_k$ of the basis \eqref{phik.basis} thus gives
\begin{equation}\label{Mt.phik}
A(t) \phi_k = e^{-(2k+1)ht}\phi_k.
\end{equation}

Now let us set $z = h^{-1} \tanh(ht)$ and solve for 
\[
ht = \frac{1}{2} \log \paren*{\frac{1+hz}{1-hz}}.
\]
This implies
\[
e^{-(2k+1)ht} = \paren*{\frac{1-hz}{1+hz}}^{\!k+\frac12}
\]
and
\[
\cosh(ht) = (1 - h^2z^2)^{-\frac12}.
\]
Substituting these calculations into \eqref{Mt.phik} gives
\[
\widehat{e^{-zq}} \phi_k = \frac{(1+hz)^k}{(1-hz)^{k+1}}\phi_k,
\]
which proves the result for $0\le hz < 1$. The general result then follows by analytic continuation for $\re z > 0$
and then by continuity to $\re z = 0$.
\end{proof}


\subsection{Radially symmetric symbols}
Let us consider a radial symbol of the form
\begin{equation}\label{psi.radial}
\psi(y,\eta) = \Psi(y^2+\eta^2),
\end{equation}
where $\Psi \in \calS(\bbR)$ is even. The assumption of evenness is convenient because it allows us to use the 
cosine form of the Fourier transform,
\[
\tilde{\Psi}(\tau) := \int_0^\infty \cos (\tau u) \Psi(u)\>du.
\]

\begin{lemma}\label{eig.Psi.lemma}
Suppose $\psi$ is given by \eqref{psi.radial} with $\Psi \in \calS(\bbR)$ even. 
Then $\widehat{\psi}$ is a diagonal operator with respect to the basis \eqref{phik.basis}, 
\[
\widehat{\psi} \phi_k = \lambda_k(h) \phi_k,
\]
where
\begin{equation}\label{eig.Psi}
\lambda_k(h) = \frac{1}{2\pi} \int_{-\infty}^\infty \frac{(1+ih\tau)^k}{(1-ih\tau)^{k+1}} \tilde{\Psi}(\tau)\>d\tau.
\end{equation}
\end{lemma}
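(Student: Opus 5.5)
The plan is to write the radial symbol $\psi = \Psi(q)$, with $q = y^2+\eta^2$, as a superposition of the imaginary Gaussians $e^{\pm i\tau q}$. Lemma~\ref{mehler.lemma}, applied at $\re z = 0$, diagonalizes each of these in the basis \eqref{phik.basis}, so $\widehat\psi$ will inherit the diagonal form.

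\textbf{Step 1: Fourier representation of the symbol.} Since $\Psi$ is even and in $\calS(\bbR)$, the cosine transform $\tilde\Psi$ is even and in $\calS$. Fourier inversion gives $\Psi(u) = \frac{2}{\pi}\int_0^\infty \cos(\tau u)\tilde\Psi(\tau)\,d\tau$. Using evenness of $\tilde\Psi$ to symmetrize, this becomes an integral over $\tau\in\bbR$ of $e^{i\tau u}\tilde\Psi(\tau)$. Hence, pointwise in $(y,\eta)$,
\[
\psi(y,\eta) = c\int_{-\infty}^{\infty} e^{i\tau q(y,\eta)}\tilde\Psi(\tau)\,d\tau ,
\]
with an explicit constant $c$. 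I will fix the sign of the exponent and the constant so that they match the factor $(1+ih\tau)^k/(1-ih\tau)^{k+1}$ and the prefactor $\frac1{2\pi}$ in \eqref{eig.Psi}. That bookkeeping has to be done carefully against the normalization used in Lemma~\ref{mehler.lemma}.

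\textbf{Step 2: apply Mehler.} Take $z = -i\tau$ in Lemma~\ref{mehler.lemma}. This is admissible because $\re z = 0$ and $hz \ne -1$. It gives
\[
\widehat{e^{i\tau q}} = \sum_k \frac{(1+ih\tau)^k}{(1-ih\tau)^{k+1}}\,\Pi_k .
\]
Each coefficient has modulus $(1+h^2\tau^2)^{-1/2}\le 1$, so these operators are uniformly bounded in $\tau$. The claimed formula then follows formally by integrating against $\tilde\Psi(\tau)\,d\tau$ and pairing with $\phi_k$. Consequently $\widehat\psi\phi_k = \lambda_k(h)\phi_k$, with $\lambda_k(h)$ given by \eqref{eig.Psi}.

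\textbf{Step 3: justify the interchange (main obstacle).} The symbol $e^{i\tau q}$ is not in $S(\bbR^2)$, since its derivatives grow in $(y,\eta)$. So the identity $\widehat{\int e^{i\tau q}\tilde\Psi\,d\tau} = \int \widehat{e^{i\tau q}}\,\tilde\Psi\,d\tau$ needs an argument. I will first work with $z = \epsilon - i\tau$ for $\epsilon>0$. Then $e^{-zq}$ is a genuine Gaussian symbol and Lemma~\ref{mehler.lemma} applies directly. I will compute the matrix element $\brak{\widehat{\psi_\epsilon}\phi_k,\phi_l}$, where $\psi_\epsilon = e^{-\epsilon q}\psi$ is the symbol obtained by the same $\tau$-integral. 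This matrix element is an absolutely convergent integral over $(y,y',\eta,\tau)$, because $\phi_k,\phi_l\in\calS$, the factor $e^{-\epsilon q}$ makes the integrand integrable in $\eta$, and $\tilde\Psi\in\calS$. Fubini then lets me move the $\tau$-integral outside the quantization.

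Finally I let $\epsilon\to0$. The symbols $\psi_\epsilon\to\psi$ in the weak sense needed: $\brak{\widehat{\psi_\epsilon}\phi_k,\phi_l}\to\brak{\widehat\psi\phi_k,\phi_l}$ by dominated convergence, since the Weyl pairing is the integral of the symbol against the Wigner function of $(\phi_k,\phi_l)$, which is Schwartz. On the other side, the Mehler coefficients converge boundedly in $\tau$, so dominated convergence in $\tau$ applies as well. This gives $\brak{\widehat\psi\phi_k,\phi_l} = \delta_{kl}\lambda_k(h)$. Since $\{\phi_k\}$ is an orthonormal basis of $L^2(\bbR)$, this proves the lemma.
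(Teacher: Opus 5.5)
Your argument is the paper's argument: write $\Psi(q)$ as a $\tau$-superposition of $e^{i\tau q}$, apply Lemma~\ref{mehler.lemma} with $z=-i\tau$, and integrate the diagonal Mehler coefficients against $\tilde\Psi$. The paper only asserts the interchange, citing $\tilde\Psi\in\calS$ and its ``by continuity'' extension of Mehler to $\re z=0$. Your regularization makes that step rigorous: take $z=\epsilon-i\tau$, apply Fubini to the absolutely convergent matrix elements $\brak{\widehat{\psi_\epsilon}\phi_k,\phi_l}$, then use dominated convergence on both sides. This works because the coefficients $(1-hz)^k/(1+hz)^{k+1}$ have modulus at most $1$ for $\re z\ge 0$.

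One step will not go through as you describe it: the constant $c$ in Step 1 is not yours to choose. Your own inversion formula $\Psi(u)=\frac{2}{\pi}\int_0^\infty\cos(\tau u)\tilde\Psi(\tau)\,d\tau$ symmetrizes to $\psi=\frac{1}{\pi}\int_{-\infty}^{\infty}e^{i\tau q}\tilde\Psi(\tau)\,d\tau$. Your proof therefore yields $\lambda_k(h)=\frac{1}{\pi}\int\frac{(1+ih\tau)^k}{(1-ih\tau)^{k+1}}\tilde\Psi(\tau)\,d\tau$, which is twice the displayed formula \eqref{eig.Psi}.

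This is a normalization slip in the statement, not a flaw in your method; the paper's proof writes $\frac{1}{2\pi}$ at exactly this point. A $k=0$ check confirms it. Inserting $\frac{1}{1-ih\tau}=\int_0^\infty e^{-s(1-ih\tau)}\,ds$ gives $\frac{1}{\pi}\int\frac{\tilde\Psi(\tau)}{1-ih\tau}\,d\tau=\int_0^\infty\Psi(hs)e^{-s}\,ds$. This agrees with the paper's own direct formula for $\lambda_0$. It also agrees with the trace $\frac{1}{2\pi h}\int\psi\cdot 2e^{-q/h}\,dy\,d\eta$ against the Wigner function of $\phi_0$. The $\frac{1}{2\pi}$ version gives half of this.

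The displayed formula is correct if $\tilde\Psi$ is taken to be the full-line transform $\int_{-\infty}^{\infty}\cos(\tau u)\Psi(u)\,du$. So state the normalization you actually use rather than promising to match $\frac{1}{2\pi}$.
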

\begin{proof}
The inverse Fourier transform formula gives
\[
\widehat{\psi} = \frac{1}{2\pi} \int_{-\infty}^\infty \tilde{\Psi}(\tau) \widehat{e^{i\tau q}} \>d\tau,
\]
with $q(y,\eta) := y^2 + \eta^2$ as above.
Since $\tilde{\Psi} \in \calS(\bbR)$, we can substitute the Mehler formula from 
Lemma~\ref{mehler.lemma}, with $z = - i\tau$, and interchange the integral with the sum over $k$. This yields
\[
\widehat{\psi} = \sum_{k=0}^\infty \lambda_k(h) \Pi_k,
\]
with $\lambda_k$ as given in \eqref{eig.Psi}.
\end{proof}

It seems worth pointing out that applying Plancherel's formula to \eqref{eig.Psi} yields a direct integral 
formula for the eigenvalues,
\[
\lambda_k(h) = \int_0^\infty \Psi(hu) Q_k(u) e^{-u}\>du,
\]
where $Q_k$ is the polynomial
\[
Q_k(u) := \sum_{n=0}^k {k \choose n} \frac{1}{n!} (-1)^{k-n} (2u)^n.
\]
This expression for $\lambda_k(h)$ might be useful for producing examples, but \eqref{eig.Psi} seems better suited for asymptotic analysis.


\subsection{Examples of single-site potentials: Gaussian functions}

Using Lemma~\ref{eig.Psi.lemma} we can construct specific examples of single-site potentials $v_0$
that satisfy the spectral gap assumption of \S\ref{minami.sec} with constant $b_0 \in (0,1)$. 
Let us start by considering a simple Gaussian 
\[
v_0 = e^{-q}, ~~~~q(y, \eta) = y^2 + \eta^2.
\]
Lemma~\ref{mehler.lemma} gives the eigenvalues of $\widehat{v_0}$ as 
\[
\lambda_k(h) = \frac{(1-h)^k}{(1+h)^{k+1}}
\]
for $k \in \bbN_0$. These eigenvalues satisfy $\lambda_k \le e^{-2kh}$. Therefore, a
cutoff of the form $\lambda_k \ge b_0>0$ implies that $kh \le \gamma_0$ where
$\gamma_0 = -\tfrac12\log b_0$. In this range we can estimate
\[
\begin{split}
\log \lambda_k(h) &= k \log(1-h) - (k+1)\log(1+h) \\
&= -(2k+1)h + O(h^2),
\end{split}
\]
with a constant that depends only on $b_0$. Thus, 
\[
\lambda_k(h) = e^{-(2k+1)h} \paren*{1 + O(h^2)},\quad\text{for }kh \le \gamma_0,
\]
which shows that the spectral gap assumption is satisfied by the eigenvalues of $\widehat{v}_0$ itself.

Recall, however, that the spectral gap assumption refers not to $v_0$ itself, but rather to the symbol correction 
defined by \eqref{p0.def}. For $v_0 = e^{-q}$ this corrected symbol satisfies
\[
p_0 = \sqbrak[\Big]{1 + (2n+1)(q-1)h} e^{-q} + O(h^2).
\]
The spectrum is easily worked explicitly out from Lemma~\ref{eig.Psi}, with the eigenvalues of $\widehat{p}_0$ 
given by
\[
\mu_k(h) = \sqbrak[\Big]{1 - (2n+1)h}  \frac{(1-h)^k}{(1+h)^{k+1}} + (2n+1)  (2k+1-h)h^2 \frac{(1-h)^{k-1}}{(1+h)^{k+2}}
+ O(h^2).
\]
Under the assumption that $kh \le \gamma_0$ we obtain the estimate
\begin{equation}\label{mukh}
\mu_k(h) = e^{-(2k+1)h} \paren[\Big]{1 - (2n+1)h + O(h^2)}.
\end{equation}
implying the following:
\begin{lemma}\label{v0.gap.lemma}
The spectral gap assumption of \S\ref{minami.sec} is satisfied for the Gaussian potential $v_0 = e^{-q}$.
\end{lemma}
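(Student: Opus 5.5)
The plan is to read off the eigenvalues of $\widehat{p}_0$ from \eqref{mukh} and verify directly that those with $\mu_k(h)\ge b_0$ are simple and separated by gaps of order $h$.

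\emph{Diagonal form and simplicity.} First I will argue that $\widehat{p}_0$ is diagonal in the basis $\set{\phi_k}$ of \eqref{phik.basis}, up to an error that is harmless. The main terms of $p_0$ are $e^{-q}$ and $\Delta e^{-q}$, and $\Delta e^{-q} = (4q-4)e^{-q}$; both are radial of the form \eqref{psi.radial}. The $h^2$ terms $\del^\alpha v_0$ with $\alpha_j$ even are not radial in general, but they are $O(h^2)$ in operator norm. Lemmas~\ref{mehler.lemma} and \ref{eig.Psi.lemma} then show that the radial part $\widehat{p}_0^{\,\rm rad}$ is diagonal, with eigenvalues $\mu_k^{\rm rad}(h)$ obeying \eqref{mukh} uniformly for $kh\le\gamma_0$. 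The $O(h^2)$ remainder of \eqref{mukh} can be made uniform for $kh\le\gamma_0$, since it comes from expanding $k\log(1-h)-(k+1)\log(1+h)$ and the Mehler coefficients $q e^{-q}$ (obtained by differentiating in $z$).

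\emph{Gaps for the radial part.} For consecutive indices,
\[
\mu_k^{\rm rad}-\mu_{k+1}^{\rm rad} = e^{-(2k+1)h}\paren[\big]{1-e^{-2h}}\paren[\big]{1-(2n+1)h} + O(h^2)\ge 2h\,b_0 - Ch^2.
\]
Here I use $e^{-(2k+1)h}\gtrsim b_0$ whenever $\mu_k\ge b_0$. The care needed is that the $O(h^2)$ errors of consecutive $k$ must not swamp the $h$-sized difference. Since each error is individually $O(h^2)$, the difference of two of them is also $O(h^2)$, so this is fine. Hence the gaps are at least $b_0 h$ for $h\le h_0$, and the sequence is strictly decreasing.

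\emph{Adding back the non-radial perturbation.} The full operator is $\widehat{p}_0 = \widehat{p}_0^{\,\rm rad} + h^2 R$ with $\norm{R}\le C$. By Weyl's inequality (min-max), each eigenvalue moves by at most $Ch^2$, which preserves simplicity and leaves gaps of at least $b_0h - 2Ch^2 \ge \kappa h$ with $\kappa = b_0/2$, for $h$ small. Eigenvalues near the threshold $b_0$ need a slight shrinking of $b_0$, which the assumption allows, since it only concerns $\set{\abs{\mu}\ge b_0}$. Negative eigenvalues do not occur above $b_0$ in absolute value, because $e^{-q}>0$ to leading order.

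\emph{Main obstacle.} The key step is the uniformity of the $O(h^2)$ errors in \eqref{mukh} over the growing range $k\le\gamma_0/h$. If the crude error bound were not uniform in $k$, I would instead compute the difference $\mu_k-\mu_{k+1}$ exactly from the explicit rational formula given before \eqref{mukh}. That formula is a product of $\frac{(1-h)^k}{(1+h)^{k+1}}$ with a factor of the form $1+O(h)+O(kh^2)$, and $kh^2\le\gamma_0 h$, so a direct factorization yields the gap.
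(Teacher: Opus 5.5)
Your route is the paper's: you diagonalize the radial part of $p_0$ with Lemmas~\ref{mehler.lemma} and \ref{eig.Psi.lemma}, read off \eqref{mukh}, and deduce gaps of size about $2he^{-(2k+1)h}$. Your Weyl-inequality treatment of the $h^2$ terms is what the paper's ``$+O(h^2)$'' does tacitly.

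The trouble is the step you yourself single out: the uniformity you assert is false. Differentiating Mehler's formula in $z$ gives
\[
\widehat{qe^{-q}}\,\phi_k=h(2k+1-h)\frac{(1-h)^{k-1}}{(1+h)^{k+2}}\,\phi_k .
\]
So the term $(2n+1)h\,qe^{-q}$ in $p_0$ contributes $e^{-(2k+1)h}(2n+1)(2k+1)h^2(1+O(h))$. When $kh\sim\gamma_0$, the quantity $(2k+1)h^2$ is of order $h$, not $h^2$. The correct uniform statement for $kh\le\gamma_0$ is
\[
\mu_k(h)=e^{-(2k+1)h}\paren[\big]{1-(2n+1)h+(2n+1)(2k+1)h^2+O(h^2)}.
\]
Thus \eqref{mukh} as displayed holds uniformly only for bounded $k$; the paper's display has the same defect, so citing it does not help. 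In your main line the neglected errors are therefore $O(h)$, the same order as the gap you want. The step ``each error is individually $O(h^2)$, so their difference is $O(h^2)$'' does not apply.

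Your fallback is the right repair, but the reason you give is not the operative one. Knowing only that the correction factor is $1+O(h)$ (via $kh^2\le\gamma_0 h$) would still allow consecutive factors to differ by $O(h)$, which swamps the gap. What works is the exact structure. With $\lambda_k:=\frac{(1-h)^k}{(1+h)^{k+1}}$, the radial eigenvalues are exactly $\lambda_kF_k$, where
\[
F_k=1-(2n+1)h+(2n+1)(2k+1-h)\frac{h^2}{1-h^2}.
\]
This is affine in $k$ with slope $2(2n+1)h^2/(1-h^2)$. Hence, for every $k$,
\[
\lambda_kF_k-\lambda_{k+1}F_{k+1}=\frac{2h}{1+h}\lambda_kF_k-\frac{2(2n+1)h^2}{1-h^2}\lambda_{k+1}\ge 2h\lambda_k(1-Ch),
\]
using $F_k\ge 1-(2n+1)h$ and $\lambda_{k+1}\le\lambda_k$. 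This gives strict monotonicity in $k$. On the range $\lambda_kF_k\ge b_0/2$, the bound $\lambda_k\le e^{-2kh}$ forces $kh$ to be bounded, so $F_k=1+O(h)$ and $\lambda_k\gtrsim b_0$, which yields gaps of order $b_0h$. Your Weyl step with $O(h^2)$ shifts then finishes the proof with $\kappa=b_0/2$. With this substitution, the rest of your argument (the threshold adjustment, the absence of large negative eigenvalues, and simplicity) goes through.
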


\subsection{Examples of single-site potentials: Compactly supported and nonsign-definite functions}

To satisfy the requirements for Theorem~\ref{thm:minami_h} we need also for $v_0$ to be compactly
supported. The following result allows us to control the error when a compactly support cutoff function is
included in the potential. 
In general, the eigenvalues given by \eqref{eig.Psi} are not ordered in $k$. However, we can prove in general
that the lower bound $\lambda_k \ge b_0$ implies that $kh \le \gamma_0$.
\begin{lemma}\label{kh.bound.lemma}
For an even function $\Psi \in \calS(\bbR)$ there exists $C>0$ depending only on $\Psi$ such that the eigenvalues given in 
Lemma~\ref{eig.Psi.lemma} satisfy
\[
\lambda_k(h) \le \frac{C}{kh}.
\]
\end{lemma}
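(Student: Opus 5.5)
The plan is to obtain the bound directly from the integral formula \eqref{eig.Psi} of Lemma~\ref{eig.Psi.lemma} by a single integration by parts in $\tau$, which gains one factor of $(kh)^{-1}$. Two preliminary remarks set this up. First, since $\Psi \in \calS(\bbR)$ is even, $\tilde{\Psi}(\tau) = \frac12 \int_{-\infty}^\infty e^{-i\tau u}\Psi(u)\,du$ is half the Fourier transform of $\Psi$, so $\tilde{\Psi} \in \calS(\bbR)$. Second, for real $\tau$ we have $\abs{1+ih\tau} = \abs{1-ih\tau}$, so the phase factor
\[
g_k(\tau) := \paren*{\frac{1+ih\tau}{1-ih\tau}}^{k}
\]
satisfies $\abs{g_k(\tau)} = 1$. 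We only need to treat $k \ge 1$; the case $k=0$ is not covered by the bound and is excluded.

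The key identity is a direct differentiation:
\[
g_k'(\tau) = k\,g_k(\tau)\paren*{\frac{ih}{1+ih\tau} + \frac{ih}{1-ih\tau}} = \frac{2ihk\,(1+ih\tau)^{k-1}}{(1-ih\tau)^{k+1}}.
\]
This lets us rewrite the integrand of \eqref{eig.Psi} as
\[
\frac{(1+ih\tau)^k}{(1-ih\tau)^{k+1}} = \frac{1+ih\tau}{2ihk}\, g_k'(\tau).
\]
Substituting this into \eqref{eig.Psi} and integrating by parts gives
\[
\lambda_k(h) = -\frac{1}{4\pi i h k}\int_{-\infty}^\infty g_k(\tau)\paren[\big]{ih\tilde{\Psi}(\tau) + (1+ih\tau)\tilde{\Psi}'(\tau)}\,d\tau .
\]
The boundary terms vanish: $\abs{g_k} = 1$ and $(1+ih\tau)\tilde{\Psi}(\tau) \to 0$ as $\abs{\tau}\to\infty$, because $\tilde{\Psi}$ is Schwartz.

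Using $\abs{g_k}=1$ and $h \le 1$, we then obtain
\[
\abs{\lambda_k(h)} \le \frac{1}{4\pi kh}\paren[\big]{\norm{\tilde\Psi}_{L^1} + \norm{\tilde\Psi'}_{L^1} + \norm{\tau\tilde\Psi'}_{L^1}} = \frac{C}{kh},
\]
with $C$ depending only on $\Psi$. This gives the stated bound on $\lambda_k(h)$, and even on its absolute value.

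I do not expect a serious obstacle. The only point needing care is spotting the exact-derivative structure of the integrand, that is, that the factor $(1+ih\tau)^k(1-ih\tau)^{-k-1}$ is, up to the slowly varying factor $(1+ih\tau)/(2ihk)$, the derivative of a unimodular function. After that, one only has to check the decay of $\tilde\Psi$ and its derivative.
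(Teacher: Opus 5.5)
Your proof is correct and is essentially the paper's argument. The paper writes the integrand as $e^{ik\phi(\tau)}\tilde\Psi(\tau)/(1-ih\tau)$ with the real phase $\phi(\tau) = -i\log\frac{1+ih\tau}{1-ih\tau}$, so your $g_k$ is exactly $e^{ik\phi}$. It then does the same single integration by parts and arrives at the identical expression $\frac{i}{4\pi kh}\int e^{ik\phi}\,\partial_\tau\bigl[(1+ih\tau)\tilde\Psi\bigr]\,d\tau$. Your explicit use of $h\le 1$, $k\ge 1$ and the decay of $\tilde\Psi$ simply spells out what the paper leaves as ``the estimate follows directly.''
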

\begin{proof}
Define the real phase function 
\[
\phi(\tau) := -i \log \paren*{\frac{1+ih\tau}{1-ih\tau}}, 
\]
so that \eqref{eig.Psi} becomes an integral of Laplace type,
\[
\lambda_k = \frac{1}{2\pi} \int_{-\infty}^\infty e^{ik\phi(\tau)} \frac{\tilde{\Psi}(\tau)}{1-ih\tau}\>d\tau.
\]
The assumptions on $\tilde{\Psi}$ justify the following integration by parts:
\[
\begin{split}
\lambda_k &= \frac{1}{2\pi} \int_{-\infty}^\infty \frac{1}{ik\phi'(\tau)} 
\sqbrak*{\frac{\partial}{\partial\tau} e^{ik\phi(\tau)} }
\frac{\tilde{\Psi}(\tau)}{1-ih\tau}\>d\tau \\
&= -\frac{1}{2\pi} \int_{-\infty}^\infty e^{ik\phi(\tau)} \frac{\partial}{\partial\tau} 
\sqbrak*{\frac{1}{ik\phi'(\tau)}\frac{\tilde{\Psi}(\tau)}{1-ih\tau}}\>d\tau \\
&= \frac{i}{4\pi kh} \int_{-\infty}^\infty e^{ik\phi(\tau)} \frac{\partial}{\partial\tau} 
\sqbrak*{(1+ih\tau) \tilde{\Psi}(\tau)}\>d\tau.
\end{split}
\]
The estimate follows directly.
\end{proof}

\begin{lemma}\label{Psi.cut.lemma}
Suppose that $\Psi$ is an even function in $\calS(\bbR)$ with $\Psi(u) = 0$ for $\abs{u} \le b_0$. 
Then for $(2k+1)h \le b_0$ the eigenvalues of $\widehat{\psi}$ satisfy
\[
\lambda_k(h) = O(h^2),
\]
with a constant that depends only on $\Psi$.
\end{lemma}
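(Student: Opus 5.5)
The plan is to use the Laguerre structure of the eigenvalues together with the vanishing of $\Psi$ near the origin. I would start from the direct integral formula stated after Lemma~\ref{eig.Psi.lemma},
\[
\lambda_k(h) = \int_0^\infty \Psi(hu)\, Q_k(u)\, e^{-u}\,du .
\]
Comparing coefficients, $Q_k(u) = (-1)^k L_k(2u)$, where $L_k$ is the Laguerre polynomial. Setting $x=2u$ and $f_k(x) := e^{-x/2}L_k(x)$ gives
\[
\lambda_k(h) = \frac{(-1)^k}{2}\int_0^\infty \Psi(hx/2)\, f_k(x)\,dx .
\]
The hypothesis $\Psi(u)=0$ for $\abs{u}\le b_0$ means the integrand is supported in $x \ge 2b_0/h$. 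The condition $(2k+1)h\le b_0$ says that $2b_0/h \ge 4k+2$. Hence the support lies in the region beyond the turning point $x=4k+2$ of the Laguerre function. This is the semiclassical picture: $\phi_k$ concentrates on the circle $q=(2k+1)h$, which lies inside the disk where $\psi$ vanishes.

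To extract powers of $h$, I would use the Laguerre differential equation. It reads $x f_k'' + f_k' + (k+\tfrac12 - \tfrac x4) f_k = 0$, which is equivalent to
\[
\paren*{\tfrac x4 - k - \tfrac12} f_k = (x f_k')' .
\]
Write $m(x) := x/4 - k - \tfrac12$, which is nonnegative on the support. Substitute $f_k = m^{-1}(x f_k')'$ and integrate by parts twice. There are no boundary terms, because $\Psi(hx/2)$ vanishes to infinite order at $x=2b_0/h$ and decays at infinity. This gives
\[
\int \Psi(hx/2) f_k\,dx = \int \paren*{x\,\paren*{\Psi(hx/2)/m}'}' f_k\,dx .
\]
When the derivatives hit $\Psi(hx/2)$, each produces a factor $h$, with an accompanying factor $x$ that is balanced against $1/m$. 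When they hit $1/m$, the result is $O(x^{-1})$ relative to the original term, and since $x \gtrsim h^{-1}$ on the support this is again a gain of $h$. One application should therefore reduce the integrand's size by roughly a factor $h$ relative to $\Psi(hx/2)$. A second application gives $h^2$.

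Finally, I would bound the result using the classical uniform bound $\abs{f_k(x)}\le 1$ for $x\ge0$. The crude estimate $\int \abs{\Psi(hx/2)}\,dx = O(h^{-1})$ costs one power of $h$. The two iterations therefore give at least $O(h)$, and three iterations give $O(h^2)$, with constants depending only on finitely many Schwartz seminorms of $\Psi$.

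The main obstacle is the behavior near the lower endpoint $x=2b_0/h$ in the borderline case $(2k+1)h = b_0$, where $m$ vanishes. The quotient $\Psi(hx/2)/m(x)$ is smooth there only because $\Psi$ vanishes to infinite order at $b_0$. I need its derivative bounds to remain uniform in $h$ and $k$: for example, $\Psi(hx/2)/m = (4/h)\,\Psi(s)/(s - s_0)$ with $s = hx/2$ and $s_0 = (2k+1)h/2 \le b_0/2$. My first attempt would be to write $\Psi(s)/(s-s_0)$ via Taylor's formula around $b_0$, using $\Psi^{(j)}(b_0)=0$, to bound this factor uniformly. The prefactor $4/h$ costs one power of $h$, which is why I would iterate once more than the naive count. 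If that bookkeeping fails, I would fall back on Lemma~\ref{kh.bound.lemma}, which bounds $\lambda_k$ by $C/(kh)$. Combined with exponential decay estimates for $f_k$ in the forbidden region $x\ge(1+\epsilon)(4k+2)$, this would reduce the problem to the Airy-type transition region, where the known bound $\abs{f_k}\lesssim k^{-1/3}$ should suffice.
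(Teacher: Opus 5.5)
Your argument is correct in substance, but it takes a genuinely different route from the paper.

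The paper works with the Fourier/Mehler representation \eqref{eig.Psi}. For $kh$ bounded and $|h\tau|\le 1$, it compares the multiplier $(1+ih\tau)^k/(1-ih\tau)^{k+1}$ with $e^{i(2k+1)h\tau}$ up to an error $O(h^2\langle\tau\rangle^3)$. It discards $|\tau|>1/h$ using the rapid decay of $\tilde{\Psi}$. Fourier inversion then gives the Bohr--Sommerfeld-type asymptotic $\lambda_k(h)=F((2k+1)h)+O(h^2)$, where $F$ is essentially $\Psi$. The support hypothesis enters only at the last step, to kill the main term. This buys a uniform asymptotic for every even Schwartz $\Psi$, but the $O(h^2)$ error is built into the method.

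You instead use the exact Laguerre representation and the identity $(xf_k')'=mf_k$. This lets you integrate by parts on the classically forbidden side of the turning point $x=4k+2$, needing only $|f_k|\le 1$. The argument is specific to $\Psi$ vanishing near the origin. In exchange it gives much more than claimed: $\lambda_k(h)=O(h^\infty)$, uniformly for $(2k+1)h\le b_0$.

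Your scaling bookkeeping is wrong, though in the harmless direction. With $s=hx/2$ and $s_1:=(2k+1)h\in(0,b_0]$ one has $m=(s-s_1)/(2h)$. Hence $\Psi(hx/2)/m = 2h\,\Psi(s)/(s-s_1)$, which is a gain of $h$, not a loss of $4/h$. The singular point is $s_1$, and it reaches the endpoint $b_0$ of the support exactly in the borderline case. Your $s_0=(2k+1)h/2\le b_0/2$ would never meet the support, which contradicts your own (correct) description of the obstacle.

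Exactly, for $g(x)=G(hx/2)$ one has
\[
\big(x\,(g/m)'\big)'(x) = h^2\,(\mathcal{D}G)(hx/2), \qquad \mathcal{D}G(s):=\partial_s\Big(s\,\partial_s\frac{G(s)}{s-s_1}\Big).
\]
So each iteration gains $h^2$, not $h$, and
\[
|\lambda_k(h)| \le h^{2N-1}\,\|\mathcal{D}^N\Psi\|_{L^1(b_0,\infty)} .
\]

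Uniformity in $s_1$ works as you suspected. On $s\ge b_0$ we have $s-s_1\ge s-b_0$. Each term of $\partial_s^j\big(G/(s-s_1)\big)$ is therefore bounded by a constant times $|G^{(l)}(s)|/(s-b_0)^i$. This is controlled by the infinite-order vanishing of $G$ at $b_0$ near the endpoint, and by Schwartz decay when $s-b_0\ge 1$ (the factor $s$ in $\mathcal{D}$ is absorbed the same way). Consequently $\mathcal{D}$ preserves this class with seminorm bounds independent of $s_1\in(0,b_0]$.

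Already $N=2$ gives $O(h^3)$. The fallback via Lemma~\ref{kh.bound.lemma} and Airy asymptotics is unnecessary.
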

\begin{proof}
We start by expanding the log of the integration factor in \eqref{eig.Psi},
\begin{equation}\label{log.iht}
\begin{split}
\log\paren*{\frac{(1+ih\tau)^k}{(1-ih\tau)^{k+1}}} &= k\log(1+ih\tau) - (k+1)\log(1-ih\tau) \\
&= (2k+1)ih\tau + O(k\abs{h\tau}^3) + O(\abs{h\tau}^2),
\end{split}
\end{equation}
For $kh \le \gamma_0$ and $\abs{h\tau} \le 1$ this can be exponentiated to give an estimate 
\[
\frac{(1+ih\tau)^k}{(1-ih\tau)^{k+1}} = e^{i(2k+1)h\tau} + O\paren*{h^2\brak{\tau}^3}.
\]
Applying this to \eqref{eig.Psi} gives
\[
\lambda_k(h) = \frac{1}{2\pi} \int_{\abs{\tau} \le 1/h} \sqbrak*{e^{i(2k+1)h\tau} + O\paren*{h^2\brak{\tau}^3}} \tilde{\Psi}(\tau)\>d\tau
+ \int_{\abs{\tau} > 1/h} O(\brak{\tau}^{-1}) \tilde{\Psi}(\tau)\>d\tau.
\]
Because $\tilde{\Psi}$ is rapidly decreasing, the final integral is $O(h^\infty)$ and 
the second term in the first integral is $O(h^2)$.
The first integral can be expanded into an inverse Fourier transform
\[
\begin{split}
 \frac{1}{2\pi} \int_{\abs{\tau} \le 1/h} e^{i(2k+1)h\tau} \tilde{\Psi}(\tau)\>d\tau
&=  \frac{1}{2\pi} \int_{-\infty}^\infty e^{i(2k+1)h\tau} \tilde{\Psi}(\tau)\>d\tau + O(h^\infty) \\
&= F((2k+1)h) + O(h^\infty).
\end{split}
\]
Putting all of the error estimates together gives
\[
\lambda_k(h) = F((2k+1)h) + O(h^2),
\]
and the result follows from the support assumption on $F$.
\end{proof}

Since the spectral gap assumption requires spacing of order $h$, Lemma~\ref{Psi.cut.lemma} gives us
the following:
\begin{corollary}\label{psi.cut.lemma}
Let $\chi \in \cinf[0,\infty)$ be a cutoff function with $\chi(u) =1$ for $u \le b_0$ and $\chi(u)$ = 0 for $u \ge 1$.
If the single-site potential $v_0 = \Psi(q)$ satisfies the spectral gap assumption of \S\ref{minami.sec}, then so does
$\chi(q)\Psi(q)$.
\end{corollary}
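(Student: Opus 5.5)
The idea is to compare the eigenvalues of $\widehat{\chi(q)\Psi(q)}$ with those of $\widehat{\Psi(q)}$ through their difference, which is itself a radial symbol. Set $\Phi(u) := (1-\chi(u))\Psi(u)$, so that $\chi(q)\Psi(q) = \Psi(q) - \Phi(q)$. After extending $\chi$ evenly to $\bbR$, $\Phi$ is an even Schwartz function with $\Phi(u)=0$ for $\abs{u}\le b_0$.

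By Lemma~\ref{eig.Psi.lemma}, all three quantizations $\widehat{\Psi(q)}$, $\widehat{\Phi(q)}$ and $\widehat{\chi(q)\Psi(q)}$ are diagonal in the same basis $\{\phi_k\}$ of \eqref{phik.basis}. Their eigenvalues are therefore related index by index:
\[
\lambda_k^{\chi\Psi}(h) = \lambda_k^{\Psi}(h) - \lambda_k^{\Phi}(h).
\]
The same holds after passing to the corrected symbols $p_0$ of \eqref{p0.def}. The corrections $h\Delta$ and $h^2\del^\alpha$ applied to a radial symbol produce radial symbols again, and they commute with multiplication by $1-\chi(q)$ on the region where $\chi\equiv 1$. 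The corrected symbols thus still split as ``corrected $\Psi$'' minus a radial symbol vanishing for $q\le b_0$, up to $O(h^2)$ terms.

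Next we split the indices $k$ into two ranges.

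\emph{Range $(2k+1)h\le b_0$.} Lemma~\ref{Psi.cut.lemma} applied to $\Phi$ gives $\lambda_k^{\Phi}=O(h^2)$. So on this range the eigenvalues of the cut-off operator differ from those of the original by $O(h^2)$. That is much smaller than the gap $\kappa h$, so for $h\le h_0$ simplicity and gaps of size at least $\kappa h/2$ survive among these $k$.

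\emph{Range $(2k+1)h> b_0$.} Here I want to show that $\lambda_k^{\chi\Psi}$ falls below the threshold, so these $k$ do not enter the gap assumption at all. I would use the symbol-level identity $\lambda_k^{\chi\Psi}=F_{\chi\Psi}((2k+1)h)+O(h^2)$ from the proof of Lemma~\ref{Psi.cut.lemma}. Its derivation only used $kh\le\gamma_0$; the bound $kh\le\gamma_0$ is available by Lemma~\ref{kh.bound.lemma} whenever $\abs{\lambda_k}\ge b_0$. The main term is the symbol value $(\chi\Psi)((2k+1)h)$, which is at most $\sup_{u\ge b_0}\abs{\chi\Psi(u)}$. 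This needs to lie strictly below the threshold, which is where I expect trouble: one must match the cutoff level $b_0$ in $\chi$ with the spectral threshold $b_0$ in the gap assumption.

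The main obstacle is exactly this: the same $b_0$ is used both as the point where $\chi$ stops being $1$ and as the eigenvalue threshold. Eigenvalues of the cut-off operator with $(2k+1)h>b_0$ could still be at least $b_0$ and collide with retained ones. My first attempt is to use monotonicity $0\le\chi\le1$ together with the symbol approximation to show that any such eigenvalue is within $O(h^2)$ of a value $\Psi((2k+1)h)\chi((2k+1)h)$. I would then absorb the possible collisions by shrinking $\kappa$ or enlarging the threshold slightly (say to $b_0'>b_0$) and adjusting $h_0$. If that fails, I would restate the corollary with the cutoff made equal to $1$ on a region strictly larger than the one where the $\Psi$-eigenvalues are at least $b_0$.
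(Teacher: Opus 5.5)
Your Range~1 step is the paper's entire proof. The paper only remarks that, by Lemma~\ref{Psi.cut.lemma}, the cutoff moves the eigenvalues by $O(h^2)$, which is negligible against spacings $\kappa h$. Your added details make that remark precise: the common eigenbasis $\{\phi_k\}$ lets you compare eigenvalues index by index, and linearity of $v_0\mapsto p_0$ gives $p_0^{\chi\Psi}=p_0^{\Psi}-p_0^{\Phi}$ exactly, with $p_0^{\Phi}$ radial and vanishing on $q\le b_0$.

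The obstacle you flag for $(2k+1)h>b_0$ is genuine, and the paper does not address it; it uses $b_0$ both as the cutoff level in $q$ and as the eigenvalue threshold. It cannot be absorbed by shrinking $\kappa$ or nudging the threshold, because with the same $b_0$ in both roles the statement is false. It allows any smooth $\chi$. Take $\Psi(u)=e^{-u}$ (admissible by Lemma~\ref{v0.gap.lemma}), $b_0=\tfrac1{10}$, and $\chi$ with $\chi(0.4)=0.3$ and $\chi(0.7)=0.8$. Then $f=\chi\Psi$ has an interior maximum $u_c\in(0.4,1)$ with $f(u_c)>0.39$. Write $u_k=(2k+1)h$ and let $g$ be the smooth radial $\Delta$-correction. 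The Bohr--Sommerfeld form $\lambda_k=f(u_k)+h\,g(u_k)+O(h^2)$ then shows that consecutive eigenvalues with $u_k$ near $u_c$ differ by $O(h^2)$, so no fixed $\kappa$ works. Note also that $0\le\chi\le1$ is boundedness, not monotonicity, and neither is assumed.

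A threshold problem appears already in your Range~1. A cut-off eigenvalue $\ge b_0$ lies within $O(h^2)$ of a $\Psi$-eigenvalue that is only known to be $\ge b_0-Ch^2$, and the gap assumption says nothing below $b_0$.

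So your last fallback, restating the corollary, is the right move, and with it your argument closes. Assume $\abs{\chi}\le1$ and $\chi\equiv1$ on $[0,u_*]$, where $\sup_{u\ge u_*}\abs{\Psi(u)}<b_0'$ for some $b_0'>b_0$. Then prove the gap assumption at threshold $b_0'$ with constant $\kappa/2$:
\begin{enumerate}
\item If $\abs{\lambda_k^{\chi\Psi}}\ge b_0'$, Lemma~\ref{kh.bound.lemma} bounds $kh$.
\item Bohr--Sommerfeld then forces $u_k<u_*$ for small $h$.
\item Lemma~\ref{Psi.cut.lemma}, with $u_*$ in place of $b_0$, gives $\abs{\lambda_k^{\chi\Psi}-\lambda_k^{\Psi}}\le Ch^2$.
\item Hence $\abs{\lambda_k^{\Psi}}\ge b_0'-Ch^2>b_0$, so two such indices are separated by at least $\kappa h-2Ch^2\ge\kappa h/2$.
\end{enumerate}
For the paper's intended example, the cut-off Gaussian with $\chi$ nonincreasing and $0\le\chi\le1$, you can instead argue directly. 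There $\abs{(\chi e^{-u})'}\ge\chi e^{-u}$ and $f$ is strictly decreasing where positive, so spacings above threshold are at least $2b_0h-O(h^2)$.
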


In particular, Lemma~\ref{v0.gap.lemma} and Corollary~\ref{psi.cut.lemma} together show that the spectral gap assumption 
is satisfied by the cutoff Gaussian potential, 
\[
v_0 = \chi(q) e^{-q}.
\] 
Lemma~\ref{Psi.cut.lemma} also demonstrates the existence of examples of single-site potentials $v_0$ with compact support and that are not sign-definite.
Given a positive radial potential satisfying the spectral gap assumption, modifying the potential to include a sign change outside a neighborhood of the origin will not break the gap assumption.


\end{document}